\documentclass[aip,%onecolumn,
    secnumarabic,
    amssymb, amsmath, longbibliography,
    jmp, floatfix, preprint,tightenlines,a4paper]{revtex4-1}
\pdfoutput=1
\usepackage[utf8]{inputenc}
\usepackage[english]{babel}
\usepackage{amsmath}
\usepackage{amsfonts}
\usepackage{amsthm}
\usepackage{graphicx}
\usepackage{physics}
\usepackage{mathtools}
\usepackage{xspace}
\usepackage{tikz-cd}
\usepackage{breakurl}
\usepackage{hyperref}

\usepackage[toc,page]{appendix}

\usepackage{enumitem}

\setcitestyle{numbers,open={[},close={]}}

\DeclarePairedDelimiter{\ceil}{\lceil}{\rceil}
\DeclarePairedDelimiter{\floor}{\lfloor}{\rfloor}
\DeclarePairedDelimiter{\inn}{\langle}{\rangle}

\theoremstyle{definition}
\newcounter{counter}
%\numberwithin{counter}{section}
%\newtheorem{theorem}[counter]{Theorem}
\newtheorem{theorem}{Theorem}
\newtheorem{definition}{Definition}
\newtheorem{proposition}[counter]{Proposition}
\newtheorem{lemma}[counter]{Lemma}
\newtheorem{corollary}[counter]{Corollary}

\newtheorem{remark}[counter]{Remark}
\newtheorem*{theorem*}{Theorem}
\newtheorem*{lemma*}{Lemma}
\newtheorem*{definition*}{Definition}

\newcommand{\rnk}{\text{rnk}\xspace}
\newcommand{\R}{\mathbb{R}}

\newcommand{\N}{\mathbb{N}}

\newcommand{\id}{\text{id}}

\newcommand{\sa}{\text{sa}}
\usepackage{chngcntr}
\counterwithout{counter}{section}

\newcommand{\mult}{\,\&\,}
\newcommand{\commu}{\,\lvert\,}

\begin{document}
\author{John van de Wetering}
\email{john@vdwetering.name}
\affiliation{Institute for Computing and Information Sciences, Radboud University, Toernooiveld 212 Nijmegen, Netherlands}

\title{Sequential Product Spaces are Jordan Algebras}
\date{February 22, 2019}
\revised{April 24, 2019}

\begin{abstract}
\noindent We show that finite-dimensional order unit spaces equipped with a continuous sequential product as defined by Gudder and Greechie are homogeneous and self-dual. As a consequence of the Koecher-Vinberg theorem these spaces therefore correspond to Euclidean Jordan algebras. We remark on the significance of this result in the context of reconstructions of quantum theory. 
In particular, we show that sequential product spaces must be C$^*$-algebras when their vector space tensor product is also a sequential product space (in the parlance of operational theories, when the space `allows a local composite'). 
We also show that sequential product spaces in infinite dimension correspond to JB-algebras when a few additional conditions are satisfied. Finally, we remark on how changing the axioms of the sequential product might lead to a new characterisation of homogeneous cones.
\end{abstract}

\maketitle

\section{Introduction}

The set of observables of a quantum system can be represented as the space of self-adjoint operators on a complex Hilbert space $B(H)^{\text{sa}}$. This space has a variety of algebra-like structures that can be associated to it, the most well-known of which is the \emph{Jordan product} $a*b := \frac12(ab+ba)$. In the 30's Jordan, von Neumann and Wigner hoped to find generalisations of the quantum mechanical formalism by considering general spaces equipped with an axiomatisation of this algebraic structure. They however found that the resulting \emph{Euclidean Jordan algebras} (EJAs) have a strikingly simple classification \cite{jordan1993algebraic}, and hence that this algebraic approach does not allow you to go far beyond quantum theory. The significance of EJAs was further established by the Koecher-Vinberg theorem that states that any finite-dimensional homogeneous and self-dual ordered vector space is a Euclidean Jordan algebra \cite{koecher1957positivitatsbereiche}. These two results, the Koecher-Vinberg theorem and the classification by Jordan, von Neumann and Wigner, lie at the heart of many \emph{reconstructions of quantum theory} where intuitively sensible axioms from which quantum theory can be derived are sought \cite{hardy2011reformulating,wetering2018reconstruction,selby2018reconstructing,barnum2014higher,gunson1967algebraic} (although it should be noted that these theorems are not used directly in all approaches \cite{tull2016reconstruction,hohn2017toolbox,chiribella2011informational}).

Other algebraic structures on $B(H)^{sa}$ studied in an axiomatic way are the \emph{quadratic~Jordan~algebras} that axiomatize the map $(a,b)\mapsto aba$ or the more general \emph{triple~product} $(a,b,c)\mapsto \frac{1}{2}(abc+cba)$. The definitions of the Jordan product and the triple product don't have a particularly compelling physical motivation: the product does not correspond to any type of physical process. In this paper we will look at a different structure that \emph{does} follow naturally from physical processes. 
Given two positive operators $a$ and $b$ we define their \emph{sequential product} as $a\mult b := \sqrt{a}b\sqrt{a}$. When $a$ and $b$ represent \emph{effects}, i.e.\ possible outcomes in a measurement, then the sequential product models the act of first getting the outcome $a$ and then the outcome $b$, hence the name \emph{sequential} product (although this composition is also known as a \emph{L\"uders process}). It is important to note that this product is only defined for positive operators (since otherwise the square root wouldn't be defined), and that this operation is not bilinear, associative or commutative. Gudder and Greechie introduced the concept of a \emph{sequential~effect~algebra} to study the sequential product in a more abstract setting \cite{gudder2002sequential}. While they studied the structure of the sequential product on the very general structure of \emph{effect~algebras}, we will restrict ourselves to the more concrete setting of \emph{order~unit~spaces}:

\begin{definition}
    An \emph{order unit space} $(V, \leq, 1)$ is an ordered real vector space with the additional property that $1$ is a \emph{strong Archimedean unit}:
    \begin{enumerate}
        \item Strong unit: For all $a\in V$ we can find $n\in \N$ such that $-n1\leq a \leq n$.
        \item Archimedean: For $a\in V$ when $a\leq \frac1n 1$ for all $n\in\N_{>0}$, then $a\leq 0$.
    \end{enumerate}
    We call the elements $a\in V$ with $0\leq a \leq 1$ the \emph{effects} of $V$ which we will denote by $[0,1]_V$. The \emph{states} of $V$ are positive linear maps $\omega:V\rightarrow \R$ such that $\omega(1) = 1$.
\end{definition}

Ordered vector spaces with a strong unit represent the most general kinds of systems allowed in causal \emph{generalised probabilistic theories} \cite{barrett2007information} and hence they form a suitable background to studying models of general physical theories. The Archimedity condition states intuitively that there are no effects that cannot be distinguished by a state. More precisely, order unit spaces are precisely the ordered vector spaces where the states order-separate the elements: if $\omega(v)\leq \omega(w)$ for all states $\omega$ then $v\leq w$.

Note that an order unit space has a norm induced by the order in the following way: ${\norm{a} := \inf\{r\in \R_{\geq 0}~;~ -r1\leq a\leq r1\}}$. Whenever we refer to continuity in the context of order unit spaces it should be understood to refer to this norm.

The object of study in this paper is an order unit space with an operation modelled after the sequential product on $B(H)^{\text{sa}}$. To be specific:

\begin{definition}\label{def:seqprod}
    Let $(V, \leq, 1, \&)$ be an order unit space equipped with a binary operation \\ ${\&:[0,1]_V\times [0,1]_V \rightarrow [0,1]_V}$. We write $a\commu b$ and say $a$ and $b$ are \emph{compatible} when $a\mult b = b\mult a$. We call $V$ a \emph{sequential product space} and $\&$ a \emph{sequential product} when $\&$ satisfies the following properties for all $a,b,c \in [0,1]_V$:
    \begin{enumerate}[label=({S}\theenumi), ref=S\theenumi]
        \item \label{ax:add} Additivity: $a\mult (b+c) = a\mult b+ a\mult c$.
        \item \label{ax:cont} Continuity: The map $a\mapsto a\mult b$ is continuous in the norm.
        \item \label{ax:unit} Unitality: $1\mult a = a$.
        \item \label{ax:orth} Compatibility of orthogonal effects: If $a\mult b = 0$ then also $b\mult a =0$.
        \item \label{ax:assoc} Associativity of compatible effects: If $a\commu b$ then $a\mult (b\mult c) = (a\mult b)\mult c$.
        \item \label{ax:compadd} Additivity of compatible effects: If $a\commu b$ then $a \commu 1-b$, and if also $a\commu c$ then $a\commu (b+c)$.
        \item \label{ax:compmult} Multiplicativity of compatible effects: If $a\commu b$ and $a\commu c$ then $a\commu (b\mult c)$.
    \end{enumerate}
\end{definition}
The properties we require of $\&$ are the same as that of a sequential product in a sequential effect algebra \cite{gudder2002sequential} except for condition \ref{ax:cont} which is new. It should be noted that the standard sequential product $a\mult b = \sqrt{a}b\sqrt{a}$ on $B(H)^{\text{sa}}$ is not fully characterised by these axioms, as there are multiple binary operations that satisfy these axioms \cite{weihua2009uniqueness}. It is possible however to characterise the standard sequential product using related sets of axioms \cite{gudder2008characterization,westerbaan2016universal,wetering2018characterisation}.
The main examples of sequential product spaces are Euclidean Jordan algebras:
\begin{definition}
    We call a real vector space $V$ a \emph{Jordan algebra} when it has a bilinear commutative operation $*$ that satisfies the \emph{Jordan identity}: $a*(b*(a*a)) = (a*b)*(a*a)$. We call $V$ a \emph{Euclidean} Jordan algebra (EJA) when it is furthermore a finite-dimensional Hilbert space with inner product~$\inn{\cdot, \cdot}$ such that $\inn{a*b,c} = \inn{b,a*c}$.
\end{definition}
Euclidean Jordan algebras were originally introduced as generalisations of the quantum mechanical formalism under the name \emph{formally real} Jordan algebras~\cite{jordan1933}. Due to their connection to self-dual homogeneous cones, EJAs have been studied in many different contexts as well. The main examples of Euclidean Jordan algebras are the sets of self-adjoint operators $B(H)^{\text{sa}}$ where $H$ is a real, complex or quaternionic Hilbert space where the Jordan product is given by $A*B:=\frac12(AB+BA)$. In fact, as shown by the classification theorem of EJAs~\cite{jordan1993algebraic}, any EJA can be constructed as a direct sum of these algebras, a family of algebras called \emph{spin-factors}, and the \emph{exceptional} algebra.

It is not obvious how EJAs form sequential product spaces. Using the \emph{quadratic} map, which is in turn defined in terms of the Jordan product, a sequential product can be constructed~\cite{wetering2018characterisation}. The main purpose of this paper is to establish that EJAs are in fact the most general type of (finite-dimensional) sequential product space:
\begin{theorem*}
    Let $V$ be a finite-dimensional sequential product space. Then $V$ is order-isomorphic to a Euclidean Jordan algebra.
\end{theorem*}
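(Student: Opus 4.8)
The plan is to invoke the Koecher-Vinberg theorem, so the entire proof reduces to showing that a finite-dimensional sequential product space $V$, with the cone of positive elements, is \emph{homogeneous} (its automorphism group acts transitively on the interior of the positive cone) and \emph{self-dual} (there is an inner product making the cone equal to its own dual). The sequential product $\&$ is the only extra structure available, so both properties must be extracted from the seven axioms. Since $\&$ is only defined on effects $[0,1]_V$, the first order of business is to understand the maps $a \mapsto a\mult b$ more thoroughly. By additivity (\ref{ax:add}) the map $b \mapsto a\mult b$ is additive on effects, and combined with continuity (\ref{ax:cont}) one expects it to extend to a positive linear map $\phi_a : V \to V$ on the whole space; establishing that this extension exists, is well-defined, and is positive and linear is the essential technical groundwork.

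\medskip

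\noindent For \textbf{self-duality} I would try to build a suitable inner product out of these maps and the states. A natural candidate is to fix a faithful state and define $\langle a, b\rangle$ in terms of applying the sequential product and then a state, i.e.\ something like $\langle a, b \rangle = \omega(a \mult b)$ for a carefully chosen reference state $\omega$, after first verifying such a form is symmetric (this is where compatibility and the orthogonality axiom \ref{ax:orth} should pay off) and positive-definite. The symmetry is the delicate point: $a\mult b$ and $b\mult a$ need not be equal, so one must show that after composing with the right state the asymmetry washes out, presumably using \ref{ax:orth} together with a spectral-type decomposition of effects into compatible pieces. Self-duality of the cone would then follow from showing the positive cone equals the set of elements with nonnegative inner product against all positive elements, which typically amounts to showing every positive element has a ``square root'' or can be written as $a\mult 1 = a$ applied suitably, and that the $\phi_a$ are self-adjoint with respect to this inner product.

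\medskip

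\noindent For \textbf{homogeneity} the idea is that the maps $\phi_a = (a\mult{-})$ for invertible effects $a$, or more precisely their positive scalar multiples, should generate enough order-automorphisms to move any interior point of the cone to any other. The key is that for an effect $a$ that is an interior point (strictly positive), the map $a\mult{-}$ should be an order-isomorphism of the cone with inverse given by the sequential product with an inverse effect, using associativity and multiplicativity of compatible effects (\ref{ax:assoc}, \ref{ax:compmult}) to check the composition laws $\phi_a \phi_{a^{-1}} = \id$ on the compatible subalgebra generated by $a$. Transitivity then reduces to showing that for any strictly positive $u$ there is such a $\phi$ sending the unit $1$ to $u$, which should follow by scaling $u$ into $[0,1]_V$ and using $\phi_a(1) = a\mult 1 = a$ from unitality \ref{ax:unit} together with additivity to linearly extend.

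\medskip

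\noindent I expect the \textbf{main obstacle} to be establishing that the sequential product induces genuine linear order-automorphisms with the right invertibility and self-adjointness properties, since $\&$ is a priori only a nonlinear, non-associative, non-commutative operation on a bounded set of effects. In particular, getting a well-behaved notion of ``inverse'' and ``square root'' purely from the compatibility axioms, and proving the symmetry of the candidate inner product, are the crux; this likely requires first developing a theory of \emph{sharp} effects (idempotents satisfying $p\mult p = p$) and a functional-calculus/spectral decomposition for effects in terms of compatible sharp effects, built up from the finite-dimensionality and the axioms, before the homogeneity and self-duality arguments can be run.
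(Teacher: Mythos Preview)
Your overall architecture---reduce to Koecher--Vinberg, hence prove homogeneity and self-duality---is exactly what the paper does, and your homogeneity argument is essentially on target: the paper extends $b\mapsto a\mult b$ to a linear map $L_a$, builds a spectral decomposition via the commutative ``classical algebra'' $C(a)$ (using Kadison's representation theorem), uses this to define $a^{-1}$ for strictly positive $a$, and then $L_bL_{a^{-1}}$ gives the transitive family of order-isomorphisms. Your anticipation that sharp effects and a spectral/functional calculus are the prerequisite is correct.

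The gap is in your self-duality plan. The candidate $\langle a,b\rangle=\omega(a\mult b)$ for a single ``carefully chosen'' state $\omega$ is not something the axioms hand you: even in $B(H)^{\mathrm{sa}}$ this is symmetric only for the tracial state, and nothing in \ref{ax:add}--\ref{ax:compmult} singles out a trace. The paper does \emph{not} try to find one global state. Instead it develops a substantial amount of lattice theory for sharp effects (existence of $\ceil{\cdot}$, the covering property, well-definedness of rank), shows that for any two atoms $p,q$ the order ideal $V_{p,q}$ generated by $p\vee q$ is a rank-$2$ sequential product space with a strictly convex homogeneous cone, and then invokes an \emph{external} classification (Ito--Louren\c{c}o) to identify $V_{p,q}$ with a spin factor. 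Symmetry of transition probabilities $\omega_p(q)=\omega_q(p)$ is then imported from the known structure of spin factors, and the self-dualising inner product is built atom-by-atom as $\langle p,q\rangle:=\omega_p(q)$, extended by linearity via the spectral decomposition. In short: the symmetry you flag as ``the delicate point'' is not obtained from \ref{ax:orth} and compatibility alone, but by reducing to two-dimensional pieces where it is already known. Your sketch is missing this reduction-to-rank-$2$ mechanism (and the covering property needed to make ``rank $2$'' meaningful), and without it there is no evident way to get symmetry of any candidate inner product.
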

\noindent This result is given as Theorem~\ref{theor:seqprodkoechervinberg} in Section~\ref{sec:subselfdual}. Since EJAs are very well understood and in particular classified we can use this theorem to prove additional results. In particular, using a property called \emph{local tomography} we can infer when sequential product spaces are C$^*$-algebras.
\begin{definition}\label{def:localcomposite}
    Let $V$ and $W$ be finite-dimensional sequential product spaces. We say that they have a \emph{locally tomographic composite} when their vector space tensor product $V\otimes W$ is also a sequential product space with $(a_1\otimes b_1)\mult (a_2\otimes b_2) = (a_1\mult a_2)\otimes (b_1\mult b_2)$ for all effects $a_i$ in $V$ and $b_i$ in $W$.
\end{definition}
\noindent In the context of generalised probabilistic theories, the property of local tomography states that local measurements on each of the subsystems is enough to fully characterise bipartite states. It is a property that holds for regular quantum theory, but fails for, for instance, quantum theory over the real numbers.
\begin{theorem*}
    Let $V$ be a finite-dimensional sequential product space that has a locally tomographic composite with itself, then there exists a C$^*$-algebra $A$ such that $V$ is isomorphic to  $A^{\text{sa}}$ as a Jordan algebra.
\end{theorem*}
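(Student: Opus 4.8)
The plan is to apply the first theorem twice and then exploit the classification of Euclidean Jordan algebras. Applied to $V$ itself, the theorem shows $V$ is order-isomorphic, and hence Jordan-isomorphic (the Jordan product being fixed by the order unit together with $\&$), to an EJA. The locally tomographic hypothesis says precisely that $V\otimes V$ with $(a_1\otimes b_1)\mult(a_2\otimes b_2)=(a_1\mult a_2)\otimes(b_1\mult b_2)$ is again a sequential product space, so by the theorem $V\otimes V$ is an EJA as well. Decomposing $V=\bigoplus_k V_k$ into simple EJAs, each $1_{V_k}\otimes 1_{V_l}$ is a central idempotent of $V\otimes V$, so every block $V_k\otimes V_l$ is a Jordan ideal, hence itself an EJA carrying the componentwise sequential product. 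It therefore suffices to prove the simple case: if $W$ is a simple EJA whose self-composite $W\otimes W$ is an EJA, then $W\cong M_n(\C)^{\sa}$. The full statement then follows by reassembling $V\cong\bigoplus_k M_{n_k}(\C)^{\sa}=A^{\sa}$ for the finite-dimensional \cstar-algebra $A=\bigoplus_k M_{n_k}(\C)$; only the diagonal blocks $V_k\otimes V_k$ are needed.

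For the simple case I would track the rank $r$ and the real dimension $n$ of $W$. If $\{p_1,\dots,p_r\}$ is a Jordan frame of $W$, then each $p_i\otimes p_j$ is idempotent, since $(p_i\otimes p_j)\mult(p_i\otimes p_j)=(p_i\mult p_i)\otimes(p_j\mult p_j)=p_i\otimes p_j$, and these $r^2$ idempotents are mutually orthogonal and sum to $1\otimes 1$. Granting that they are again atoms, $W\otimes W$ has rank $r^2$ and dimension $n^2$. Feeding this into the EJA dimension formula $n=r+\tfrac{d}{2}r(r-1)$, valid for a simple factor with division parameter $d\in\{1,2,4,8\}$ (the rank-$2$ spin factors realising arbitrary $d=n-2$), and assuming $W\otimes W$ is simple with parameter $d'$, a comparison of dimensions after dividing by $r^2$ yields
\[
\Big(1+\tfrac{d}{2}(r-1)\Big)^2=1+\tfrac{d'}{2}(r^2-1),
\]
which rearranges to $d'=\dfrac{d\big(4+d(r-1)\big)}{2(r+1)}$.

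The arithmetic of this expression is the decisive step. For $r=1$ we have $W=\R=M_1(\C)^{\sa}$ already, so assume $r\geq 2$; then $W\otimes W$ has rank $r^2\geq 4$, excluding both the spin factors and, since $r^2$ is never $3$, the exceptional algebra, so admissibility of a simple self-composite forces $d'\in\{1,2,4\}$. A direct check shows this happens exactly for $d=2$: taking $d=1$ gives $d'=\tfrac{r+3}{2(r+1)}<1$, taking $d=4$ gives $d'=\tfrac{8r}{r+1}\in(4,8)$, and the exceptional input $d=8,r=3$ gives $d'=20$, while $d=2$ gives $d'=2$. The rank-$2$ spin factors are swept up by the same formula at $r=2$, where $d'=\tfrac{d(d+4)}{6}$ lies in $\{1,2,4\}$ only for $d=2$, namely the qubit $H_2(\C)=M_2(\C)^{\sa}$. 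Hence $d=2$ throughout and $W\cong M_n(\C)^{\sa}$, completing the reduction.

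The main obstacle is not this bookkeeping but the two structural facts underpinning it: that $W\otimes W$ is again \emph{simple}, and that the idempotents $p_i\otimes p_j$ are genuine \emph{atoms}, so that the rank equals exactly $r^2$. Both require controlling the Jordan frame and Peirce decomposition of the EJA delivered by the first theorem using only the componentwise product, which pins down squares, $(a\otimes b)^2=a^2\otimes b^2$, but not a priori the Jordan product of sums of simple tensors. Should simplicity fail, the fallback is to work directly with the total off-diagonal Peirce dimension $\sum_{i<j}d_{ij}=n^2-r^2$ relative to the frame $\{p_i\otimes p_j\}$ and to exclude the real, quaternionic and octonionic types summand by summand; this is more laborious but introduces no new idea.
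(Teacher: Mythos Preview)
Your overall strategy and arithmetic match the paper's: reduce to simple summands, show that each simple summand $W$ of $V$ forces the existence of a simple EJA of rank $(\rnk W)^2$ and dimension $(\dim W)^2$, and then eliminate everything but complex matrix algebras by dimension counting. Your computation via the degree parameter $d$ is a tidy repackaging of the case-by-case check the paper carries out in Proposition~\ref{prop:simpleEJAsquare}.

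The two obstacles you flag are precisely what Section~\ref{sec:loctom} is devoted to, and the paper resolves both rather than leaving them as assumptions. For atomicity of $p\otimes q$, the paper uses the characterisation from Lemma~\ref{lem:atomicnorm}: a nonzero effect $p$ is atomic iff $p\mult a=\norm{p\mult a}\,p$ for all $a$. Since every element of $V\otimes V$ is a finite sum of simple tensors and the sequential product acts componentwise on those, this criterion transfers immediately (Proposition~\ref{prop:atomictensor}). For the simplicity issue, the paper does not try to prove $W\otimes W$ simple directly; instead it shows (Proposition~\ref{prop:tensordirectsum}) that all the $p_i\otimes p_j$ land in a \emph{single} simple summand of $V\otimes V$. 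The trick is to produce an atom $q$ in $W$ with $q\mult p_i\neq 0$ for every $i$ (easy by inspection of the classification), so that $q\otimes q$ overlaps every $p_i\otimes p_j$ and Lemma~\ref{lem:atomicoverlap} forces them into one summand. Since $1_W\otimes 1_W=\sum_{i,j}p_i\otimes p_j$ is classical by Proposition~\ref{prop:classicaltensor}, it must be the unit of that summand, fixing its rank at $r^2$; local tomography then pins the dimension at $(\dim W)^2$. With these two lemmas in hand your fallback via total Peirce dimensions is not needed.
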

\noindent This result is given as Theorem~\ref{theor:seqprodlocalcomp} in Section~\ref{sec:loctom}. Note that C$^*$-algebras being singled out among all the EJAs by local tomography is not surprising as similar results were obtained in Refs.~\cite{barnum2014local,selby2018reconstructing,masanes2014entanglement,hanche1985jb}, but in combination with the results regarding the sequential product it does give a novel understanding of the mathematical structure of quantum theory:

\begin{quote}
    A causal probabilistic physical theory that satisfies local tomography and that has a well-behaved notion of sequential measurement must be modelled by C$^*$-algebras.
\end{quote}

\noindent There have been quite a few characterisations of quantum theory using operational axioms~\cite{barnum2014higher,selby2018reconstructing,masanes2014entanglement,chiribella2011informational,tull2016reconstruction,wetering2018reconstruction,hohn2017toolbox,hardy2001quantum}, but the one presented above is different in a couple of ways. First of all, other characterisations and reconstructions have their axioms refer to a multitude of structures, like the existence of certain systems, transformations and pure states, instead of focusing on a single aspect (in this case, sequential measurement). 
Second, all reconstructions of quantum theory that the author is aware of have axioms ensuring the existence of suitable reversible (i.e.\ invertible) dynamics in the theory. In contrast, this characterisation of C$^*$-algebras doesn't directly say anything about the existence of reversible maps.

The proof we use for the correspondence between sequential product spaces and EJAs is highly specific to finite-dimensional spaces. Yet, by assuming a few other properties, we can also find a correspondence between $\sigma$-directed complete sequential product spaces and JB-algebras, which form an infinite-dimensional generalisation of Euclidean Jordan algebras. We refer to section \ref{sec:infiniterank} for the details.

The structure of the paper is as follows: the main theorem that finite-dimensional sequential product spaces are EJAs will be proved using the Koecher-Vinberg theorem which requires us to show that the space is both \emph{homogeneous} and \emph{self-dual}. In Section~\ref{sec:prelim} we will cover known results originally presented in~\cite{gudder2002sequential,wetering2018characterisation} regarding sequential product spaces, culminating in a proof of a spectral theorem and a proof of the homogeneity of the space. Then in section \ref{sec:selfdual} we will prove self-duality of the space, using results regarding lattices of projections of Alfsen and Shultz~\cite{alfsen2012state,alfsen2012geometry} and a characterisation result concerning low rank homogeneous spaces of Ito and Louren{\c c}o \cite{ito2017p}. At this point sequential product spaces have been established to be EJAs, but only in a rather indirect way. In section \ref{sec:jordanproduct} we directly construct the Jordan product using the sequential product. In Section~\ref{sec:loctom} we show how the additional requirement of local tomography forces the sequential product space to be a C$^*$-algebra. Section~\ref{sec:infiniterank} establishes an infinite-dimensional generalisation of the main theorem, while in Section~\ref{sec:axioms} we discuss how changing the axioms of a sequential product impacts the results of this paper.

\section{Preliminaries}\label{sec:prelim}
As mentioned in the introduction, our main goal is to show that a sequential product space is homogeneous and self-dual, let us start therefore with the definition of these properties.

\begin{definition}
    Let $V$ be an order unit space. An \emph{order isomorphism} is a linear map $\Phi: V\rightarrow V$ such that $\Phi(a)\geq 0 \iff a\geq 0$ for all $a\in V$. Denote the interior of the positive cone of $V$ by $C$, i.e.\ $a\in C \iff \exists \epsilon\in\R_{>0}: \epsilon 1 \leq a$. We call $V$ \emph{homogeneous} when for all $a,b \in C$ there exists an order isomorphism $\Phi$ such that $\Phi(a) = b$.
\end{definition}

\begin{definition}
    Let $V$ be an order unit space. We call $V$ \emph{self-dual} when there exists an inner product $\inn{\cdot,\cdot}$ such that for all $a\in V$: $a\geq 0$ if and only if $\inn{a,b} \geq 0$ for all $b\geq 0$.
\end{definition}

To give a complete picture of the theory of sequential product spaces we will repeat some of the known basic results regarding sequential products and sequential product spaces that can be found in for instance \cite{gudder2002sequential,gudder2005uniqueness,wetering2018characterisation}. This section on preliminaries will end with the existence proofs of spectral decompositions of effects and the corollary of homogeneity that follows from it, which was originally shown in Ref.~\cite{wetering2018characterisation} (in a slightly different setting).
    
Unless otherwise stated, we will let $V$ denote a finite-dimensional sequential product space, ${E=[0,1]_V}$ its set of effects and $\&: E\times E\rightarrow E$ a sequential product. For $a\in E$ we let $a^\perp=1-a$ denote its \emph{complement} which by virtue of $a$ lying in the unit interval of $V$ is also an effect.

\begin{proposition}\label{prop:basic}
    Let $a,b,c \in E$.
    \begin{enumerate}[label=\arabic*., ref=\arabic{counter}.\arabic*]
        \item \label{prop:unitzero} $a\mult 0 = 0\mult a = 0$ and $a\mult 1=1\mult a = a$.
        \item \label{prop:decreasing} $a\mult b \leq a$.
        \item \label{prop:orderpreserve} If $a\leq b$ then $c\mult a\leq c\mult b$.
    \end{enumerate}
\end{proposition}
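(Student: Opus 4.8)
The three statements have a natural dependency structure, so the plan is to prove them in the order: first the behaviour at the endpoints $0$ and $1$, then monotonicity (item \ref{prop:orderpreserve}), and finally the bound $a\mult b\leq a$ (item \ref{prop:decreasing}) as a one-line corollary. The guiding observation is that axiom \ref{ax:add} together with positivity says that for fixed $a$ the map $b\mapsto a\mult b$ is additive on $E$ and sends effects to effects, so it extends to a positive linear endomorphism of $V$; this lets me manipulate sums freely in the second argument.

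For the endpoint values, $a\mult 0=0$ is immediate from \ref{ax:add}: since $0=0+0$ we get $a\mult 0=a\mult 0+a\mult 0$, forcing $a\mult 0=0$. Feeding this into \ref{ax:orth} yields $0\mult a=0$, and $1\mult a=a$ is just \ref{ax:unit}. Monotonicity is equally quick: if $a\leq b$ then $b-a$ satisfies $0\leq b-a\leq b\leq 1$, so $b-a\in E$ and \ref{ax:add} gives $c\mult b=c\mult a+c\mult(b-a)\geq c\mult a$, because $c\mult(b-a)$ is again an effect and hence positive. Granting the remaining identity $a\mult 1=a$, item \ref{prop:decreasing} then follows at once: from $b\leq 1$ and monotonicity in the second argument we get $a\mult b\leq a\mult 1=a$.

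The one genuinely delicate point is therefore $a\mult 1=a$. The obvious attempt---expanding $a\mult 1=a\mult(a+a^\perp)$ and hoping to recombine---stalls, and more tellingly the naive route would want additivity in the \emph{first} argument, which is not among the axioms. Instead I would route through compatibility. Note first that $a$ is compatible with $0$: both $a\mult 0$ and $0\mult a$ vanish by the previous paragraph, so $a\commu 0$ holds trivially. Applying the complement clause of \ref{ax:compadd} to $a\commu 0$ gives $a\commu 0^\perp$, that is $a\commu 1$. Unfolding this compatibility, $a\mult 1=1\mult a$, and the right-hand side is $a$ by \ref{ax:unit}. This establishes $a\mult 1=a$ and closes the argument. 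The main obstacle, then, is not any computation but spotting that the endpoint identity must be obtained from the compatibility axioms rather than from additivity; once $a\commu 1$ is in hand everything else is routine.
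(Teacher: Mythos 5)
Your proof is correct, and the overall strategy coincides with the paper's: $a\mult 0=0$ comes from \ref{ax:add}, the identity $a\mult 1=a$ is obtained by first establishing $a\commu 1$ from the compatibility axioms, and monotonicity and $a\mult b\leq a$ are additivity arguments. The one genuine divergence is how you reach $a\commu 1$: you first prove $a\mult 0=0$, invoke \ref{ax:orth} to get $0\mult a=0$, hence $a\commu 0$, and then apply the complement clause of \ref{ax:compadd}. The paper instead starts from the trivial compatibility $a\commu a$, passes to $a\commu a^\perp$ and then to $a\commu(a+a^\perp)=1$ using both clauses of \ref{ax:compadd}, and never touches \ref{ax:orth} in this proposition. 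Both chains are valid, and yours is arguably a step shorter; the paper's has the advantage of keeping \ref{ax:orth} out of the most basic lemmas, which matters for the discussion in section \ref{sec:axioms} on which results survive when \ref{ax:orth} is dropped. Your other deviation---deriving item \ref{prop:decreasing} from monotonicity in the second argument rather than directly from $a=a\mult b+a\mult b^\perp$---is a harmless reordering.
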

\begin{proof} Originally proved in Ref. \cite{gudder2002sequential}.
    \begin{enumerate}
        \item We of course have $a\commu a$ and by \ref{ax:compadd} we have $a\commu a^\perp$. Using \ref{ax:compadd} again we then see that $a\commu a+a^\perp = 1$ so that by \ref{ax:unit} $1\mult a = a\mult 1 = a$. Using \ref{ax:compadd} again we also have $a\commu 1^\perp = 0$ so that it remains to show that $a\mult 0 = 0$. By \ref{ax:add} we get $a\mult 0 = a\mult(0+0) = a\mult 0 + a\mult 0$ so that indeed $a\mult 0 = 0$. 
        \item By the previous point and \ref{ax:add} $a = a\mult 1 = a\mult (b+(1-b)) = a\mult b + a\mult (1-b)$ so that indeed $a\mult b \leq a$, as $a\mult (1-b) \geq 0$.
        \item Since $a\leq b$ we have $b-a \geq 0$ so that using \ref{ax:add} we have $c\mult b = c\mult (b-a + a) = c\mult (b-a) + c\mult a$, from which we derive $c\mult(b-a) = c\mult b - c\mult a$. Since the lefthandside is greater than zero, the righthandside must be as well. \qedhere
    \end{enumerate}
\end{proof}

\begin{proposition} \label{prop:linearity}
    Let $a,b\in E$ and let $q$ be any rational number between zero and one, and $\lambda$ any real number between zero and one.
    \begin{enumerate}[label=\arabic*.,ref=\arabic{counter}.\arabic*]
        \item $a\mult (qb) = q(a\mult b)$.
        \item $a\mult (\lambda b) = \lambda(a\mult b)$.
        \item $(\lambda a)\mult b = a\mult (\lambda b) = \lambda(a\mult b)$.
        \item \label{prop:commumult} If $a\commu b$ then $a\commu \lambda b$.
    \end{enumerate}
\end{proposition}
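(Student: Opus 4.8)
The plan is to dispose of the two easy items first and then reduce items~3 and~4 to a single statement, namely that scalar multiples of the unit are central.

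For item~1 I would fix the first slot $a$ and use additivity \ref{ax:add}: the map $c\mapsto a\mult c$ is additive as long as the partial sums stay in $[0,1]_V$. A routine induction gives $a\mult(kc)=k(a\mult c)$ for every integer $k$ with $kc\in[0,1]_V$; taking $c=\tfrac1n b$ and letting $k$ run up to $n$ yields $a\mult(\tfrac1n b)=\tfrac1n(a\mult b)$ and hence $a\mult(\tfrac mn b)=\tfrac mn(a\mult b)$. For item~2 I cannot invoke \ref{ax:cont} directly, since that axiom gives continuity in the \emph{first} slot; instead I would squeeze. Choosing rationals $q\le\lambda\le q'$ and using monotonicity in the second slot (Proposition~\ref{prop:orderpreserve}) gives $q(a\mult b)\le a\mult(\lambda b)\le q'(a\mult b)$; letting $q\uparrow\lambda$ and $q'\downarrow\lambda$, and using that the positive cone is norm-closed (a consequence of the Archimedean axiom), both bounds converge to $\lambda(a\mult b)$, forcing equality.

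The identity $a\mult(\lambda b)=\lambda(a\mult b)$ is precisely item~2, so the real content of item~3 is the first-slot identity $(\lambda a)\mult b=\lambda(a\mult b)$, and item~4 is then immediate from items~2--3: $a\mult(\lambda b)=\lambda(a\mult b)=\lambda(b\mult a)=(\lambda b)\mult a$. I claim everything follows once I know that scalar units are central, i.e. $(\lambda 1)\mult c=\lambda c$ for all $c\in[0,1]_V$. Indeed this says $\lambda 1\commu c$, so by associativity \ref{ax:assoc} applied to the compatible pair $(\lambda 1,a)$, $(\lambda a)\mult b=((\lambda 1)\mult a)\mult b=(\lambda 1)\mult(a\mult b)=\lambda(a\mult b)$, where the two outer equalities are instances of centrality. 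To attack centrality I set $T_\lambda(c):=(\lambda 1)\mult c$. By \ref{ax:add} and item~2 this is additive and homogeneous in $c$, hence extends to a positive linear map $T_\lambda\colon V\to V$ with $T_\lambda(1)=\lambda 1$ and, by Proposition~\ref{prop:decreasing}, $0\le T_\lambda(c)\le\lambda 1$ on effects. Since $(\lambda 1)\mult(\mu 1)=\lambda\mu 1=(\mu 1)\mult(\lambda 1)$ by item~2, distinct scalar units are compatible, and \ref{ax:assoc} upgrades this to the semigroup law $T_\lambda\circ T_\mu=T_{\lambda\mu}$; moreover $\lambda\mapsto T_\lambda$ is continuous by \ref{ax:cont}. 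Thus $\{T_\lambda\}_{\lambda\in(0,1]}$ is a continuous multiplicative one-parameter semigroup of positive maps on the finite-dimensional space $V$, and the goal is exactly $T_\lambda=\lambda\,\id$.

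To finish I would pass to the generator: reparametrising by $s=-\ln\lambda$ gives a norm-continuous semigroup $S_s=e^{-sG}$, so $T_\lambda=\lambda^{G}$. The bound $\norm{T_\lambda}\le\lambda$ together with $T_\lambda(1)=\lambda 1$ pins the minimal real part of the spectrum of $G$ to $1$. The hard part---and where I expect the genuine work to lie---is promoting this to $G=\id$. The weak, ``local'' data collected above (positivity, $T_\lambda(1)=\lambda 1$, the semigroup law, and $0\le T_\lambda(c)\le\lambda 1$) are \emph{not} enough to force $T_\lambda=\lambda\,\id$: they are also satisfied by other positive semigroups, such as depolarising-type ones. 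What rescues the argument is the sharper inequality $(\lambda 1)\mult c\le c$; granting it, $\id-T_\lambda$ is a positive map for every $\lambda$, so $G=\lim_{\lambda\to 1^-}\frac{\id-T_\lambda}{-\ln\lambda}$ is a positive unital map, whose spectrum therefore lies in the closed unit disc, and combined with the spectral lower bound this forces the spectrum to be $\{1\}$ with no Jordan block, i.e. $G=\id$, whence $T_\lambda=\lambda\,\id$ and centrality holds. Proving the sharp bound $(\lambda 1)\mult c\le c$ (equivalently, establishing centrality of $\lambda 1$ directly) is thus the crux; it is exactly the point at which the continuity axiom \ref{ax:cont} and the Archimedean, finite-dimensional order structure must be used in an essential way, since the purely algebraic axioms alone do not pin it down.
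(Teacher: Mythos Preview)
Your treatment of items 1, 2 and 4 is sound and essentially matches the paper (the paper uses a norm estimate rather than a squeeze for item 2, but your monotonicity argument works just as well). The issue is item~3.

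You correctly reduce item~3 to centrality of scalar units, $(\lambda 1)\mult c = \lambda c$, but then embark on a semigroup/spectral analysis that you yourself flag as incomplete: the whole argument hinges on the inequality $(\lambda 1)\mult c \le c$, which you never prove and in fact identify as ``the crux.'' So as written there is a genuine gap. Moreover the machinery you invoke (generators, spectral bounds, absence of Jordan blocks, finite-dimensionality) is unnecessary for this statement.

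The paper's route is far more direct and entirely elementary; the key is axiom~\ref{ax:compadd}, which your argument never uses. The trick is to establish $a\commu q1$ for \emph{rational} $q$ by pure compatibility bookkeeping, and only afterwards appeal to continuity. From $\tfrac1n a\commu \tfrac1n a$ and repeated application of the sum clause in~\ref{ax:compadd} one gets $\tfrac1n a\commu a$, and hence $qa\commu a$ for every rational $q\in[0,1]$; the same reasoning gives $qa^\perp\commu a^\perp$, and the complement clause in~\ref{ax:compadd} then yields $qa^\perp\commu a$. One more application of the sum clause gives $a\commu (qa+qa^\perp)=q1$. This is precisely centrality for rational scalars: $(q1)\mult a = a\mult(q1) = qa$. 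Now~\ref{ax:assoc} gives
\[
(qa)\mult b \;=\; (a\mult(q1))\mult b \;=\; a\mult((q1)\mult b) \;=\; a\mult(qb) \;=\; q(a\mult b),
\]
and finally continuity in the first slot~\ref{ax:cont} extends this from rational $q$ to real $\lambda$. No spectral theory, no finite-dimensionality, and no unproved inequality are required.
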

\begin{proof} Originally proved in Ref. \cite{wetering2018characterisation}.
    \begin{enumerate}
        \item Of course $a\mult b = a\mult (n \frac1n b) = n a\mult (\frac1n b)$ by \ref{ax:add}. Dividing by $n$ gives $a\mult(\frac1n b) = \frac1n (a\mult b)$. By summing this equation multiple times we see that we get $a\mult (q b) = q(a\mult b)$ for any rational $0\leq q\leq 1$.

        \item Let $q_i$ be an increasing sequence of positive rational numbers that converges to $\lambda$. Using the norm of the order unit space we compute
        \begin{align*}
        \norm{\lambda (a\mult b) - a\mult (\lambda b)} &= \norm{(\lambda - q_i)(a\mult b) + q_i(a\mult b) - a\mult (\lambda b)} \\
        &= \norm{(\lambda-q_i)(a\mult b) - a\mult ((\lambda - q_i)b)}.
        \end{align*}
        Because $(\lambda - q_i)b\leq (\lambda-q_i)\norm{b}1$ and using proposition \ref{prop:orderpreserve} we have $\norm{a\mult ((\lambda - q_i)b)} \leq \norm{a}\norm{(\lambda -q_i)b} = (\lambda - q_i)\norm{a}\norm{b}$. Then $\norm{\lambda (a\mult b) - a\mult (\lambda b)} \leq 2(\lambda - q_i)\norm{a}\norm{b}$. This expression indeed goes to zero as $i$ increases so that indeed $\lambda (a\mult b) = a\mult (\lambda b)$.

        \item Clearly $\frac{1}{n}a\commu \frac{1}{n}a$ so that by \ref{ax:compadd} $\frac{1}{n}a\commu a$. In the same way we also get $qa\commu a$ and $qa^\perp \commu a^\perp$ for any rational $0\leq q\leq 1$. Using the rule $a\commu b \implies a\commu b^\perp$ from \ref{ax:compadd} we then also get $qa^\perp \commu a$ so that $a\commu (qa+qa^\perp)=q1$. Then indeed $(q1)\mult a = a\mult (q1) = q(a\mult 1) = qa$ so that also $(qa)\mult b = (a\mult (q1))\mult b = a\mult((q1)\mult b)) = a\mult qb = q(a\mult b)$. Now let $\lambda\in[0,1]$ be a real number and let $q_i$ be a sequence of rational numbers converging to $\lambda$ so that also $q_i a \rightarrow \lambda a$ and $q_i(a\mult b) \rightarrow \lambda(a\mult b)$. Then $q_i(a\mult b) = (q_ia)\mult b \rightarrow (\lambda a)\mult b$ by \ref{ax:cont}. We conclude that $(\lambda a)\mult b = \lambda(a\mult b) = a\mult(\lambda b)$.

        \item Suppose $a\commu b$, then using the previous point $a\mult (\lambda b) = \lambda (a\mult b) = \lambda (b\mult a) = (\lambda b)\mult a$. \qedhere
    \end{enumerate}
\end{proof}

As a result of this proposition, the \emph{left-multiplication map} $L_a:E\rightarrow E$ for $a\in E$ given by $L_a(b) = a\mult b$ can be extended by linearity to the entirety of $V$ by $L_a(b-c) = L_a(b) - L_a(c)$. Similarly we can define the sequential product for any element in the positive cone of $V$ (not necessarily below the identity) by rescaling: $a\mult b := \norm{a} ((\frac{1}{\norm{a}} a)\mult b)$.

\begin{definition}
    An effect $p\in E$ is called \emph{sharp} when the only effect below both $p$ and $p^\perp$ is the zero effect, i.e\ when the following implication holds: $b\leq p$ and $b\leq p^\perp$ implies $b=0$.
\end{definition}

When $V=B(H)^{\text{sa}}$ the sharp effects are precisely the projections. This should be clear considering the following proposition:

\begin{proposition}\label{prop:sharpness}
    Let $a\in E$ be an effect, $a$ is sharp if and only if $a\mult a^\perp = 0$ if and only if $a\mult a = a$.
\end{proposition}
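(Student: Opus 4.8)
The plan is to reduce the two algebraic conditions to one another via a single identity, then match each to sharpness, isolating one genuinely non-trivial step. First I would record the \emph{master identity} $a = a\mult 1 = a\mult(a+a^\perp) = a\mult a + a\mult a^\perp$, which follows from Proposition \ref{prop:unitzero} and additivity \ref{ax:add}. Since $a\mult a\geq 0$ and $a\mult a^\perp\geq 0$, this shows at once that $a\mult a^\perp = 0$ iff $a\mult a = a$, so it remains only to relate sharpness to $a\mult a^\perp = 0$. For the easy direction, assume $a$ is sharp and set $c = a\mult a^\perp$. Proposition \ref{prop:decreasing} gives $c\leq a$; and since $a\commu a^\perp$ by \ref{ax:compadd} (applying it to $a\commu a$), we may rewrite $c = a^\perp\mult a$, so Proposition \ref{prop:decreasing} now also yields $c\leq a^\perp$. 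Sharpness forces $c = 0$, i.e.\ $a\mult a^\perp = 0$.

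For the converse, I would take any $b$ with $b\leq a$ and $b\leq a^\perp$ and aim to conclude $b = 0$. From $b\leq a^\perp$ and order preservation (Proposition \ref{prop:orderpreserve}) we get $a\mult b\leq a\mult a^\perp = 0$, hence $a\mult b = 0$, and then $b\mult a = 0$ by \ref{ax:orth}. Combining $b\leq a$ with Proposition \ref{prop:orderpreserve} once more gives $b\mult b\leq b\mult a = 0$, so $b\mult b = 0$. Everything then rests on the auxiliary claim that $b\mult b = 0$ forces $b = 0$, which I expect to be the only real obstacle, since sharpness does not enter it at all.

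To prove this claim I would exploit Archimedeanity through a self-improving bound. From $b\mult b = 0$ and the master identity we have $b = b\mult b^\perp$, and compatibility $b\commu b^\perp$ turns this into $b = b^\perp\mult b$. Now run an induction on a bound $b\leq t\,1$: order preservation together with the scalar rule of Proposition \ref{prop:linearity} gives $b = b^\perp\mult b \leq b^\perp\mult(t1) = t\,b^\perp = t(1-b)$, whence $(1+t)b\leq t1$ and $b\leq \tfrac{t}{1+t}1$. Starting from $t_0 = 1$, the recursion $t_{n+1} = t_n/(1+t_n)$ gives $t_n = 1/(n+1)\to 0$, so $b\leq \tfrac{1}{n+1}1$ for every $n$. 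The Archimedean axiom then yields $b\leq 0$, and with $b\geq 0$ we conclude $b = 0$, completing the proof.

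The main difficulty is spotting this iteration. The naive estimates only give $b = b^\perp\mult b\leq b^\perp$, i.e.\ $b\leq\tfrac12 1$, and one has to recognise that feeding each bound back through the order preservation of $b^\perp\mult(\cdot)$ strictly improves the constant, driving it to $0$ and triggering the Archimedean property. Once that lemma is in place, the rest of the argument is just short manipulations with additivity, compatibility, and Propositions \ref{prop:unitzero}, \ref{prop:decreasing}, \ref{prop:orderpreserve} and \ref{prop:linearity}.
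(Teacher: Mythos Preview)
Your proof is correct, and the first two parts (the master identity and the direction ``sharp $\Rightarrow a\mult a^\perp=0$'') match the paper exactly. The converse, however, takes a genuinely different route. After obtaining $a\mult b=0$ from $b\leq a^\perp$, the paper simply repeats the same move symmetrically: from $b\leq a$ and $a^\perp\mult a=a\mult a^\perp=0$ one also gets $a^\perp\mult b=0$. With both $a\mult b=0$ and $a^\perp\mult b=0$ (hence $b\commu a$ and $b\commu a^\perp$ by \ref{ax:orth}), the conclusion is immediate:
\[
b=b\mult 1=b\mult(a+a^\perp)=b\mult a+b\mult a^\perp=a\mult b+a^\perp\mult b=0.
\]
So the paper never needs the auxiliary statement ``$b\mult b=0\Rightarrow b=0$'' at all.

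Your detour through that lemma is valid and the self-improving Archimedean iteration is a nice argument in its own right; it establishes an independently useful fact (effects with $b^2=0$ vanish) that the paper does not isolate. But for this particular proposition it is more work than necessary: you used only half of the available symmetry (deriving $a\mult b=0$ but not $a^\perp\mult b=0$), and then had to compensate with a separate lemma. The paper's two-line finish shows that the ``real obstacle'' you anticipated does not actually arise here.
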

\begin{proof} Originally proved in Ref. \cite{gudder2002sequential}.

        \noindent The equivalence of $a\mult a^\perp =0$ and $a\mult a = a$ follows straightforwardly by \ref{ax:add} and proposition \ref{prop:unitzero}: $a = a\mult 1 = a\mult (a+a^\perp) = a\mult a + a\mult a^\perp$.

        So let us assume $a$ is sharp. By \ref{ax:compadd} we have $a\commu a^\perp$ so that $a\mult a^\perp = a^\perp \mult a$. By \ref{prop:decreasing} we have $a\mult a^\perp \leq a$ and similarly $a^\perp \mult a \leq a^\perp$, the expression $a\mult a^\perp$ is therefore below both $a$ and $a^\perp$ and by sharpness of $a$ has to be zero. Now suppose $a\mult a^\perp = 0$ and let $b\leq a$ and $b\leq a^\perp$. If $b\leq a^\perp$ then by \ref{prop:decreasing} we get $a\mult b \leq a\mult a^\perp =0$, and similarly we get $a^\perp \mult b = 0$. By \ref{ax:orth} we have $b\commu a$ and $b\commu a^\perp$ so that $b = b\mult 1 = b\mult (a+a^\perp) = b\mult a + b\mult a^\perp = a\mult b + a^\perp \mult b = 0 + 0 = 0$.
\end{proof}

Let us now introduce the notion of orthogonal effects which was hinted at in \ref{ax:orth}:
\begin{definition}
    We call two effects $p$ and $q$ \emph{orthogonal} when $p\mult q = 0$.
\end{definition}
Of course by \ref{ax:orth} orthogonality is a symmetric relation, and we note that therefore orthogonal effects are also compatible.

\begin{definition}
    Let $a\in E$ be an effect. We define the powers of $a$ inductively to be $a^0 := 1$ and $a^n := a\mult a^{n-1}$. We define the \emph{classical algebra of $a$} to be the linear space $C(a)$ spanned by all the powers of $a$ and $a^\perp$.
\end{definition}

\begin{proposition}\label{prop:polyspace}
    Let $a\in E$ be an effect. $C(a)$ is a commutative sequential product space.
\end{proposition}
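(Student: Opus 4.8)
The plan is to identify $C(a)$ with the polynomial functional calculus of $a$ and to show that on it the sequential product is simply commutative polynomial multiplication. First I would record that $C(a)$ is automatically an order unit space: it is a finite-dimensional linear subspace of $V$ containing $1=a^0$, and it inherits the order, the unit and the Archimedean norm from $V$, its effects being $[0,1]_V\cap C(a)$. Thus the only substantive points are that $\mult$ maps $[0,1]_{C(a)}\times[0,1]_{C(a)}$ into $C(a)$ (closure), that it is commutative there, and that axioms \ref{ax:add}--\ref{ax:compmult} restrict --- the latter being inherited from $V$ once closure is known.

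The algebraic backbone is the identity $a^m\mult a^n=a^{m+n}$. Since $a\commu a$, axiom \ref{ax:compmult} gives $a\commu a^n$ for all $n$ by induction, and then \ref{ax:assoc} applied to the compatible pair $a,a^m$ together with the definition $a^{k+1}=a\mult a^k$ yields $a^m\mult a^n=a^{m+n}$ by a second induction. Consequently the left multiplications $L_{a^m}$ pairwise commute, $L_{a^m}L_{a^n}=L_{a^{m+n}}$, and each maps $C(a)$ into itself. Using additivity of $L$ in its second argument (\ref{ax:add}) together with $a^\perp=1-a$, every power of $a^\perp$ expands as a real-linear combination of powers of $a$, so in fact $C(a)=\mathrm{span}\{a^n : n\ge 0\}$. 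Writing $p$ for the monic minimal polynomial of $a$, this endows $C(a)$ with a commutative associative unital product $\cdot$, isomorphic to $\R[t]/(p)$, under which $a^m\cdot a^n=a^{m+n}$ and $L_{a^m}$ is exactly multiplication by $a^m$; equivalently each $x\in C(a)$ determines a multiplication operator $M_x$ with $M_x(1)=x$ and $M_xM_y=M_yM_x$.

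Everything then reduces to a single claim: every effect of $C(a)$ is compatible with $a$, hence with every $a^m$. Granting this, for effects $x,y$ with $y=\sum_m\beta_m a^m$, additivity of $L_x$ in the second slot gives
\[
x\mult y=\sum_m\beta_m\,(x\mult a^m)=\sum_m\beta_m\,(a^m\mult x)=\Big(\sum_m\beta_m L_{a^m}\Big)(x)=M_y(x)=x\cdot y,
\]
which lies in $C(a)$; running the same computation with $x$ and $y$ interchanged gives $y\mult x=x\cdot y$, so closure and commutativity hold simultaneously (and the remaining axioms restrict from $V$). The tool enabling the reduction is that $\mult$ is additive in its first argument among effects sharing a common compatible partner: if $u\commu c$ and $w\commu c$, then by symmetry of $\commu$, additivity of $L_c$ (\ref{ax:add}) and the closure of compatibility under sums (\ref{ax:compadd}), one gets $(u+w)\mult c=u\mult c+w\mult c$ whenever $u+w$ is again an effect.

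The main obstacle is establishing that universal compatibility with $a$. I would let $P$ be the set of effects of $C(a)$ compatible with every $a^m$; using \ref{ax:compmult}, \ref{ax:compadd}, proposition \ref{prop:commumult} and a short limit argument resting on continuity (\ref{ax:cont}), $P$ is a norm-closed convex set containing all $a^m$ and all $(a^\perp)^m$ and closed under $\mult$, complementation and scalar multiples. What remains --- and this is the crux --- is to show that these closure properties force $P=[0,1]_{C(a)}$. Concretely, the Bernstein-type combinations $\sum_k\beta_k\,\big(a^k\mult(a^\perp)^{\,n-k}\big)$ of the generators lie in $P$ and approximate an arbitrary effect of $C(a)$ in norm, so norm-closedness of $P$ finishes the argument. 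I expect this approximation step to be where the real work sits, and I note that it is precisely the place where the new continuity axiom \ref{ax:cont} --- the one genuine addition beyond the sequential-effect-algebra axioms --- becomes indispensable.
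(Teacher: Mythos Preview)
Your overall plan coincides with the paper's: reduce everything to the single claim that each effect of $C(a)$ is compatible with $a$ (hence with every $a^m$), after which closure and commutativity follow from additivity in the second slot. The paper dispatches this claim in one line. From $a\commu a^\perp$ and axiom~\ref{ax:compmult} one gets $a^n\commu(a^\perp)^m$ for all $n,m$; then \ref{ax:compadd} together with proposition~\ref{prop:commumult} shows that the set of effects compatible with a fixed element is stable under real linear combinations that remain effects, and since $C(a)$ is spanned by the $a^n$ and $(a^\perp)^m$ this already gives $P=[0,1]_{C(a)}$. No approximation, no continuity, no Bernstein.

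Your proposed Bernstein route, by contrast, has a real obstruction. To approximate a given effect $x=q(a)$ by combinations $\sum_k\beta_k\,a^k(a^\perp)^{n-k}$ with $\beta_k\ge0$ (which is what you need for membership in $P$), you would have to know either that the representative polynomial $q$ can be chosen nonnegative on $[0,1]$, or that $\|r(a)\|\le\sup_{[0,1]}|r|$ for polynomials $r$. Both of these are spectral facts about $a$ --- precisely the content that Kadison's theorem supplies \emph{after} this proposition --- so invoking them here is circular. And if the spectrum of $a$ is a proper subset of $[0,1]$, a polynomial $q$ with $q(a)\ge0$ need not be nonnegative on $[0,1]$ at all, so nonnegative Bernstein coefficients simply need not exist.

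The irony is that the closure properties you already list for $P$ finish the job algebraically. From closure under sums, complements, and scalars in both directions (if $\lambda e\in P$ with $\lambda>0$ then $e\in P$, by proposition~\ref{prop:linearity}) one gets differences: for $u,w\in P$ with $u-w\in[0,1]$, the effect $\tfrac12 u+\tfrac12 w^\perp=\tfrac12(u-w)+\tfrac12$ lies in $P$, hence so does its complement $\tfrac12(u-w)^\perp$, hence $(u-w)^\perp$, hence $u-w$. Splitting an arbitrary linear combination of the generators into its positive and negative parts and rescaling then shows $P=[0,1]_{C(a)}$. This is exactly the paper's ``linear combinations of compatible effects are compatible'', made explicit; you had the pieces but reached for the wrong tool to assemble them.
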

\begin{proof}
    $C(a)$ inherits the order structure from $V$ in the obvious way. Of course $a\commu a^\perp$ and because of \ref{ax:compmult} we have $a^n \commu (a^\perp)^m$ for all $n$ and $m$. Because of \ref{ax:compadd} and proposition \ref{prop:commumult} linear combinations of compatible effects are also compatible and hence all effects of $C(a)$ are compatible.
\end{proof}

\noindent We are now in a position to use the seminal representation theorem from Kadison:

\begin{proposition}[cf.~{\cite{kadison1951representation}}]
     Let $V$ be a complete order unit space with a bilinear operation $\circ$ that preserves positivity: $a\circ b \geq 0$ when $a,b\geq 0$. Then there exists a compact Haussdorff space $X$ such that $V\cong C(X)$, the space of continuous real-valued functions on $X$. This isomorphism is both an order- and algebra-isomorphism.
\end{proposition}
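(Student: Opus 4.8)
The plan is to prove this by a Gelfand-type representation built on the state space of $V$. First I would form the state space $K = \{\omega : V \to \R \mid \omega \text{ positive}, \ \omega(1)=1\}$ equipped with the weak-$*$ topology; since $V$ is complete and $1$ is an order unit, $K$ is a weak-$*$ compact convex subset of the dual space $V^*$ by Banach--Alaoglu. The evaluation map $\Phi : V \to C(K)$ given by $\Phi(a)(\omega) = \omega(a)$ is then linear and unital, sending $1$ to the constant function $1$. Because the states order-separate the elements of an order unit space (as recorded earlier in the paper), one has $a \geq 0$ iff $\omega(a)\geq 0$ for all $\omega$, i.e.\ iff $\Phi(a)\geq 0$; hence $\Phi$ is injective and an order isomorphism onto its image, and moreover an isometry since $\norm{a} = \sup_{\omega\in K}\abs{\omega(a)}$.

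The second step is to incorporate the algebraic structure. Since the target $C(X)$ is a commutative, associative, unital algebra, I would use that $\circ$ is commutative, associative, and has $1$ as its unit on $V$ (which holds in the intended application to the classical algebra $C(a)$, where all elements are compatible). The correct base space $X$ is then not all of $K$ but the subspace of \emph{characters}, the multiplicative states satisfying $\omega(a\circ b)=\omega(a)\omega(b)$, equivalently the extreme points realising the Gelfand spectrum. I would show $X$ is weak-$*$ closed in $K$, hence compact Hausdorff, and that the restricted map $a\mapsto \Phi(a)|_X$ is an algebra homomorphism into $C(X)$ by construction.

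The third step is surjectivity onto $C(X)$ via Stone--Weierstrass. The image $\Phi(V)|_X$ is a unital subalgebra of $C(X)$ that separates the points of $X$, since two distinct characters differ on some element of $V$, so by the real Stone--Weierstrass theorem it is dense. Completeness of $V$ in the order-unit norm together with the isometry property makes the image norm-closed, and therefore equal to all of $C(X)$. Combining the three steps, $\Phi : V \to C(X)$ is a simultaneous order and algebra isomorphism.

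The main obstacle I expect lies in the second step: ensuring the character space $X$ is rich enough to recover both the order and the norm. For a general Banach algebra the Gelfand spectrum can fail to separate points or to detect positivity, so the real work is to exploit the interplay between the positive cone, the Archimedean property, and the positivity-preservation of $\circ$ in order to guarantee that $X$ is nonempty, that its characters separate points, and crucially that $\norm{a} = \sup_{\omega\in X}\abs{\omega(a)}$ rather than merely a supremum over the larger set $K$. This is exactly the content that upgrades the purely order-theoretic Kadison embedding into an algebra representation, and it is the technical heart of the argument.
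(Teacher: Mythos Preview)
The paper does not prove this proposition at all: it is quoted from Kadison's 1951 paper and used as a black box, with no proof environment following it. So there is no paper proof to compare your proposal against.

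Your Gelfand-type outline is the right shape and would, with care, yield a proof in the intended application to $C(a)$. However, as a proof of the proposition \emph{as stated} it has a genuine gap that you partly acknowledge: the hypotheses only say that $\circ$ is bilinear and positivity-preserving, not that it is commutative, associative, or has unit $1$. A significant portion of Kadison's original argument is precisely to deduce these algebraic properties from positivity-preservation together with the Archimedean order-unit structure; roughly, one shows that under the order embedding $\Phi:V\to C(K)$ the product $\circ$ is forced to agree with pointwise multiplication. You sidestep this by importing commutativity, associativity and unitality from the application, which suffices for how the paper uses the result but does not prove the proposition in the generality stated. Your identified obstacle---showing the multiplicative states are rich enough to recover the norm and the order---is the other half of the real work, and it too ultimately rests on the interplay between positivity of $\circ$ and the order structure (e.g.\ via showing positive elements coincide with squares), which your sketch does not address.
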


\begin{proposition}
    Let $a\in E$ be an effect. $C(a)\cong \R^n$ for some $n\in \N$.
\end{proposition}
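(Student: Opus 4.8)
The plan is to obtain $C(a) \cong \R^n$ by feeding $C(a)$ into the representation theorem of Kadison stated just above. That theorem has two hypotheses, so I would first verify that $C(a)$ is a complete order unit space and then that the sequential product supplies the required bilinear positivity-preserving operation; the finite-dimensionality of $V$ will then force the resulting compact Hausdorff space to be finite.

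First I would check the order-unit structure. The space $C(a)$ contains $1 = a^0$ and inherits its order from $V$ by declaring its positive cone to be $C(a)\cap V_+$. The strong-unit and Archimedean conditions for $1$ transfer verbatim from $V$, since positivity in $C(a)$ means positivity in $V$. Because $C(a)$ is a linear subspace of the finite-dimensional space $V$, it is finite-dimensional, hence complete in the order norm, so the completeness hypothesis of Kadison's theorem holds automatically.

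Next I would promote the sequential product to a genuine bilinear operation on $C(a)$. By Proposition \ref{prop:polyspace} every pair of effects in $C(a)$ is compatible, so on $[0,1]_{C(a)}$ the product is commutative. Combining commutativity with additivity (\ref{ax:add}) and homogeneity (Proposition \ref{prop:linearity}) makes the product biadditive and bihomogeneous on effects; extending by linearity exactly as described after Proposition \ref{prop:linearity} then yields a bilinear operation $\circ$ on all of $C(a)$. Since the sequential product maps effects to effects and rescaling preserves the cone, $\circ$ carries $(C(a)\cap V_+)\times(C(a)\cap V_+)$ into the positive cone, so it preserves positivity. With both hypotheses in hand, Kadison's theorem furnishes an order isomorphism $C(a)\cong C(X)$ for some compact Hausdorff $X$, and here I only need the order-isomorphism part of the conclusion. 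To pin down $X$, I would argue that if $X$ were infinite then Urysohn's lemma would produce arbitrarily many linearly independent continuous functions, contradicting $\dim C(a) < \infty$; hence $X$ is a finite set of, say, $n$ points and $C(X) = \R^n$ with the pointwise order, giving $C(a)\cong \R^n$.

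I expect the only genuinely delicate step to be the construction of the bilinear product $\circ$ on the whole space: one must be sure the two-sided extension by linearity is well defined and consistent, which is precisely where commutativity of the product on effects (Proposition \ref{prop:polyspace}) is doing the work, and one must confirm that positivity survives after rescaling elements out of the unit interval. The order-unit bookkeeping and the finite-dimensionality argument that collapses $X$ to a finite set are routine by comparison.
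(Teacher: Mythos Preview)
Your proposal is correct and follows essentially the same approach as the paper: use commutativity from Proposition~\ref{prop:polyspace} to upgrade the sequential product to a bilinear positivity-preserving operation, apply Kadison's theorem, and then invoke finite-dimensionality to force $X$ to be finite. You are simply more explicit than the paper about the completeness hypothesis and about why $\dim C(X)<\infty$ implies $X$ is finite.
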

\begin{proof}
    The sequential product is linear in the second argument. Since $C(a)$ is a commutative sequential product space by proposition \ref{prop:polyspace}, its product is also linear in the first argument, and hence this operation is bilinear. It obviously preserves positivity by definition, so that Kadisons theorem applies and $C(a) \cong C(X)$. Since $V$ is finite-dimensional, $C(a)$ has to be so as well, and hence $C(X)$ is finite-dimensional. Of course $C(X)$ is finite-dimensional only when $X$ is finite so that $X$ is necessarily discrete. We conclude that indeed $C(X) \cong \R^n$.
\end{proof}

\begin{corollary}
    Let $a\in E$ be an effect. There exists a set of orthogonal sharp effects $p_i$ compatible with $a$ and positive scalars $\lambda_i$ such that $a = \sum_i \lambda_i p_i$.
\end{corollary}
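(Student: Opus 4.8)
The plan is to transport the problem along the isomorphism $C(a)\cong\R^n$ established in the previous proposition, where the desired spectral decomposition becomes the trivial decomposition of a tuple into its coordinates. Since this isomorphism preserves both the order and the $\&$-product, the image of $a$ in $\R^n$ is some tuple $(\lambda_1,\dots,\lambda_n)$; and because $a$ is an effect while the unit $1\in C(a)$ maps to $(1,\dots,1)$ and $0$ maps to $(0,\dots,0)$, order-preservation forces each $\lambda_j\in[0,1]$.

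First I would write $a=\sum_j \lambda_j e_j$ in $\R^n$, where $e_j$ is the $j$-th standard basis vector, and discard the indices with $\lambda_j=0$ so that all surviving scalars are strictly positive. Pulling this back through the isomorphism gives $a=\sum_j \lambda_j p_j$ inside $C(a)\subseteq V$, where $p_j\in E$ is the preimage of $e_j$.

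Next I would verify that the $p_j$ have the required properties. As the product on $\R^n$ is componentwise multiplication, each $e_j$ is idempotent ($e_j\mult e_j=e_j$) and distinct basis vectors are orthogonal ($e_j\mult e_k=0$ for $j\neq k$); since the isomorphism is an algebra-isomorphism these identities transfer to $p_j\mult p_j=p_j$ and $p_j\mult p_k=0$ in $V$. By proposition \ref{prop:sharpness} the first identity says exactly that $p_j$ is sharp, and the second is precisely the definition of orthogonality. Finally, both $a$ and every $p_j$ lie in $C(a)$, which by proposition \ref{prop:polyspace} is a commutative sequential product space in which all effects are mutually compatible; hence $a\commu p_j$ for every $j$, as required.

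The only point requiring care is that sharpness and the sequential product are a priori notions about $V$ rather than about the subspace $C(a)$: I need that the $\&$-product computed inside $C(a)$ agrees with the one in $V$, and that the idempotency $p_j\mult p_j=p_j$ already certifies sharpness of $p_j$ \emph{as an effect of $V$}. The former holds because $C(a)$ inherits its product by restriction, and the latter is exactly the content of proposition \ref{prop:sharpness}, whose equivalence ``$p$ is sharp $\iff p\mult p=p$'' is stated for all effects of $V$. With these observations in place the decomposition is complete, and it is worth noting that grouping the coordinates on which $a$ takes a common value would refine this into a decomposition with distinct eigenvalues $\lambda_i$ should that be needed later.
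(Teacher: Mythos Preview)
Your proof is correct and follows essentially the same approach as the paper: transport along the isomorphism $C(a)\cong\R^n$, take the preimages of the standard basis vectors as the $p_j$, and read off sharpness, orthogonality, and compatibility from the corresponding identities in $\R^n$. The paper compresses this into two sentences, while you spell out the verification and in particular make explicit why idempotency $p_j\mult p_j=p_j$ certifies sharpness in $V$ (via proposition~\ref{prop:sharpness}) rather than merely in the subspace $C(a)$ --- a point the paper leaves implicit.
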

\begin{proof}
    By the previous proposition $C(a) \cong \R^n$ and this space is obviously spanned by orthogonal sharp effects, hence we can find the desired $p_i$ and $\lambda_i$. By construction $p_i\in C(a)$ so that they are compatible with $a$.
\end{proof}
\noindent We will refer to a decomposition of $a$ in the above sense as a \emph{spectral decomposition} of $a$. The existence of these decomposition is already enough to show that the space must be homogeneous:

\begin{proposition}\label{prop:homogen}
    Let $C$ denote the cone of \emph{strictly positive} elements in $V$, i.e\ the elements $v\in V$ such that $\exists \epsilon>0$ with $\epsilon 1\leq v$. The cone $C$ is \emph{homogeneous}, i.e.\ for every $v,w\in C$ there exists an order isomorphism $\Phi: V\rightarrow V$ such that $\Phi(v)=w$.
\end{proposition}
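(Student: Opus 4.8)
The plan is to exhibit the required order isomorphisms explicitly as left-multiplication maps. The key observation is that for a strictly positive $a\in C$ the map $L_a(b)=a\mult b$ should itself be an order isomorphism, whose inverse is again a left-multiplication map by an element I will call $a^{-1}$. Homogeneity then follows by composing two such maps through the unit $1$.

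First I would construct the inverse. Rescaling $v\in C$ to an effect if necessary, I would apply the spectral decomposition corollary to write $a=\sum_i \lambda_i p_i$ with the $p_i$ orthogonal sharp effects that are mutually compatible and $\lambda_i>0$; since $a$ is strictly positive one can arrange $\sum_i p_i=1$, i.e.\ the $p_i$ form a resolution of the identity. I then define $a^{-1}:=\sum_i \lambda_i^{-1}p_i$, which again lies in $C$. Because the $p_i$ all live in the commutative sequential product space $C(a)$ of proposition \ref{prop:polyspace}, the elements $a$ and $a^{-1}$ are mutually compatible, $a\commu a^{-1}$, and the product restricted to $C(a)$ is bilinear with $p_i\mult p_j=\delta_{ij}p_i$ (sharpness plus orthogonality). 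A direct computation then gives $a\mult a^{-1}=\sum_i \lambda_i\lambda_i^{-1}p_i=\sum_i p_i=1$, and symmetrically $a^{-1}\mult a=1$.

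Next I would show that $L_a$ and $L_{a^{-1}}$ are mutually inverse. Since $a\commu a^{-1}$, associativity of compatible effects \ref{ax:assoc} yields, for every $b\in V$, the chain $L_a(L_{a^{-1}}(b))=a\mult(a^{-1}\mult b)=(a\mult a^{-1})\mult b=1\mult b=b$, and likewise $L_{a^{-1}}\circ L_a=\id$. Both $L_a$ and $L_{a^{-1}}$ preserve positivity by proposition \ref{prop:orderpreserve} (extended to the whole cone by the rescaling already introduced), so $L_a$ is a linear bijection satisfying $L_a(b)\geq 0\iff b\geq 0$; that is, $L_a$ is an order isomorphism. For the homogeneity statement I would then fix $v,w\in C$, note that $L_v(1)=v$ by unitality \ref{ax:unit} so that $L_{v^{-1}}(v)=L_v^{-1}(v)=1$, and set $\Phi:=L_w\circ L_{v^{-1}}$. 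This gives $\Phi(v)=L_w(1)=w$, and $\Phi$ is an order isomorphism as a composite of such.

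I expect the main obstacle to be the bookkeeping of the first step rather than any deep difficulty: one must check that the spectral decomposition available for effects can be promoted to a genuine resolution of the identity $\sum_i p_i=1$ for strictly positive elements of the cone, and that the rescaling used to extend $\&$ beyond $[0,1]_V$ is consistent with all of the axioms invoked (in particular \ref{ax:assoc} and the bilinearity inside $C(a)$). Every later step rests on the single identity $a\mult a^{-1}=1$, so the care really lies in establishing that cleanly.
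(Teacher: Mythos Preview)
Your proposal is correct and follows essentially the same route as the paper: construct $a^{-1}=\sum_i\lambda_i^{-1}p_i$ from a spectral decomposition with $\sum_i p_i=1$, use compatibility and \ref{ax:assoc} to show $L_a$ and $L_{a^{-1}}$ are mutually inverse positive maps, and then transport $v$ to $w$ via $\Phi=L_w\circ L_{v^{-1}}$. The only cosmetic difference is that the paper computes $\Phi(v)=w\mult(v^{-1}\mult v)=w\mult 1=w$ directly rather than passing explicitly through $L_{v^{-1}}(v)=1$; note also that $L_v(1)=v\mult 1=v$ is Proposition~\ref{prop:unitzero} rather than \ref{ax:unit} itself.
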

\begin{proof}
    Originally proved in Ref. \cite{wetering2018characterisation}.

    \noindent For an arbitrary positive element $a$ we can find a spectral decomposition $a=\sum_i \lambda_i p_i$ such that $\lambda_i > 0$, i.e.\ we don't write the zero `eigenvalues'. It is then straightforward to check that $a$ lies in the interior of the positive cone if and only if $\sum_i p_i=1$. In that case we define its \emph{inverse} $a^{-1}=\sum_i \lambda_i^{-1} p_i$. Since the $p_i$ are all compatible and $a$ and $a^{-1}$ are linear combinations of these effects, they are also compatible and we calculate $a\mult a^{-1} = \sum_{i,j} \lambda_i \lambda_j^{-1} p_i \mult p_j = \sum_i \lambda_i \lambda_i^{-1} p_i = \sum_i p_i = 1$ so that $a^{-1}$ is indeed the inverse of $a$ with respect to the sequential product. The multiplication map $L_a(b) := a\mult b$ is positive and has a positive inverse $L_{a^{-1}}$ due to \ref{ax:assoc}: $a^{-1}\mult (a\mult b) = (a^{-1}\mult a)\mult b = 1\mult b = b$. The map $L_a$ is therefore an order isomorphism when $a$ is strictly positive. Now, for $a$ and $b$ strictly positive and hence invertible, define $\Phi: V\rightarrow V$ by $\Phi = L_bL_{a^{-1}}$. As this is a composition of order isomorphisms, it is also an order isomorphism and of course $\Phi(a) = b\mult (a^{-1}\mult a) = b\mult 1 = b$ as desired.
\end{proof}

\section{Proof of self-duality}\label{sec:selfdual}
With homogeneity of $V$ now established, we set our sights on proving self-duality. We do this in a few steps. First we study the lattice of sharp effects in Section~\ref{sec:subsharpeffects}. We then consider properties of the \emph{atoms} of this lattice in Section~\ref{sec:atomeffect}. Then in Section~\ref{sec:coverprop} we establish that this lattice has the \emph{covering property} as defined in Ref.~\cite{alfsen2012state}. The covering property states that for every sharp effect $p$ there is a unique number $r$ called the \emph{rank} of $p$ such that we can write $p=\sum_{i=1}^r p_i$ where the $p_i$ are atomic and orthogonal. Using this definition we can define the rank of a space as equal to the rank of the unit effect. The existence of well-defined ranks of sharp effects allows us to reduce the question of self-duality to that of self-duality in rank 2 spaces. This problem is in turn solved by appealing to the classification result of Ref.\ \cite{ito2017p} that homogeneous spaces of rank 2 are always self-dual, which is done in Section~\ref{sec:subselfdual}.

\subsection{The lattice of sharp effects}\label{sec:subsharpeffects}

\begin{proposition}\label{prop:sharpprop} Let $a\in E$ be any effect and let $p\in E$ be sharp.
    \begin{enumerate}[label=\arabic*., ref=\arabic{counter}.\arabic*]
        \item \label{prop:belowsharp} $a\leq p$ if and only if $p\mult a = a\mult p = a$ if and only if $p^\perp \mult a = 0$.
        \item \label{prop:abovesharp} $p\leq a$ if and only if $p\mult a = a\mult p = p$.
    \end{enumerate}
\end{proposition}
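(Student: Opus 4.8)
The plan is to prove the three conditions in \ref{prop:belowsharp} equivalent by a short cycle of implications, and then to obtain \ref{prop:abovesharp} almost for free by passing to complements. Throughout, the guiding principle is that $\mult$ is additive only in its \emph{second} argument (\ref{ax:add}) and is not symmetric, so one can never split $p\mult a$ by decomposing $p$ on the left; instead every argument first manufactures a vanishing product and only then recovers symmetry. The two workhorses producing vanishing products are order-preservation in the second slot (proposition \ref{prop:orderpreserve}) together with the sharpness identity $p\mult p^\perp = 0$ (proposition \ref{prop:sharpness}).

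For \ref{prop:belowsharp} I would argue $a\leq p \Rightarrow p^\perp\mult a = 0 \Rightarrow p\mult a = a\mult p = a \Rightarrow a\leq p$. For the first step, from $a\leq p$ and proposition \ref{prop:orderpreserve} we get $p^\perp\mult a \leq p^\perp\mult p$, and since $p$ is sharp this upper bound is $0$ (using \ref{ax:orth} to pass from $p\mult p^\perp = 0$ to $p^\perp\mult p = 0$); as $p^\perp\mult a$ is again an effect, it is forced to be $0$. For the second step, from $p^\perp\mult a = 0$, \ref{ax:orth} gives $a\mult p^\perp = 0$ as well, so $a\commu p^\perp$, whence $a\commu p$ by \ref{ax:compadd}; now additivity in the second slot applied to the decomposition $1 = p + p^\perp$ yields $a = a\mult 1 = a\mult p + a\mult p^\perp = a\mult p$, and compatibility upgrades this to $p\mult a = a\mult p = a$. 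The last step is immediate from proposition \ref{prop:decreasing}: $a = p\mult a \leq p$.

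For \ref{prop:abovesharp} the cleanest route is complementation: $p$ sharp implies $p^\perp$ sharp (again by \ref{ax:orth} and proposition \ref{prop:sharpness}), and $p\leq a \iff a^\perp \leq p^\perp$, so applying the already-proved \ref{prop:belowsharp} to the sharp effect $p^\perp$ and the effect $a^\perp$ turns $p\leq a$ into $p\mult a^\perp = 0$. From there $p\mult a = p\mult 1 - p\mult a^\perp = p$ by additivity and proposition \ref{prop:unitzero}, while \ref{ax:orth} with \ref{ax:compadd} again supplies $a\commu p$, so $a\mult p = p\mult a = p$; the converse direction simply reverses this computation, since $p\mult a = p$ gives $p\mult a^\perp = p\mult 1 - p\mult a = 0$ and hence $a^\perp\leq p^\perp$. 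I expect the only genuinely delicate point to be the bookkeeping forced by non-symmetry: each implication must route through \ref{ax:orth} to symmetrize a zero product and through \ref{ax:compadd} to move a complement across a compatibility before any left-hand decomposition becomes legitimate. Once that discipline is respected, every individual step reduces to a one-line order or additivity computation.
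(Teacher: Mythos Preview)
Your proof is correct and follows essentially the same route as the paper's: for \ref{prop:belowsharp} the paper also argues $a\leq p \Rightarrow p^\perp\mult a=0 \Rightarrow a\commu p \Rightarrow a\mult p=p\mult a=a \Rightarrow a\leq p$, and for \ref{prop:abovesharp} it likewise passes to complements and applies part 1. Your write-up is in fact slightly more explicit than the paper's about the converse direction in \ref{prop:abovesharp} and about exactly where \ref{ax:orth} and \ref{ax:compadd} are invoked, but the underlying argument is the same.
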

\begin{proof}~
    \begin{enumerate}
        \item Suppose $a\leq p$ with $p$ sharp. Then $p^\perp \mult a \leq p^\perp \mult p = 0$ by proposition \ref{prop:sharpness} and \ref{prop:decreasing}. Hence $a\commu p^\perp$ and $a\commu p$ so that $a = a\mult (p+p^\perp) = a\mult p + a\mult p^\perp = a\mult p = p\mult a$. For the other direction we note that $a=p\mult a \leq p$ by \ref{prop:decreasing}.

        \item Suppose $p\leq a$ with $p$ sharp, then $a^\perp \leq p^\perp$ with $p^\perp$ sharp so that by the previous point $a\commu p$ and $p\mult a^\perp = 0$ so that $p=p\mult(a+a^\perp) = p\mult a$. \qedhere
    \end{enumerate}
\end{proof}

\begin{definition}
    For an effect $a\in E$ we let $\ceil{a}$ denote the smallest sharp element above $a$ (when it exists) and $\floor{a}$ the largest sharp element below $a$ (when it exists).
\end{definition}

\begin{proposition}
    The ceiling and the floor exist for any $a$. Moreover, writing $a=\sum_i \lambda_i p_i$ with $1\geq \lambda_i> 0$ and the $p_i$ sharp and orthogonal, then $\ceil{a} = \sum_i p_i$ and $\floor{a}=\ceil{a^\perp}^\perp$.
\end{proposition}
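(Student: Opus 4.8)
The plan is to use the spectral decomposition guaranteed by the preceding corollary: write $a = \sum_i \lambda_i p_i$ with $1 \geq \lambda_i > 0$ and the $p_i$ orthogonal sharp effects, all lying in $C(a)$. Set $q := \sum_i p_i$ and I claim $q = \ceil{a}$. First I would check that $q$ is itself a sharp effect. Since the $p_i$ all lie in $C(a)\cong \R^n$, where the sequential product is the bilinear commutative pointwise product, I can expand $q\mult q = \sum_{i,j} p_i\mult p_j = \sum_i p_i = q$, using $p_i\mult p_j = 0$ for $i\neq j$ and $p_i\mult p_i = p_i$ by proposition \ref{prop:sharpness}; hence $q$ is sharp. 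That $a\leq q$ is immediate from $\lambda_i\leq 1$ and $p_i\geq 0$.

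The heart of the argument is minimality: given any sharp $r$ with $a\leq r$, I must show $q\leq r$. By proposition \ref{prop:belowsharp} the hypothesis $a\leq r$ is equivalent to $r^\perp\mult a = 0$. Expanding with additivity \ref{ax:add} and the scalar rule of proposition \ref{prop:linearity} gives $0 = r^\perp\mult a = \sum_i \lambda_i (r^\perp\mult p_i)$. Each summand is nonnegative, since a sequential product of effects is again an effect, and $\lambda_i > 0$, so because the positive cone is proper a sum of nonnegative terms can vanish only if every term vanishes; thus $r^\perp\mult p_i = 0$ for every $i$, which by proposition \ref{prop:belowsharp} says $p_i\leq r$. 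Summing, $r^\perp\mult q = \sum_i r^\perp\mult p_i = 0$, and since $q$ is sharp, proposition \ref{prop:belowsharp} yields $q\leq r$. This establishes that $\ceil{a}$ exists and equals $\sum_i p_i$.

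For the floor I would exploit the order-reversing involution $s\mapsto s^\perp$. Since $s$ is sharp if and only if $s^\perp$ is sharp, and $s\leq a \iff a^\perp \leq s^\perp$, complementation is an order-reversing bijection between the set of sharp effects below $a$ and the set of sharp effects above $a^\perp$. The latter set has a least element $\ceil{a^\perp}$, which exists by the first part applied to the effect $a^\perp$; an order-reversing bijection carries a least element to a greatest element, so the set of sharp effects below $a$ has a greatest element, namely $(\ceil{a^\perp})^\perp$. Hence $\floor{a}$ exists and equals $\ceil{a^\perp}^\perp$.

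I expect the main obstacle to be the minimality step, and specifically the passage from the single identity $r^\perp\mult a = 0$ to the individual conclusions $p_i\leq r$. The linear expansion of $r^\perp\mult a$ into the nonnegative terms $\lambda_i (r^\perp\mult p_i)$, combined with properness of the cone, is precisely what makes the spectral components decouple; everything else is bookkeeping with proposition \ref{prop:belowsharp} and the complementation duality.
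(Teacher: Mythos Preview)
Your proof is correct and follows essentially the same route as the paper. The only cosmetic difference is in the minimality step: the paper observes directly that $\lambda_i p_i \leq a \leq r$ (since the remaining terms are nonnegative) and then applies proposition~\ref{prop:belowsharp} in the form $r\mult p_i = p_i$, whereas you pass through the equivalent condition $r^\perp\mult a = 0$ and decouple the sum using properness of the cone; both arrive at $p_i\leq r$ by the same proposition, and the floor argument via the order-reversing involution $(\cdot)^\perp$ is identical.
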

\begin{proof}
    Write $a=\sum_i \lambda_i p_i$. Of course $\sum_i p_i$ is an upper bound of $a$. Suppose $a\leq r$ for some sharp $r$. Then $\lambda_i p_i \leq r$, so by proposition \ref{prop:belowsharp} $r\mult (\lambda_i p_i) = \lambda_i p_i$. Using linearity we can rewrite this expression to $\lambda_i (r\mult p_i) = \lambda_i p_i$ so that $r\mult p_i = p_i$. So again by \ref{prop:belowsharp} $p_i\leq r$ and $p_i\commu r$ from which we get $r\mult \sum_i p_i = \sum_i r\mult p_i = \sum_i p_i$ so that also $\sum_i p_i\leq r$ which proves that it is the least upper bound. The other statement now follows because $a\leq b \iff b^\perp \leq a^\perp$.
\end{proof}

As a corollary of the above we also see that $\ceil{\lambda a} = \ceil{a}$ when $1\geq \lambda > 0$ and that $a$ is sharp if and only if $\ceil{a}=a$ or $\floor{a}=a$. We also note that $a\leq b$ implies that $\ceil{a}\leq \ceil{b}$.

\begin{proposition}\label{prop:lattice}
    The sharp effects form a lattice: for two sharp effects $q$ and $p$, their least upper bound $p\vee q$ and greatest lower bound $p\wedge q$ exist. Furthermore the following relation holds between them: $(p\vee q)^\perp = p^\perp \wedge q^\perp$.
\end{proposition}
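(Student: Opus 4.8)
The plan is to construct the join explicitly as a ceiling, obtain the meet by complementation, and then read off the De Morgan identity $(p\vee q)^\perp = p^\perp\wedge q^\perp$ essentially for free from that construction. Concretely, for sharp $p,q$ I would set $p\vee q := \ceil{\frac12(p+q)}$, which makes sense because $\frac12(p+q)\in E$ and the preceding proposition guarantees the ceiling exists. To see this is an upper bound, note that $\frac12 p\leq \frac12(p+q)$ since $\frac12 q\geq 0$, so monotonicity of the ceiling ($a\leq b\Rightarrow \ceil{a}\leq\ceil{b}$) together with the scaling invariance $\ceil{\frac12 p}=\ceil{p}=p$ (the latter because $p$ is sharp) gives $p\leq \ceil{\frac12(p+q)}$, and symmetrically $q\leq \ceil{\frac12(p+q)}$. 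For minimality among sharp upper bounds, suppose $s$ is sharp with $p\leq s$ and $q\leq s$; then $\frac12(p+q)\leq \frac12(s+s)=s$, and since $\ceil{\frac12(p+q)}$ is by definition the \emph{smallest} sharp element above $\frac12(p+q)$, we conclude $\ceil{\frac12(p+q)}\leq s$. Hence $p\vee q$ is genuinely the least sharp upper bound.

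Next I would define the meet by $p\wedge q := (p^\perp\vee q^\perp)^\perp$. This is again sharp: by proposition \ref{prop:sharpness} an effect is sharp exactly when it is orthogonal to its complement, and since $p\mult p^\perp = 0 \iff p^\perp\mult p = 0$ by \ref{ax:orth}, the complement of a sharp effect is sharp, so $(p^\perp\vee q^\perp)^\perp$ is well-defined and sharp. To verify it is the greatest lower bound I would use that complementation is an order-reversing involution ($a\leq b\iff b^\perp\leq a^\perp$): a sharp effect $t$ satisfies $t\leq p$ and $t\leq q$ iff $p^\perp\leq t^\perp$ and $q^\perp\leq t^\perp$, iff $p^\perp\vee q^\perp\leq t^\perp$ (using the join just constructed), iff $t\leq (p^\perp\vee q^\perp)^\perp = p\wedge q$. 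This single chain of equivalences simultaneously shows $p\wedge q$ is a lower bound (take $t=p\wedge q$) and that it dominates every sharp lower bound.

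Finally, the asserted relation is immediate from the definition of the meet: applying it to the complements gives $p^\perp\wedge q^\perp = ((p^\perp)^\perp\vee (q^\perp)^\perp)^\perp = (p\vee q)^\perp$, using that $(p^\perp)^\perp=p$. I do not expect a serious obstacle here, as all the heavy lifting has already been done by the spectral decomposition and the existence and monotonicity of the ceiling; the only points requiring care are the two bookkeeping facts that the complement of a sharp effect is sharp and that the ceiling is scaling-invariant, both of which follow directly from results established above. The genuinely load-bearing step is the minimality argument for $\ceil{\frac12(p+q)}$, and once the join is pinned down the meet and the De Morgan identity are pure order-duality.
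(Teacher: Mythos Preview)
Your proposal is correct and follows essentially the same route as the paper: you construct $p\vee q$ as $\ceil{\frac12(p+q)}$, verify it is the least sharp upper bound via monotonicity and scaling-invariance of the ceiling, and then obtain $p\wedge q$ and the De Morgan law by order-duality under complementation. The paper's argument is slightly terser but identical in substance.
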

\begin{proof}
    We claim that $p\vee q = \ceil{\frac12(p+q)}$. Note that $p\leq p+q$ and thus that $\frac12 p \leq \frac12(p+q)$ so that $p = \ceil{p} = \ceil{\frac12p} \leq \ceil{\frac12(p+q)}$. Similarly we also have $q\leq \ceil{\frac12(p+q)}$ and thus this is an upper bound. Suppose now that $p\leq a$ and $q\leq a$ for some sharp $a$. Then of course also $\frac12(p+q)\leq \frac12(a+a) = a$. Taking the ceiling on both sides then shows that indeed $p\vee q = \ceil{\frac12(p+q)}$.

    To find $p\wedge q$ we note that $(\cdot)^\perp$ is an order-antiautomorphism, and thus that it interchanges joins with meets: $(p\vee q)^\perp = p^\perp\wedge q^\perp$.
\end{proof}

\begin{proposition}\label{prop:sharplattice} Let $a\in E$ be any effect and let $p\in E$ be sharp.
    \begin{enumerate}[label=\arabic*., ref=\arabic{counter}.\arabic*]
        \item \label{prop:meetsharp} $p\mult a = 0$ if and only if $p+a\leq 1$ in which case it is the least upper bound of the two. When $p\mult a = 0$ their sum $p+a$ is sharp if and only if $a$ is also sharp.
        \item \label{prop:joinsharp} If both $a$ and $p$ are sharp and $a$ and $p$ are compatible then $p\mult a$ is sharp and equal to their join: $p\wedge a = p\mult a$.
    \end{enumerate}
\end{proposition}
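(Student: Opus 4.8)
The plan is to handle the two items separately, in both cases reducing everything to the characterisations of the order relation with sharp effects in Proposition \ref{prop:sharpprop} (namely \ref{prop:belowsharp} and \ref{prop:abovesharp}), together with additivity \ref{ax:add} and the idempotence criterion of Proposition \ref{prop:sharpness}.

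For \ref{prop:meetsharp} I would first observe that since $p$ is sharp, so is $p^\perp$, and then read off the equivalence directly from \ref{prop:belowsharp}: applying that characterisation to the sharp effect $p^\perp$ gives $a\leq p^\perp \iff (p^\perp)^\perp\mult a = p\mult a = 0$, and $a\leq p^\perp$ is literally $p+a\leq 1$. To see that $p+a$ is then the least upper bound, let $c$ be any effect with $p\leq c$ and $a\leq c$; I would show $p\leq c-a$, noting first that $c-a$ is again an effect. By \ref{prop:abovesharp} this amounts to checking $p\mult(c-a)=(c-a)\mult p=p$, and by \ref{ax:add} together with $p\mult a = a\mult p = 0$ (recall orthogonality is symmetric by \ref{ax:orth}) and $p\mult c = c\mult p = p$ (from $p\le c$ via \ref{prop:abovesharp}), this is immediate. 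Finally, for the sharpness claim I would expand $(p+a)\mult(p+a)^\perp$ by linearity of $L_{p+a}$; since $(p+a)^\perp = p^\perp - a$ and since $a\leq p^\perp$ forces $a\mult p^\perp = a$ by \ref{prop:belowsharp}, while $p\mult p^\perp = 0$ and $p\mult a = 0$, the cross terms collapse and one is left with $(p+a)\mult(p+a)^\perp = a - a\mult a = a\mult a^\perp$. By Proposition \ref{prop:sharpness} this vanishes exactly when $a$ is sharp.

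For \ref{prop:joinsharp} I set $r = p\mult a$ and first prove $r$ is sharp by verifying $r\mult r = r$ and invoking Proposition \ref{prop:sharpness}. The computation is a careful sequence of reassociations licensed by \ref{ax:assoc}: using $p\commu a$ I would rewrite $(p\mult a)\mult(p\mult a) = p\mult\big(a\mult(p\mult a)\big)$, then inside collapse $a\mult(p\mult a) = (a\mult p)\mult a = (p\mult a)\mult a = p\mult(a\mult a) = p\mult a$ using $a\mult a = a$, and finally $p\mult(p\mult a) = (p\mult p)\mult a = p\mult a$ using $p\mult p = p$. Once $r$ is known to be sharp it is a lower bound of $p$ and $a$ by \ref{prop:decreasing} (and $r = a\mult p$ by compatibility gives $r \le a$); to see it is the greatest such, I would take any sharp $b$ with $b\leq p$ and $b\leq a$, so that \ref{prop:belowsharp} yields $p\mult b = a\mult b = b$, whence $r^\perp\mult b = b - (p\mult a)\mult b = b - p\mult(a\mult b) = b-b = 0$, giving $b\leq r$ again by \ref{prop:belowsharp}. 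Thus $r = p\wedge a$.

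The main obstacle is the idempotence computation in \ref{prop:joinsharp}: each reassociation must be justified by the correct compatibility hypothesis, and one must keep careful track of which factor plays the role of the ``compatible'' element in \ref{ax:assoc} (note that \ref{ax:assoc} reassociates only when the \emph{left} factor is compatible with the middle one, so the rewrites above are organised to always move a factor known to be compatible into that position). The remaining steps are routine bookkeeping with \ref{ax:add} and the sharp-effect order characterisations.
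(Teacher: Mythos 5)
Your overall strategy is the paper's: both items are reduced to Proposition \ref{prop:sharpprop} and the idempotence criterion of Proposition \ref{prop:sharpness}, the least upper bound is obtained by showing $p\le c-a$, sharpness of $p+a$ is reduced to $a\mult a^\perp=0$, and the idempotence of $p\mult a$ is proved by a chain of reassociations via \ref{ax:assoc} (your chain is valid and differs from the paper's only in the order of the rewrites). However, there is a recurring technical flaw: in three places you silently use additivity of $\&$ in its \emph{first} argument, which is not an axiom and is false in general (in $B(H)^{\text{sa}}$ one has $\sqrt{c-a}\,p\,\sqrt{c-a}\neq \sqrt{c}\,p\,\sqrt{c}-\sqrt{a}\,p\,\sqrt{a}$); axiom \ref{ax:add} and the linear extension of $L_a$ give additivity only in the second argument. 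Concretely: (i) the claim $(c-a)\mult p = c\mult p - a\mult p = p$ in the least-upper-bound argument; (ii) the expansion of $(p+a)\mult(p+a)^\perp$ ``by linearity of $L_{p+a}$'', whose cross terms require splitting the \emph{first} factor $p+a$; (iii) the identity $r^\perp\mult b = b - (p\mult a)\mult b$ in the greatest-lower-bound argument.

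Each instance is repairable. For (i) the equality $(c-a)\mult p=p$ is simply not needed: from $p\mult(c-a)=p$ (which does follow from second-argument additivity) you get $p\mult(c-a)^\perp = p\mult 1 - p\mult(c-a)=0$, so $(c-a)^\perp\le p^\perp$ by \ref{prop:belowsharp} applied to the sharp effect $p^\perp$, hence $p\le c-a$; this is essentially what the paper does. For (ii) the expansion is legitimate, but only because $p\mult a=0$ forces $p\commu a$ and hence, by \ref{ax:compadd}, $p+a$ is compatible with $p$, $a$ and $p^\perp$; this lets you move $p+a$ into the second slot before applying \ref{ax:add}, and this justification must be stated (the paper invokes \ref{ax:compadd} explicitly at this point). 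For (iii) just drop the detour through $r^\perp$: your own computation shows $r\mult b=(p\mult a)\mult b = p\mult(a\mult b)=p\mult b = b$, and $r\mult b = b$ already gives $b = r\mult b\le r$ by \ref{prop:decreasing}, which is exactly the direction of \ref{prop:belowsharp} you need. With these repairs your argument coincides with the paper's proof.
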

\begin{proof}Originally proved in Ref. \cite{gudder2002sequential}.
    \begin{enumerate}
        \item $p\mult a = 0$ if and only if $p^\perp\mult a = a$ which by proposition \ref{prop:abovesharp} is true if and only if $a\leq p^\perp = 1-p$ so that indeed $p+a\leq 1$. That $p+a$ is an upper bound of $p$ and $a$ is obvious. Suppose now that $b$ is also an upper bound so that $p\leq b$ and $a\leq b$. 
        We then calculate using proposition \ref{prop:abovesharp} $p = p\mult b = p\mult(b-a + a) = p\mult(b-a) + p\mult a = p\mult(b-a)$ so that $p\leq b-a$ again by \ref{prop:abovesharp}. This gives $p+a \leq b$ so that $p+a$ is indeed the least upper bound. 

        Now to show $p+a$ is sharp if and only if bot $p$ and $a$ are sharp: since $p\mult a = 0$ we have $p\commu a$ and thus also $p\commu p+a$ and $a\commu p+a$ by \ref{ax:compadd}. We calculate $(p+a)\mult (p+a) = p\mult p + 2 p\mult a + a\mult a = p + a\mult a = (p+a) + (a-a\mult a)$. We therefore have $(p+a)\mult (p+a) = p+a$ if and only if $a-a\mult a = 0$ which proves the result by proposition \ref{prop:sharpness}.

        \item Suppose both $p$ and $a$ are sharp and that $p\commu a$. We calculate:
         $$(p\mult a)\mult (p\mult a) = (p\mult a)\mult(a\mult p) =p\mult(a\mult(a\mult p)) = p\mult (a\mult p) = p\mult (p\mult a) = p\mult a$$
        where we have used that $a\commu a\mult p$ and $p\commu a\mult p$ by \ref{ax:compmult}. Hence $p\mult a$ is sharp. It is a lower bound of $p$ and $a$ by \ref{prop:decreasing}. Suppose $b\leq p, a$ is also a lower bound. We calculate $p\mult a = p\mult (a-b + b) = p\mult (a-b) + p\mult b = p\mult(a-b) + b \geq b$, where we have used that $p\mult b = b$ as a consequence of proposition \ref{prop:belowsharp}.\qedhere
    \end{enumerate}
\end{proof}

\begin{lemma} \label{lem:ceilzero}
    Let $a,b\in E$. If $b\mult a = 0$ then $b\mult \ceil{a} = 0$.
\end{lemma}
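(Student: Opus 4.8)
The plan is to reduce the statement to the behaviour of the map $b\mult(\cdot)$ on the individual sharp components of a spectral decomposition of $a$. First I would invoke the spectral decomposition established above: write $a=\sum_i \lambda_i p_i$ with each $\lambda_i>0$ and the $p_i$ orthogonal sharp effects, so that the ceiling is exactly $\ceil{a}=\sum_i p_i$. The guiding idea is that the single equation $b\mult a=0$ should already force $b\mult p_i=0$ for every spectral projection separately, after which additivity delivers $b\mult\ceil{a}=0$ immediately.

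To make this precise I would use that the left-multiplication map $L_b(\cdot)=b\mult(\cdot)$ extends linearly to all of $V$, as noted just after Proposition \ref{prop:linearity}. Applying $L_b$ to the decomposition gives
\[
0 = b\mult a = L_b\Big(\sum_i \lambda_i p_i\Big) = \sum_i \lambda_i\,(b\mult p_i).
\]
Now each $b\mult p_i$ is again an effect, hence nonnegative, and each $\lambda_i$ is strictly positive. The key step is the order-theoretic observation that a finite sum of positive elements equal to $0$ must have every summand vanish: fixing an index $j$ one has $0\leq \lambda_j(b\mult p_j)\leq \sum_i \lambda_i(b\mult p_i)=0$, so $\lambda_j(b\mult p_j)=0$ and therefore $b\mult p_j=0$.

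With $b\mult p_i=0$ for all $i$ in hand, I would finish by additivity \ref{ax:add}: $b\mult\ceil{a} = b\mult\sum_i p_i = \sum_i b\mult p_i = 0$, as desired. I do not expect a genuine obstacle here; the only point requiring a little care is the positivity-of-summands argument, which rests on $b\mult p_i\geq 0$ (guaranteed since $b$ and $p_i$ are effects) together with the order structure, rather than on any special feature of the sequential product. An alternative route would sidestep the spectral decomposition by passing from $b\mult a=0$ to $a\mult b=0$ via \ref{ax:orth} and reasoning through compatibility, but the spectral argument is by far the most transparent.
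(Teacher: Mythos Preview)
Your argument is correct and is exactly the approach taken in the paper: write $a=\sum_i\lambda_i p_i$ with $\lambda_i>0$, use linearity of $L_b$ to get $\sum_i\lambda_i(b\mult p_i)=0$, conclude $b\mult p_i=0$ from positivity, and finish with $\ceil{a}=\sum_i p_i$. Your write-up simply makes the positivity-of-summands step explicit.
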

\begin{proof}
    Write $a= \sum_i \lambda_i p_i$. If $b\mult a = 0 = \sum_i \lambda_i b\mult p_i$, then we must have $b\mult p_i=0$ for all $p_i$. Since $\ceil{a}=\sum_i p_i$ the claim follows.
\end{proof}

\begin{lemma}\label{lem:ceilceil}
    Let $p\in E$ be sharp and $a\in E$ arbitrary, then $\ceil{p\mult a} = \ceil{p\mult \ceil{a}}$.
\end{lemma}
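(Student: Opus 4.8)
The plan is to prove the equality by establishing the two inclusions separately, with essentially all the content lying in one of them. Since $a\leq\ceil{a}$, order preservation (Proposition \ref{prop:orderpreserve}) gives $p\mult a\leq p\mult\ceil{a}$, and because taking ceilings is monotone (the remark that $a\leq b$ implies $\ceil{a}\leq\ceil{b}$) this immediately yields $\ceil{p\mult a}\leq\ceil{p\mult\ceil{a}}$. This is the routine half; the real work is the reverse inclusion $\ceil{p\mult\ceil{a}}\leq\ceil{p\mult a}$.

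For the reverse inclusion it suffices to show $p\mult\ceil{a}\leq\ceil{p\mult a}$, since applying the monotone and idempotent ceiling operation to both sides then gives $\ceil{p\mult\ceil{a}}\leq\ceil{\ceil{p\mult a}}=\ceil{p\mult a}$. Set $s:=\ceil{p\mult a}$, which is sharp, and $r:=s^\perp$. Because $p\mult a\leq s$, Proposition \ref{prop:belowsharp} gives $r\mult(p\mult a)=0$. I would then consider the composite map $T:=L_r\circ L_p:V\to V$, obtained by extending the left-multiplications $L_p$ and $L_r$ linearly to all of $V$. Each of $L_p,L_r$ is positive (they send effects, and after rescaling the entire positive cone, back into the positive cone), so $T$ is a positive linear map satisfying $T(a)=r\mult(p\mult a)=0$.

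Now take a spectral decomposition $a=\sum_i\lambda_i p_i$ with $\lambda_i>0$ and the $p_i$ orthogonal and sharp. Then $0=T(a)=\sum_i\lambda_i\,T(p_i)$ is a sum of elements $T(p_i)\geq 0$ weighted by strictly positive coefficients; since the positive cone of an order unit space is proper (the order is antisymmetric), each summand must vanish, i.e.\ $T(p_i)=0$ for every $i$. This is exactly the mechanism used in the proof of Lemma \ref{lem:ceilzero}, now applied to the composite $T$ rather than to a single left-multiplication. Consequently $T(\ceil{a})=T(\sum_i p_i)=\sum_i T(p_i)=0$, that is $r\mult(p\mult\ceil{a})=0$. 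Applying Proposition \ref{prop:belowsharp} once more, with the sharp effect $r^\perp=\ceil{p\mult a}$, converts this into $p\mult\ceil{a}\leq\ceil{p\mult a}$, which is the hard inclusion.

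The only point demanding care is the positivity-and-properness step: one must check that $L_p$ and $L_r$ genuinely map the whole positive cone (not merely the effects) into itself, so that their composite $T$ is positive, and that a cone-combination of positive vectors with strictly positive coefficients can vanish only if every term vanishes. Both are routine here — positivity of left-multiplication follows from the rescaling construction noted after Proposition \ref{prop:linearity}, and antisymmetry of the order is built into the definition of an order unit space — so no genuine obstacle remains once the argument is organised through the single positive map $T=L_r\circ L_p$. (Note that sharpness of $p$ is never actually invoked; the same argument works for arbitrary $p\in E$.)
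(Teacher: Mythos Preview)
Your proof is correct and in fact slightly cleaner than the paper's. The paper establishes the hard inclusion by first arguing that $\ceil{p\mult a}\leq p$ (using $\ceil{p}=p$, hence sharpness of $p$), which makes $\ceil{p\mult a}^\perp$ and $p$ compatible; it then invokes the associativity axiom \ref{ax:assoc} to rewrite $\ceil{p\mult a}^\perp\mult(p\mult a)$ as $(\ceil{p\mult a}^\perp\mult p)\mult a$, and only at that point applies Lemma~\ref{lem:ceilzero} to the single effect $\ceil{p\mult a}^\perp\mult p$. Your argument bypasses this compatibility-and-associativity manoeuvre entirely by running the positivity argument behind Lemma~\ref{lem:ceilzero} directly for the composite positive linear map $T=L_r\circ L_p$. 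The gain is twofold: you avoid axiom \ref{ax:assoc} altogether, and as you correctly observe, sharpness of $p$ is never used, so the statement holds for arbitrary $p\in E$. The paper's route, by contrast, genuinely needs $p$ sharp to obtain the compatibility that licenses the associativity step.
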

\begin{proof}
    First of all $p\mult a \leq p\mult \ceil{a}$ so that $\ceil{p\mult a}\leq \ceil{p\mult \ceil{a}}$. It suffices therefore to prove the other inequality. Because $p\mult a \leq p\mult 1 = p$ we also have $\ceil{p\mult a}\leq \ceil{p}=p$ implying $\ceil{p\mult a}^\perp\mult p =0$ so that $\ceil{p\mult a}^\perp$ and $p$ are compatible. 
    Now because $p\mult a \leq \ceil{p\mult a}$ we can use proposition \ref{prop:belowsharp} to write $0=\ceil{p\mult a}^\perp \mult (p\mult a) = (\ceil{p\mult a}^\perp\mult p)\mult a = (\ceil{p\mult a}^\perp \mult p)\mult \ceil{a} = \ceil{p\mult a}^\perp\mult (p\mult \ceil{a})$ where we have used lemma \ref{lem:ceilzero} to replace $a$ with $\ceil{a}$. Since $\ceil{p\mult a}^\perp\mult (p\mult \ceil{a}) = 0$ we use \ref{prop:belowsharp} again to conclude $p\mult \ceil{a}\leq \ceil{p\mult a}$ so that indeed $\ceil{p\mult \ceil{a}}\leq \ceil{p\mult a}$.
\end{proof}

\subsection{Atomic effects}\label{sec:atomeffect}

\begin{definition}
    An effect $p\in E$ is \emph{atomic} when $p$ is a nonzero sharp effect and if for all $a\in E$ with $a\leq p$ we have $a=\lambda p$ for some $\lambda\in[0,1]$.
\end{definition}

The name atomic comes from the fact that in the lattice of sharp effects, the atomic effects are the smallest nonzero elements. It turns out that the lattice of effects in a finite-dimensional sequential product space is \emph{atomic} meaning that any sharp effect can be written as a sum of atomic effects:

\begin{proposition}
    Every sharp effect can be written as a sum of orthogonal atomic effects.
\end{proposition}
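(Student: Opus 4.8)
The plan is to prove the statement by induction on the dimension of $V$, using the spectral decomposition machinery and the lattice structure established above. The key observation is that any nonzero sharp effect $p$ either is already atomic, or properly contains a nonzero sharp effect strictly below it; the latter case must eventually terminate because each descent strictly decreases a natural dimension-like quantity.

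First I would argue that every nonzero sharp effect $p$ dominates some atomic effect. If $p$ is itself atomic we are done; otherwise there is an effect $a\leq p$ that is not a scalar multiple of $p$. Taking a spectral decomposition $a=\sum_i \lambda_i p_i$ with the $p_i$ sharp, orthogonal and compatible with $a$, at least one $p_i$ satisfies $p_i\leq \ceil{a}\leq p$ and $p_i\neq p$, so we obtain a nonzero sharp effect $q$ with $q\leq p$ and $q\neq p$. Since the sharp effects below $p$ that are distinct from $p$ form a set of strictly smaller spectral complexity, iterating this descent — or more cleanly, choosing $q$ minimal among nonzero sharp effects below $p$ — produces an atom $e\leq p$. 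To make ``minimal'' rigorous I would appeal to finite-dimensionality: an infinite strictly decreasing chain of nonzero sharp effects $p > q_1 > q_2 > \cdots$ cannot exist, since by proposition \ref{prop:abovesharp} a strict inequality of sharp effects forces the difference to be a nonzero positive element, and one can use the states of $V$ (which order-separate elements) to see that such a chain is eventually constant.

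With the existence of atoms below any nonzero sharp effect in hand, I would complete the induction. Given a nonzero sharp $p$, pick an atom $e\leq p$. By proposition \ref{prop:belowsharp} we have $e\mult p = e$ and $e\commu p$, and I claim $p - e$ is again a sharp effect orthogonal to $e$. Indeed, since $e\leq p$ and both are sharp and compatible, the complement relation combined with proposition \ref{prop:meetsharp} (applied to $e$ and the effect $p-e$, noting $e\mult(p-e)=e\mult p - e\mult e = e - e = 0$) shows that $e + (p-e) = p$ is sharp exactly when $p-e$ is sharp; since $p$ is sharp, $p-e$ is sharp, and $e\mult(p-e)=0$ gives orthogonality. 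Now $p - e$ is a sharp effect of strictly smaller rank, so by the induction hypothesis it decomposes as a sum of orthogonal atoms $e_2,\dots,e_r$. All of these are orthogonal to $e$ because each $e_j\leq p-e\leq e^\perp$, whence $e\mult e_j = 0$ by proposition \ref{prop:belowsharp}. Thus $p = e + e_2 + \cdots + e_r$ is the desired decomposition into pairwise orthogonal atoms.

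The main obstacle I anticipate is establishing the well-foundedness needed to extract an atom, i.e.\ ruling out infinite strictly descending chains of sharp effects, and showing that passing from $p$ to $p-e$ genuinely reduces a well-defined complexity measure so the induction is sound. The cleanest route is to fix a faithful state or an inner product and track a strictly decreasing nonnegative integer or real quantity (such as the number of atoms in a spectral decomposition, or the trace-like value $\omega(p)$ under a suitable separating family of states) under each descent step; verifying that orthogonal sums of atoms behave additively with respect to this quantity is the technical heart, and it leans on proposition \ref{prop:meetsharp} to guarantee that sums of orthogonal sharp effects remain sharp.
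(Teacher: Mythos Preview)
Your approach is essentially the paper's: if $p$ is not atomic, use a spectral decomposition of some $a\leq p$ with $a\neq\lambda p$ to produce a strictly smaller nonzero sharp $q\leq p$, then recurse on $q$ and $p-q$. The paper organises it as a single descent rather than ``first find an atom, then peel it off,'' but the content is the same.

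Where you diverge is in the termination argument, which you flag as your main obstacle. Your appeal to ``strictly smaller rank'' is circular (rank is only defined \emph{after} this proposition, via the covering property), and your sketch involving descending chains and states does not actually terminate a chain---a strictly decreasing bounded sequence of reals need not be eventually constant. The paper handles this in one line: orthogonal nonzero sharp effects are linearly independent (if $\sum_i\lambda_i p_i=0$ with the $p_i$ orthogonal, apply $L_{p_j}$ to get $\lambda_j p_j=0$), so in a finite-dimensional $V$ the recursion can produce at most $\dim V$ pieces. That single observation is the missing ingredient that makes your induction well-founded without needing rank, faithful states, or any auxiliary complexity measure.
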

\begin{proof}
    Let $p$ be sharp and suppose it is not atomic, then we can find $0\leq a\leq p$ such that $a\neq \lambda p$ for any $\lambda\in [0,1]$. Write $a= \sum_i \lambda_i q_i$ where the $q_i$ are sharp and orthogonal. Then $\lambda_i q_i\leq p$ and thus also $\ceil{\lambda_i q_i} = q_i\leq \ceil{p}=p$. If all the $q_i$ are equal to $p$, then $a$ is a multiple of $p$, so at least one of the $q_i$ is strictly smaller than $p$. We can repeat this process for $q_i$ and $p-q_i$, getting a sequence of nonzero orthogonal sharp effects that sum up to $p$. Since the space is finite-dimensional and orthogonal effects are linearly independent this process must stop after a finite amount of steps in which case we are left with atomic effects.
\end{proof}
\begin{corollary}\label{cor:spectralatomic}
    Every $a\in V$ can be written as $a=\sum_i \lambda_i p_i$ where the $p_i$ are orthogonal sharp atomic effects.
\end{corollary}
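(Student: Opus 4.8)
The plan is to first settle the statement for effects, where I can simply chain the two decomposition results already established, and then to bootstrap from effects to an arbitrary $a\in V$ by an affine rescaling that uses the strong unit.

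For an effect $a\in E$, the spectral decomposition gives $a=\sum_i \lambda_i p_i$ with the $p_i$ orthogonal sharp effects and $\lambda_i>0$. Each $p_i$ need not be atomic, but the preceding proposition lets me refine it as $p_i=\sum_j q_{ij}$ with the $q_{ij}$ orthogonal atomic effects, yielding $a=\sum_{i,j}\lambda_i q_{ij}$. The only point that needs an argument is that the whole family $\{q_{ij}\}$ is pairwise orthogonal: within a fixed block $i$ this is given, so I only need orthogonality across distinct blocks. To see this, fix $i\neq i'$; since $p_i \mult p_{i'}=0$ and both are sharp, proposition \ref{prop:belowsharp} gives $p_{i'}\leq p_i^\perp$, hence $q_{i'j'}\leq p_i^\perp$, and then $p_i \mult q_{i'j'}\leq p_i \mult p_i^\perp = 0$ by proposition \ref{prop:orderpreserve} together with sharpness. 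Finally, writing $p_i-q_{ij}\geq 0$ and using that the linearly extended sequential product preserves positivity, I obtain $q_{ij}\mult q_{i'j'}\leq p_i\mult q_{i'j'}=0$, so $q_{ij}\mult q_{i'j'}=0$ as desired (and symmetric by \ref{ax:orth}).

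For general $a\in V$ the strong unit supplies an $n\in\N$ with $-n1\leq a\leq n1$. Setting $b:=\tfrac{1}{4n}(a+2n1)$ gives $\tfrac14 1\leq b\leq \tfrac34 1$, so $b$ is a strictly positive effect. Applying the effect case to $b$ produces $b=\sum_i \mu_i q_i$ with the $q_i$ orthogonal atomic and all $\mu_i>0$; moreover, because $b$ is strictly positive, $\sum_i q_i=\ceil{b}=1$ exactly as in the proof of proposition \ref{prop:homogen}. Substituting back,
\[
a=4n\,b-2n1=\sum_i 4n\mu_i\, q_i-2n\sum_i q_i=\sum_i \lambda_i q_i,\qquad \lambda_i:=2n(2\mu_i-1),
\]
which is the required decomposition, now allowing real (possibly negative) coefficients.

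I do not expect a genuine obstacle here, as the argument is essentially bookkeeping on top of the earlier structural results. The one step that needs care is the cross-block orthogonality of the atomic effects, and the one genuinely new ingredient is the rescaling $b=\tfrac{1}{4n}(a+2n1)$ together with the observation that strict positivity forces $\sum_i q_i=1$; this is precisely what lets the shift $-2n1$ be absorbed into the sum rather than surviving as a separate term.
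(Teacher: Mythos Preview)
Your overall strategy matches the paper's: decompose into orthogonal sharp effects, then refine each sharp effect into atoms. Your treatment of the passage from effects to arbitrary $a\in V$ via the rescaling $b=\tfrac{1}{4n}(a+2n1)$ is more careful than the paper, which simply asserts the general case in one line; your argument here is correct.

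There is, however, a genuine gap in your cross-block orthogonality argument. You write ``using that the linearly extended sequential product preserves positivity, I obtain $q_{ij}\mult q_{i'j'}\leq p_i\mult q_{i'j'}$''. This step implicitly uses either linearity or monotonicity of $\&$ in its \emph{first} argument, neither of which is available: only the second argument is additive (axiom~\ref{ax:add}) and order-preserving (proposition~\ref{prop:orderpreserve}), and $a\mapsto a\mult b$ is in general not monotone (already for $a\mult b=\sqrt{a}b\sqrt{a}$ on $B(H)^{\sa}$). So the inequality $q_{ij}\mult q_{i'j'}\leq p_i\mult q_{i'j'}$ is not justified by what you cite.

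The fix is immediate and uses only monotonicity in the second argument. From $q_{ij}\leq p_i$ and proposition~\ref{prop:belowsharp} you get $p_i^\perp\mult q_{ij}=0$, hence by \ref{ax:orth} also $q_{ij}\mult p_i^\perp=0$. Since $q_{i'j'}\leq p_{i'}\leq p_i^\perp$, proposition~\ref{prop:orderpreserve} (monotonicity in the second argument) gives
\[
q_{ij}\mult q_{i'j'}\ \leq\ q_{ij}\mult p_i^\perp\ =\ 0,
\]
which is the desired orthogonality. With this correction your proof is complete and, apart from being more explicit, coincides with the paper's.
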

\begin{proof}
    For every $a\in V$ we can find a spectral decomposition in terms of orthogonal sharp effects. The previous proposition shows that these sharp effects can be further decomposed into atomic effects.
\end{proof}

\noindent Recall that the norm in an order unit space is defined as ${\norm{a}:=\inf\{r~;~-r 1 \leq a \leq r1\}}$.

\begin{lemma} \label{lem:atomicnorm}
    A non-zero effect $p$ is atomic if and only if we have $p\mult a = \norm{p\mult a}p$ for all $a\in E$.
\end{lemma}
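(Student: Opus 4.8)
The plan is to prove the two implications separately, with the spectral decomposition of an effect into orthogonal sharp effects (Corollary~\ref{cor:spectralatomic}) as the main engine, together with one auxiliary observation that I would record first: \emph{every nonzero sharp effect $q$ has $\norm q = 1$}. To see this, note that by the Archimedean property the infimum defining the norm is attained, so $q \leq \norm q\, 1$. Applying monotonicity (Proposition~\ref{prop:orderpreserve}) with $c = q$ and using sharpness $q\mult q = q$ gives $q = q\mult q \leq q\mult(\norm q\,1) = \norm q\, q$, whence $(\norm q - 1)q \geq 0$. As $q \geq 0$ and $q \neq 0$, this forces $\norm q \geq 1$, and since $q$ is an effect $\norm q \leq 1$, so $\norm q = 1$.

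For the forward implication, suppose $p$ is atomic; in particular $p$ is nonzero and sharp, so $\norm p = 1$. For any $a \in E$ we have $p\mult a \leq p$ by Proposition~\ref{prop:decreasing}, and $p\mult a$ is again an effect, so atomicity yields $p\mult a = \lambda p$ for some $\lambda \in [0,1]$. Taking norms gives $\norm{p\mult a} = \lambda\norm p = \lambda$, and substituting back shows $p\mult a = \norm{p\mult a}\,p$, as required.

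For the converse, the substantive direction, assume $p\mult a = \norm{p\mult a}\,p$ for all $a \in E$, with $p \neq 0$. I would write a spectral decomposition $p = \sum_i \lambda_i p_i$ with $\lambda_i > 0$ and the $p_i$ orthogonal sharp (Corollary~\ref{cor:spectralatomic}). Since orthogonal sharp effects satisfy $p_i \mult p_j = \delta_{ij}p_i$, expanding bilinearly gives $p\mult p = \sum_i \lambda_i^2 p_i$. On the other hand the hypothesis at $a = p$ says $p\mult p = \norm{p\mult p}\,p = \norm{p\mult p}\sum_i \lambda_i p_i$. Because orthogonal sharp effects are linearly independent, comparing coefficients forces $\lambda_i^2 = \norm{p\mult p}\,\lambda_i$, i.e.\ every $\lambda_i$ equals the common value $\lambda := \norm{p\mult p}$. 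Hence $p = \lambda q$ with $q := \sum_i p_i$ sharp. Now $p\mult p = \lambda^2 q$, so on one hand $\norm{p\mult p} = \lambda$ and on the other $\norm{p\mult p} = \lambda^2\norm q = \lambda^2$ using the auxiliary fact $\norm q = 1$; therefore $\lambda = \lambda^2$ and, as $\lambda \neq 0$, $\lambda = 1$. Thus $p = q$ is sharp.

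It then remains to verify the defining property of atomicity: given $b \in E$ with $b \leq p$, sharpness of $p$ and Proposition~\ref{prop:belowsharp} give $p\mult b = b$, while the hypothesis gives $p\mult b = \norm{p\mult b}\,p = \norm b\, p$; hence $b = \norm b\, p$ is a multiple of $p$, so $p$ is atomic. The main obstacle is the converse direction, and specifically the step that upgrades the mere scaling relation into sharpness: the key realisation is that feeding $a = p$ into the hypothesis, combined with the spectral expansion $p\mult p = \sum_i \lambda_i^2 p_i$, pins all eigenvalues to a single value, after which the norm computation (relying on $\norm q = 1$ for the nonzero sharp $q = \ceil p$) forces that value to be $1$.
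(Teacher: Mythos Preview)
Your proof is correct. The auxiliary fact that nonzero sharp effects have norm $1$ and the forward implication coincide with the paper's argument essentially verbatim.

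For the converse you take a somewhat different route. The paper first observes $p = p\mult\ceil{p} = \norm{p\mult\ceil{p}}\,p = \norm{p}\,p$ to get $\norm{p}=1$, then uses a spectral expansion only to deduce $\norm{p^2}=1$ and hence $p\mult p = p$. You instead feed the spectral decomposition directly into the hypothesis at $a=p$ and compare coefficients to force all eigenvalues equal, then use the norm identity $\lambda=\lambda^2$ to conclude $\lambda=1$. Both reach sharpness of $p$ with comparable effort. Where your argument is cleaner is the final atomicity step: once $p$ is sharp you simply combine $b = p\mult b$ from Proposition~\ref{prop:belowsharp} with the hypothesis to get $b=\norm{b}\,p$ in one line, whereas the paper first shows that any sharp $q\leq p$ must equal $p$ and then reduces the general case to this via another spectral decomposition and the ceiling map. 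Your shortcut is a genuine simplification.

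One minor remark: you cite Corollary~\ref{cor:spectralatomic} (decomposition into \emph{atomic} sharp effects), but you only ever use that the $p_i$ are orthogonal and sharp; the earlier spectral corollary in Section~\ref{sec:prelim} would already suffice. This is not an error, just a slightly stronger reference than needed.
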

\begin{proof}
    First we establish that the norm of any non-zero sharp effect is equal to $1$. Let $q$ be sharp. We see that $q = q\mult q \leq q \mult(\norm{q} 1) = \norm{q} q\mult 1 = \norm{q} q$ so that $\norm{q}\geq 1$. But since $q\leq 1$ we also have $\norm{q}\leq 1$.
    
    Suppose $p$ is atomic. Because $0\leq p\mult a \leq p$ we must have $p\mult a = \lambda p$ for some $0\leq \lambda \leq 1$ so that $\norm{p\mult a} = \lambda \norm{p} = \lambda$ because $p$ is sharp.

    For the other direction first note that $p= p\mult \ceil{p} = \norm{p\mult \ceil{p}} p = \norm{p}p$ so that necessarily $\norm{p}=1$ (since $p\neq 0$). By writing $p$ as a linear combination of sharp effects we see that then also $\norm{p^2}=1$. Now $p\mult p = \norm{p^2} p = p$ so that $p$ is sharp. Let $q\leq p$ be non-zero sharp. Then $q=p\mult q = \norm{p\mult q}p=p$ (using again that $\norm{q}=1$ because $q$ is sharp) so there are no non-trivial sharp effects below $p$. Now if $a=\sum_i \lambda_i q_i$ lies below $p$ we see that $\lambda_iq_i\leq p$ so that $\ceil{\lambda_i q_i} = q_i\leq \ceil{p}=p$ so that $q_i=p$ and thus $a=\lambda p$. Since all $a\in E$ can be written in this way we conclude that this holds for all $a\leq p$, so that $p$ is indeed atomic.
\end{proof}
\begin{corollary}
    The set of atomic effects is closed in the norm topology.
\end{corollary}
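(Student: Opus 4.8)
The plan is to reduce this topological statement to the algebraic characterisation of atomic effects furnished by Lemma \ref{lem:atomicnorm}: a non-zero effect $p$ is atomic precisely when $p\mult a = \norm{p\mult a}\,p$ for all $a\in E$. The advantage of this reformulation is that, unlike the original definition of atomicity, it is expressed purely through the sequential product and the norm, both of which interact well with limits. Accordingly, I would take an arbitrary sequence $(p_n)$ of atomic effects converging in norm to some element $p$, and show that $p$ is again a non-zero effect satisfying this identity.

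First I would check that the limit $p$ is in fact an effect and is non-zero. That $p\in E$ follows because the effect set $[0,1]_V$ is norm-closed, being the intersection of the (norm-closed) positive cone with the (norm-closed) set $\{a\in V : a\leq 1\}$. For non-vanishing, recall from the proof of Lemma \ref{lem:atomicnorm} that every non-zero sharp effect has norm $1$; since each $p_n$ is atomic, hence sharp, we have $\norm{p_n} = 1$, and by continuity of the norm $\norm{p} = \lim_n \norm{p_n} = 1$, so $p\neq 0$.

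Next, for a fixed $a\in E$, I would pass to the limit in the identity $p_n\mult a = \norm{p_n\mult a}\,p_n$. By the continuity axiom \ref{ax:cont} the map $x\mapsto x\mult a$ is norm-continuous, so $p_n\mult a \to p\mult a$; composing with the (continuous) norm gives $\norm{p_n\mult a}\to \norm{p\mult a}$; and the product of the convergent scalar sequence $\norm{p_n\mult a}$ with the convergent vector sequence $p_n$ converges to $\norm{p\mult a}\,p$. Equating the two limits yields $p\mult a = \norm{p\mult a}\,p$ for every $a\in E$, so Lemma \ref{lem:atomicnorm} tells us that $p$ is atomic, which completes the argument.

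I expect no serious obstacle here once Lemma \ref{lem:atomicnorm} is in hand; the real content sits in that lemma, which is precisely what turns atomicity into a closed condition. The only points requiring a little care are the two preliminary facts invoked above — that the effect set is norm-closed and that sharp effects have unit norm — and the observation that the continuity of $x\mapsto x\mult a$ is exactly axiom \ref{ax:cont} rather than something needing a separate proof.
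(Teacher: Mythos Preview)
Your proof is correct and follows essentially the same approach as the paper: both use Lemma~\ref{lem:atomicnorm} to rewrite atomicity as the identity $p\mult a = \norm{p\mult a}\,p$ and then pass to the limit via axiom~\ref{ax:cont}. Your version is in fact slightly more careful, since you explicitly verify that the limit $p$ is a non-zero effect (a hypothesis of Lemma~\ref{lem:atomicnorm}) whereas the paper leaves this implicit.
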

\begin{proof}
    Let $p_n\rightarrow p$ be a norm converging set of atomic effects $p_n$. We need to show that $p$ is also atomic. As a result of the previous lemma we have $p_n\mult a = \norm{p_n \mult a} p_n$ for all effects $a$. By continuity of $\&$ (i.e.\ axiom \ref{ax:cont}) we have $p_n\mult a \rightarrow p\mult a$ so that $p\mult a = \lim p_n \mult a = \lim \norm{p_n\mult a}p_n = \norm{p\mult a} p$. Using the previous lemma again we conclude that $p$ is indeed atomic.
\end{proof}

\begin{proposition}\label{prop:atompreservation}
    Let $a\in E$ be arbitrary and $p\in E$ be atomic, then $a\mult p$ is proportional to an atomic effect.
\end{proposition}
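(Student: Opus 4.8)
The plan is to realise $a\mult p$ as the image of $p$ under the left-multiplication map $L_a$ and to exploit that, for invertible $a$, this map is an order isomorphism which must preserve atomicity. The first ingredient is the observation that the atomic effects are, up to positive scaling, exactly the generators of the extreme rays of the positive cone $V^+$. Indeed, if $q$ is atomic then $0\leq b\leq q$ forces $b=\lambda q$, so $\R_{\geq 0}q$ is a one-dimensional face of $V^+$; conversely, given any point $v$ on an extreme ray, a spectral decomposition $v=\sum_i \lambda_i q_i$ into orthogonal atomic effects (corollary \ref{cor:spectralatomic}) has every summand $\lambda_i q_i\leq v$ lying in the ray, and orthogonality then forces a single nonzero term, so $v$ is proportional to an atomic effect.

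First I would settle the case where $a$ is strictly positive. By (the proof of) proposition \ref{prop:homogen}, $L_a(b)=a\mult b$ is then a bijective order isomorphism with inverse $L_{a^{-1}}$, so it maps $V^+$ onto itself and hence sends extreme rays to extreme rays. Applying this to the atomic effect $p$, the ray $\R_{\geq 0}p$ is extreme, so its image $\R_{\geq 0}(a\mult p)$ is extreme as well; by the characterisation above, $a\mult p = L_a(p)$ is proportional to an atomic effect.

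For an arbitrary effect $a\in E$, which need not be invertible, I would pass to the limit. Put $a_\epsilon=(1-\epsilon)a+\epsilon 1$; this is a strictly positive effect for $\epsilon\in(0,1]$ and $a_\epsilon\to a$ as $\epsilon\to 0$. By the invertible case each $a_\epsilon\mult p = \lambda_\epsilon r_\epsilon$ with $r_\epsilon$ atomic and $\lambda_\epsilon=\norm{a_\epsilon\mult p}$ (atomic effects have norm one, as in lemma \ref{lem:atomicnorm}). Continuity of the product in its first argument (\ref{ax:cont}) gives $a_\epsilon\mult p\to a\mult p$ and $\lambda_\epsilon\to\lambda:=\norm{a\mult p}$. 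If $\lambda>0$ then $r_\epsilon=(a_\epsilon\mult p)/\lambda_\epsilon$ converges to $(a\mult p)/\lambda$, which is atomic because the set of atomic effects is closed in the norm topology, whence $a\mult p$ is proportional to an atomic effect; if $\lambda=0$ then $a\mult p=0$ and the claim is trivial. I expect the main obstacle to be this non-invertible case: one has to be sure that the scaling constants $\lambda_\epsilon$ converge and that the limit is still atomic, with the degenerate possibility $a\mult p=0$ as the only point requiring separate care.
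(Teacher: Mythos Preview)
Your proof is correct and follows essentially the same route as the paper: order isomorphisms preserve atomicity (you phrase this via extreme rays, the paper via the order-theoretic definition directly), so the invertible case is immediate; then approximate a general $a$ by strictly positive effects and use closedness of the atomic effects together with axiom \ref{ax:cont}. Your treatment is in fact slightly more careful than the paper's, since you explicitly separate out the degenerate case $\norm{a\mult p}=0$, where the normalisation step would otherwise break down.
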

\begin{proof}
    The property that $0\leq a\leq p \implies a=\lambda p$ is determined by the order, so any order isomorphism preserves it. If $a$ is invertible then $L_a: V\rightarrow V$ given by $L_a(b):= a\mult b$ is an order isomorphism, so that $L_a(p)$ must be proportional to an atomic effect. For non-invertible $a$ we write $a_n = a+\frac{1}{n}$, so that $a_n$ is invertible and the sequence $a_n$ converges to $a$. Let $q_n = (a_n\mult p)/\norm{a_n \mult p}$, then all the $q_n$ are atomic. By the continuity condition \ref{ax:cont} we get $a_n\mult p \rightarrow a\mult p$ so that also $\norm{a_n \mult p}\rightarrow \norm{a\mult p}$. The sequence $q_n$ is therefore also convergent and since the set of atomic effects is closed by the previous corollary we conclude that $q_n\rightarrow q=(a\mult p)/\norm{a\mult p}$ is atomic.
\end{proof}

\subsection{The Covering Property}\label{sec:coverprop}

At this point we know that the set of sharp effects forms an atomic lattice, but in fact it has the much stronger \emph{covering property} that allows us to attach a \emph{rank} to each sharp effect: the amount of atomic effects needed to make the effect. To show this we need some results from Alfsen and Shultz \cite{alfsen2012geometry} that unfortunately were proven in a slightly different setting. We will repeat these results with very similar proofs, but adapted to work in the setting of sequential product spaces.

\begin{lemma}[{cf.~\cite[Lemma 8.9]{alfsen2012geometry}}]\label{lem:wedgeminus}
     Let $q$ and $p$ be sharp with $q\leq p$, then $p-q = p\wedge q^\perp$.
\end{lemma}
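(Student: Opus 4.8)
The plan is to identify $p-q$ with the product $p\mult q^\perp$ and then invoke the fact, recorded in proposition \ref{prop:joinsharp}, that the sequential product of two compatible sharp effects is their meet. This route avoids any appeal to bilinearity, which we do not have: the product is linear only in its second argument, so I must take care never to expand a sum in the first slot.

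First I would exploit that $q\leq p$ with $q$ sharp. By proposition \ref{prop:abovesharp} this is equivalent to $q\mult p = p\mult q = q$; in particular $p\mult q = q$, and $q$ and $p$ are compatible, $q\commu p$. Next, using only additivity \ref{ax:add} in the second argument together with unitality, I compute $p\mult q^\perp = p\mult(1-q) = p\mult 1 - p\mult q = p - q$. Thus $p-q$ is exactly the left product of $p$ with $q^\perp$, and this step needs nothing beyond linearity in the second slot.

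It remains to recognise this product as the meet. Since $q$ is sharp so is $q^\perp$ (either directly from the symmetric definition of sharpness, or via proposition \ref{prop:sharpness} together with \ref{ax:orth}), and from $p\commu q$ axiom \ref{ax:compadd} gives $p\commu q^\perp$. Now $p$ and $q^\perp$ are both sharp and compatible, so proposition \ref{prop:joinsharp} applies and tells us that $p\mult q^\perp$ is sharp and equal to $p\wedge q^\perp$. Combining this with the computation of the previous paragraph yields $p-q = p\mult q^\perp = p\wedge q^\perp$, as claimed.

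The main pitfall I would flag is the tempting but illegitimate shortcut of checking that $p-q$ is sharp by expanding $(p-q)\mult(p-q)$ directly: because the product is not linear in the first argument, such an expansion is not justified, and channelling everything through proposition \ref{prop:joinsharp} is precisely what sidesteps the difficulty. A secondary point worth stating explicitly is that the meet in proposition \ref{prop:joinsharp} is the greatest lower bound taken in the lattice of sharp effects, so the equality $p-q = p\wedge q^\perp$ is an identity of sharp effects, which is exactly the content of the lemma.
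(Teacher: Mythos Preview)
Your proof is correct and follows the same route as the paper: show $p\commu q$ (hence $p\commu q^\perp$), compute $p\mult q^\perp = p - q$ by additivity in the second argument, and then invoke proposition~\ref{prop:joinsharp} to identify this with $p\wedge q^\perp$. The paper's argument is just a terser version of yours; your added remarks about not expanding in the first slot and about the meet being taken among sharp effects are accurate clarifications but not extra content.
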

\begin{proof}
    Let $q$ and $p$ be sharp with $q\leq p$, then $p-q$ is sharp and $p\commu q$, so also $p\commu q^\perp$ so that $p-q = p\mult q^\perp = p\wedge q^\perp$ by Proposition~\ref{prop:joinsharp}.
\end{proof}

\begin{lemma}[{cf.~\cite[Theorem 8.32]{alfsen2012geometry}}] Let $p$ be sharp and $a$ arbitrary, then $\ceil{p\mult a} = (\ceil{a}\vee p^\perp)\wedge p$.
\end{lemma}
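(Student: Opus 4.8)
The plan is to first reduce to the case where $a$ is sharp, and then prove the two inequalities separately. By Lemma~\ref{lem:ceilceil} we have $\ceil{p\mult a} = \ceil{p\mult \ceil{a}}$, so writing $q := \ceil{a}$ (which is sharp) it suffices to establish $\ceil{p\mult q} = (q\vee p^\perp)\wedge p$; substituting $q = \ceil{a}$ then recovers the claim since $\ceil{\ceil{a}} = \ceil{a}$. Throughout I write $w := q\vee p^\perp$, and note that by Proposition~\ref{prop:lattice} its complement is $w^\perp = q^\perp \wedge p$, so in particular $w^\perp \leq p$ and $w^\perp \leq q^\perp$.

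For the inequality $\ceil{p\mult q}\leq w\wedge p$ I would establish the two bounds $\ceil{p\mult q}\leq p$ and $\ceil{p\mult q}\leq w$ separately. The first is immediate: $p\mult q\leq p\mult 1 = p$ by Proposition~\ref{prop:decreasing}, and taking ceilings gives $\ceil{p\mult q}\leq \ceil{p} = p$. For the second, by Proposition~\ref{prop:belowsharp} it is enough to verify $w^\perp \mult(p\mult q) = 0$. Since $w^\perp \leq p$ we have $w^\perp \commu p$, so associativity of compatible effects \ref{ax:assoc} gives $w^\perp \mult (p\mult q) = (w^\perp \mult p)\mult q = w^\perp \mult q$; and since $w^\perp \leq q^\perp$ we have $q\mult w^\perp = 0$, hence $w^\perp \mult q = 0$ by \ref{ax:orth}. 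This yields $p\mult q\leq w$ and therefore $\ceil{p\mult q}\leq w$, completing this direction.

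The harder direction is $w\wedge p \leq \ceil{p\mult q}$, and this is where the main work lies. Set $r := \ceil{p\mult q}\leq p$ and let $p' := p - r$, which by Lemma~\ref{lem:wedgeminus} equals the sharp effect $p\wedge r^\perp$. The key computation is $p'\mult q = 0$: since $r\leq p$ the effect $r^\perp$ is compatible with $p$, and $r^\perp \mult (p\mult q) = 0$ (because $p\mult q\leq r$ and \ref{prop:belowsharp}), so \ref{ax:assoc} gives $(r^\perp\mult p)\mult q = 0$ with $r^\perp \mult p = p - r = p'$. From $p'\mult q = 0$ we obtain $q\mult p' = 0$ by \ref{ax:orth} and hence $p'\leq q^\perp$ via \ref{prop:belowsharp}; combined with $p'\leq p$ this gives $p'\leq q^\perp\wedge p = w^\perp$. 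Now let $s := w\wedge p$. Since $p'\leq w^\perp$ we have $p'\mult w = 0$, and as $s\leq w$ order preservation (Proposition~\ref{prop:orderpreserve}) yields $p'\mult s\leq p'\mult w = 0$, so $s\leq (p')^\perp$ by \ref{prop:belowsharp}. Combining $s\leq p$ and $s\leq (p')^\perp$ with Lemma~\ref{lem:wedgeminus} once more gives $s\leq p\wedge (p')^\perp = p - p' = r$, which is exactly the desired inequality.

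I expect the main obstacle to be this second direction, specifically the idea of introducing the auxiliary sharp effect $p' = p - \ceil{p\mult q}$ and recognising that it is orthogonal both to $q$ and to $s = w\wedge p$. Once the orthogonality $p'\leq w^\perp$ is in hand, the conclusion drops out of the complementation identity $p\wedge (p')^\perp = p - p'$ from Lemma~\ref{lem:wedgeminus}. The remaining manipulations are routine applications of compatibility, associativity \ref{ax:assoc}, and the characterisations of the order relative to sharp effects in Proposition~\ref{prop:sharpprop}.
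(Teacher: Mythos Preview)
Your proof is correct and follows essentially the same route as the paper: both reduce to sharp $a$ via Lemma~\ref{lem:ceilceil}, and in the harder direction both hinge on showing that the sharp effect $\ceil{p\mult q}^\perp \wedge p$ (your $p'$) annihilates $q$. The only difference is cosmetic bookkeeping in the lattice manipulations---the paper phrases the first inequality via $w\mult(p\mult q)=p\mult q$ rather than $w^\perp\mult(p\mult q)=0$, and in the second direction it passes through $q\leq r\vee p^\perp$ and subtracts $p^\perp$, whereas you dualise to $p'\leq w^\perp$ and apply Lemma~\ref{lem:wedgeminus} once more---but the underlying argument is the same.
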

\begin{proof}
    Because $\ceil{p\mult a} = \ceil{p\mult \ceil{a}}$ by Lemma~\ref{lem:ceilceil} it suffices to prove this for sharp $a$. We prove the equality by showing that an inequality holds in both directions.

    Since $p^\perp \leq a\vee p^\perp$ we have $p^\perp \commu (a\vee p^\perp)$ by Proposition~\ref{prop:belowsharp} so that in turn $p\commu (a\vee p^\perp)$ by \ref{ax:compadd}.
    We proceed by using \ref{ax:assoc}: $(a\vee p^\perp)\mult (p\mult a) = ((a\vee p^\perp)\mult p)\mult a = p\mult ((a\vee p^\perp)\mult a) = p\mult a$ because $a\vee p^\perp \geq a$. Therefore $p\mult a \leq a\vee p^\perp$ which implies that $\ceil{p\mult a} \leq a\vee p^\perp$. Since also $p\mult a\leq p$ and therefore $\ceil{p\mult a}\leq p$ we conclude that $\ceil{p\mult a}\leq (a\vee p^\perp)\wedge p = (\ceil{a}\vee p^\perp)\wedge p$.

    Now for the other direction: we obviously have $p^\perp\mult (p\mult a) = (p^\perp \mult p)\mult a = 0$ by \ref{ax:compadd} and \ref{ax:assoc} so that by Lemma~\ref{lem:ceilzero} $p^\perp\mult \ceil{p\mult a}=0$. Using Proposition~\ref{prop:belowsharp} we see then that $p\commu \ceil{p\mult a}^\perp$ and therefore by Proposition~\ref{prop:joinsharp} that $\ceil{p\mult a}^\perp\mult p = \ceil{p\mult a}^\perp\wedge p$. Since $p\mult a \leq \ceil{p\mult a}$ we calculate using Proposition~\ref{prop:belowsharp}: $0=\ceil{p\mult a}^\perp\mult (p\mult a) = (\ceil{p\mult a}^\perp \mult p)\mult a = (\ceil{p\mult a}^\perp \wedge p)\mult a$ so that $a\leq (\ceil{p\mult a}^\perp\wedge p)^\perp = \ceil{p\mult a}\vee p^\perp$ by Proposition~\ref{prop:lattice}. 
    Then of course also $a\vee p^\perp \leq \ceil{p\mult a}\vee p^\perp$ and by noting that $\ceil{p\mult a}$ and $p^\perp$ are orthogonal and using Proposition~\ref{prop:meetsharp}: $\ceil{p\mult a}\vee p^\perp = \ceil{p\mult a}+p^\perp$. Bringing the $p^\perp$ to the other side and using Lemma~\ref{lem:wedgeminus} (which applies because $p^\perp\leq a\vee p^\perp$) then gives $(a\vee p^\perp)\wedge p = a\vee p^\perp - p^\perp \leq \ceil{p\mult a}$.
\end{proof}

\begin{proposition}[{cf.~\cite[Proposition 9.7]{alfsen2012geometry}}]\label{prop:coverprop}
     The lattice of sharp effects has the \emph{covering property}: for $q$ atomic, the expression $(q\vee p)\wedge p^\perp = (q\vee p)-p$ is either zero or atomic. In other words: when $q$ does not lie below $p$ then there is no sharp effect lying strictly between $p$ and $q\vee p$.
\end{proposition}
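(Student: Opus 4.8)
The plan is to reduce the whole statement to a single application of the atom-preservation result (Proposition \ref{prop:atompreservation}) through the ceiling formula of the preceding lemma. First I would dispose of the displayed equality in the claim: since $p \leq q\vee p$ and both are sharp, Lemma \ref{lem:wedgeminus} gives at once $(q\vee p)\wedge p^\perp = (q\vee p) - p$, so it suffices to prove that $(q\vee p)\wedge p^\perp$ is either zero or atomic.

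The central observation is that this meet is itself a ceiling. Applying the previous lemma (the analogue of \cite[Theorem 8.32]{alfsen2012geometry}) with the sharp effect $p^\perp$ in the role of its sharp argument and $q$ as the arbitrary effect, and using $\ceil{q}=q$ (as $q$ is sharp) together with $(p^\perp)^\perp=p$, yields
\[
\ceil{p^\perp \mult q} \;=\; (\ceil{q}\vee p)\wedge p^\perp \;=\; (q\vee p)\wedge p^\perp .
\]
Thus the quantity to be analysed is exactly $\ceil{p^\perp \mult q}$, and the problem is moved entirely onto the effect $p^\perp \mult q$.

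Now I would invoke Proposition \ref{prop:atompreservation}: because $q$ is atomic, $p^\perp \mult q$ is either zero or proportional to an atomic effect. In the first case $\ceil{p^\perp\mult q}=\ceil{0}=0$, so the meet vanishes. In the second case write $p^\perp \mult q = \lambda s$ with $s$ atomic and $0<\lambda\leq 1$ — the bound $\lambda\le 1$ holding because $p^\perp\mult q$ is an effect and atomic effects have norm $1$ by Lemma \ref{lem:atomicnorm}. Using $\ceil{\lambda s}=\ceil{s}$ for $0<\lambda\le 1$ and $\ceil{s}=s$ (since $s$ is sharp), we obtain $\ceil{p^\perp\mult q}=s$, which is atomic. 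This establishes the dichotomy. For the geometric reformulation I would note that $r := (q\vee p)-p \le p^\perp$ forces $p\mult r = 0$, so that $q\vee p = p+r$ is an orthogonal sum; when $r$ is atomic there is no sharp effect strictly between $p$ and $q\vee p$, and $r=0$ occurs precisely when $q\leq p$.

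Since both the ceiling formula and Proposition \ref{prop:atompreservation} are already in hand, I do not anticipate a genuine obstacle here; the only subtlety is the bookkeeping of substituting $p^\perp$ and $q$ correctly into the previous lemma so as to identify $(q\vee p)\wedge p^\perp$ with $\ceil{p^\perp\mult q}$, after which the atomic-preservation property and the behaviour of the ceiling under scalar multiples make the conclusion immediate.
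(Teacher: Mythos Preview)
Your proposal is correct and follows essentially the same route as the paper: identify $(q\vee p)\wedge p^\perp$ with $\ceil{p^\perp\mult q}$ via the preceding ceiling lemma, then apply Proposition~\ref{prop:atompreservation} to conclude that $p^\perp\mult q$ is zero or proportional to an atom, and use Lemma~\ref{lem:wedgeminus} for the equality with $(q\vee p)-p$. The only cosmetic difference is that you spell out the scalar bound $\lambda\le 1$ and the behaviour of $\ceil{\cdot}$ under scaling, whereas the paper leaves these implicit; your final ``no sharp effect strictly between'' remark is slightly less explicit than the paper's (which subtracts $p$ and notes that a nonzero sharp effect cannot lie strictly below an atom), but the reasoning is the same.
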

\begin{proof}
    By the previous lemma $(q\vee p)\wedge p^\perp = \ceil{p^\perp \mult q}$. Since $p^\perp\mult q$ is proportional to an atom by Proposition~\ref{prop:atompreservation}, it is either zero (when $q\leq p$) in which case we are done, or non-zero in which case $\ceil{p^\perp \mult u}$ is an atom, which also proves the statement. The equality $(q\vee p)\wedge p^\perp = (q\vee p)-p$ follows directly from Lemma~\ref{lem:wedgeminus}.

    The last observation is proven as follows. Suppose $p< r < q\vee p$. Subtract $p$ to get $0< r-p < q\vee p - p$.  As $r-p$ is sharp and $q\vee p -p$ has been established to be atomic, this is not possible.
\end{proof}

\begin{definition}
    Let $p$ be sharp and let $p_i$ be a collection of atomic orthogonal effects such that $p=\sum_i^n p_i$. The minimal size of such a collection is called the \emph{rank} of $p$. The rank of a sequential product space is defined to be the rank of the unit effect.
\end{definition}

With the covering property proven we can finally prove the following `dimension' theorem:

\begin{proposition}[{cf.~\cite[Proposition 1.66]{alfsen2012state}}] Write $p=\sum_i^n p_i$ where the $p_i$ are orthogonal and atomic, then $n=\rnk~ p$, i.e.\ all ways of writing $p$ as a sum of atomic effects require an equal amount of atomic effects. Furthermore, suppose $q\leq p$ then $\rnk~ q \leq \rnk~ p$ and if also $\rnk~ q = \rnk~ p$ then necessarily $q=p$.
\end{proposition}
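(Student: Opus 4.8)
The statement splits into two parts: that every decomposition of a sharp $p$ into orthogonal atoms has the same length (so the rank is well defined and equals the size of any such decomposition, not merely the minimal one), and the monotonicity of rank. Both are classical consequences of the covering property, and the plan is to prove them via a Jordan--H\"older argument on chains of sharp effects. The key translation is that orthogonal-atomic decompositions correspond to maximal chains: given $p=\sum_{i=1}^n p_i$ with the $p_i$ orthogonal atoms, the partial sums $s_k=\sum_{i\le k}p_i$ form a chain $0=s_0<s_1<\cdots<s_n=p$ in which each $s_k=s_{k-1}+p_k$ is the join of $s_{k-1}$ with the atom $p_k$ (using \ref{prop:meetsharp} to identify the orthogonal sum with the join), and by Proposition \ref{prop:coverprop} each step is a covering, i.e.\ no sharp effect lies strictly between $s_{k-1}$ and $s_k$. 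Conversely, if $s<s'$ is a covering of sharp effects then $s'-s=s'\wedge s^\perp$ (Lemma \ref{lem:wedgeminus}) is a nonzero sharp effect that must be atomic, since any proper orthogonal sharp decomposition of it would interpolate a sharp effect strictly between $s$ and $s'$. Hence maximal chains from $0$ to $p$ and orthogonal-atomic decompositions of $p$ have the same lengths.

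The heart of the proof is then the Jordan--H\"older property: all maximal chains of sharp effects from $0$ to $p$ have the same length. I would obtain this from semimodularity, which I can extract from the covering property as follows. If two distinct sharp effects $x$ and $y$ each cover a common sharp effect $a$, then writing $y=a+q$ with $q=y-a$ an atom orthogonal to $a$, we have $x\vee y=x\vee q$ (because $a\le x$); since $q$ is an atom, Proposition \ref{prop:coverprop} gives that $x\vee q$ either equals $x$ or covers it, and the first case would force $y\le x$ and hence $y=x$, a contradiction. So $x\vee y$ covers $x$, and symmetrically it covers $y$. With this semimodularity in hand I would run the standard induction on the length of a shortest maximal chain: given two maximal chains from $a$ to $p$, if their first steps agree I induct on the shorter interval, and if they differ at atoms $x,y$ above $a$ I pass through $z=x\vee y$, which covers both, and use the inductive hypothesis on the intervals $[x,p]$ and $[y,p]$ to equate all four lengths. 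Finite-dimensionality guarantees every strictly increasing chain is finite (orthogonal atoms are linearly independent), so the induction is well-founded.

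Combining these, every orthogonal-atomic decomposition of $p$ has length equal to the common length of maximal chains from $0$ to $p$; in particular this common value equals the minimal length, which is $\rnk~p$, proving the first assertion. For the monotonicity statement, given sharp $q\le p$ I would concatenate a maximal chain from $0$ to $q$ (of length $\rnk~q$ by the first part) with a maximal chain from $q$ to $p$ inside the interval $[q,p]$; the result is again a maximal chain from $0$ to $p$, so by the equal-length property its length $\rnk~p$ equals $\rnk~q$ plus the length $\ell\ge 0$ of the chain from $q$ to $p$. Thus $\rnk~q\le\rnk~p$, and equality forces $\ell=0$, i.e.\ $q=p$.

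The main obstacle is purely lattice-theoretic: correctly deriving semimodularity from the atom-phrased covering property and then executing the Jordan--H\"older induction without circularity, in particular not presupposing a well-defined height while proving that heights are well defined. The sequential-product-specific inputs---that orthogonal sums are joins, that a covering has atomic difference, and that joining an atom produces a covering---are all already available from Proposition \ref{prop:coverprop}, Lemma \ref{lem:wedgeminus}, and Proposition \ref{prop:sharplattice}, so beyond these the argument is essentially the classical proof of the Jordan--H\"older theorem in a semimodular lattice of finite length.
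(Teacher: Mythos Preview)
Your argument is correct and complete, but it takes a genuinely different route from the paper. The paper proves the first part by a direct \emph{exchange argument} in the spirit of Steinitz: starting from a minimal decomposition $p=q_1\vee\cdots\vee q_r$ and an arbitrary one $p=p_1\vee\cdots\vee p_n$, it uses the covering property to replace the $q_j$'s one by one with suitable $p_i$'s until one has $p=p_1\vee\cdots\vee p_r$, and then derives a contradiction from $n>r$ via orthogonality. The monotonicity is then obtained by decomposing $p-q$ into atoms and counting. Your approach instead abstracts to the lattice of sharp effects: you translate decompositions into maximal chains, extract semimodularity from Proposition~\ref{prop:coverprop}, and invoke the Jordan--H\"older theorem for semimodular lattices of finite length. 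Both arguments rest on exactly the same inputs (Proposition~\ref{prop:coverprop}, Lemma~\ref{lem:wedgeminus}, Proposition~\ref{prop:sharplattice}), and yours is the textbook lattice-theoretic packaging of the same content; what it buys is a cleaner separation between the sequential-product-specific facts and the purely order-theoretic machinery, at the cost of being less self-contained than the paper's hands-on exchange. The paper's version is shorter and avoids having to set up and verify the Jordan--H\"older induction, which is the part you correctly flag as the main bookkeeping obstacle.
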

\begin{proof}
    Let $p^\prime = p_1\vee\ldots\vee p_{n-1}$. Then $p^\prime\vee p_n = p$ and by the covering property (proposition \ref{prop:coverprop}) there is no sharp effect strictly between $p^\prime$ and $p$. Suppose now $q\leq p$ is atomic and suppose that $q$ is not below $p^\prime$. Then $p^\prime \vee q$ must be strictly greater than $p^\prime$, but since this must also lie below $p$ we conclude that $p^\prime \vee q = p$.

    Let $p = \sum_j^r q_j = q_1\vee\ldots\vee q_r$ where $r:=\rnk ~ p$ is the minimal amount of terms needed to write $p$ as a sum of atomic effects. We must then of course have $r\leq n$. Let $q=q_2\vee\ldots\vee q_r$, then $q$ must lie strictly below $p$ since $q_1\leq p$ but not $q_1\leq q$. It then follows that there must be a $p_i$ such that $p_i$ does not lie below $q$ as well, since otherwise $p = p_1\vee\ldots\vee p_n \leq q < p$. Without loss of generality let this $p_i$ be $p_1$. By the previous paragraph we must have $p_1\vee q = p_1\vee q_2\ldots\vee q_r = p$. This procedure can be repeated with $q_2,\ldots, q_r$ until we are left with the equation $p_1\vee \ldots\vee p_r = p$. Suppose $n>r$, then because $p_n$ is orthogonal to all the other $p_i$'s we have in particular $p_n\leq p_1^\perp \wedge \ldots \wedge p_r^\perp = (p_1\vee \ldots \vee p_r)^\perp = p^\perp$. Since also $p_n\leq p$ we get $p_n=0$ by sharpness which is a contradiction. We therefore have $n=r$.

    Now suppose $q=\sum_j^s q_j\leq p=\sum_i^r p_i$. Where $s=\rnk~ q$. Since $p-q$ is sharp we can write $p-q = \sum_k^t v_k$. Then because $p = \sum_j^s q_j + \sum_k^t v_k$ we must by the previous points have $s+t = r$ so that indeed $\rnk~ q\leq \rnk~ p$. When these ranks are equal we must have $t=0$ so that indeed $p-q = 0$.
\end{proof}
\begin{corollary}\label{cor:rank}
    Let $p\neq q$ be two atomic sharp effects and suppose $0\leq a\leq p\vee q$, then $a=\lambda_1 r_1 + \lambda_2 r_2$ where the $r_i$ are orthogonal and atomic and $r_1 + r_2 = p\vee q$.
\end{corollary}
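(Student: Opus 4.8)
The plan is to reduce the whole statement to a rank computation. First I would argue that $p\vee q$ has rank $2$. Since $p\neq q$ are both atomic, $q$ cannot lie below $p$: an effect below an atom is a scalar multiple of it, and a nonzero sharp such multiple equals the atom itself. Hence by the covering property (proposition \ref{prop:coverprop}) the element $(p\vee q)\wedge p^\perp = (p\vee q)-p$ is a nonzero atom, so that $p\vee q = p + \bigl((p\vee q)-p\bigr)$ exhibits $p\vee q$ as a sum of two orthogonal atoms. By the dimension theorem above this means $\rnk(p\vee q)=2$.

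Next, since $a\leq p\vee q$ and taking ceilings preserves order, we have $\ceil{a}\leq \ceil{p\vee q} = p\vee q$, so the dimension theorem forces $\rnk\ceil{a}\in\{0,1,2\}$. I would then split into cases according to this rank. If $\rnk\ceil{a}=0$ then $a=0$, and any splitting $r_1+r_2 = p\vee q$ into orthogonal atoms works with $\lambda_1=\lambda_2=0$. If $\rnk\ceil{a}=2$, then $\ceil{a}\leq p\vee q$ with equal rank forces $\ceil{a}=p\vee q$; choosing a spectral decomposition $a=\sum_i\lambda_i s_i$ (corollary \ref{cor:spectralatomic}) with $\lambda_i>0$ and the $s_i$ orthogonal and atomic, we have $\sum_i s_i=\ceil{a}=p\vee q$, and since $p\vee q$ has rank $2$ there are exactly two terms, which gives the desired form with $r_i=s_i$.

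The only case needing a little care is $\rnk\ceil{a}=1$, i.e.\ $\ceil{a}$ atomic. Here $a\leq\ceil{a}$ forces $a=\lambda\ceil{a}$ for some $\lambda\in[0,1]$. I would set $r_1=\ceil{a}$ and $r_2=(p\vee q)-\ceil{a}$; by lemma \ref{lem:wedgeminus} (applicable since $\ceil{a}\leq p\vee q$ are both sharp) this difference is sharp and equals $(p\vee q)\wedge\ceil{a}^\perp$, hence is orthogonal to $r_1$, and by the dimension theorem it has rank $2-1=1$ and is therefore atomic. Then $r_1+r_2=p\vee q$ and $a=\lambda r_1 + 0\cdot r_2$ has the required form. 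The main, mild obstacle is simply tracking the rank arithmetic and verifying that $(p\vee q)-\ceil{a}$ is genuinely an atom rather than merely sharp — which is exactly what the preceding dimension theorem supplies.
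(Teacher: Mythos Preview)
Your proof is correct and follows essentially the same route as the paper: establish $\rnk(p\vee q)=2$ via the covering property, then use the spectral decomposition of $a$ together with the dimension theorem to bound the number of atomic summands. In fact your case analysis is more careful than the paper's, which simply asserts ``we must have $n=2$'' without explicitly treating the degenerate cases $\rnk\ceil{a}\in\{0,1\}$ or verifying that the atoms can be chosen to sum to $p\vee q$.
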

\begin{proof}
    By proposition \ref{prop:coverprop} $(p\vee q) - p$ is atomic so that $p\vee q$ can be written as the sum of two atomic sharp effects so that indeed $\rnk~ p\vee q = 2$. Suppose $0\leq a \leq p\vee q$. Let $a=\sum_i^n \lambda_i r_i$ be a spectral decomposition of $a$ with the $r_i$ orthogonal and atomic. Of course $\ceil{a} \leq p\vee q$ so that by the previous proposition we must have $\rnk \ceil{a} \leq 2$. Since also by the previous proposition $\rnk~\sum_i^n r_i = n$ we see that we must have $n=2$ and thus that $a$ is as desired.
\end{proof}

\subsection{Self-duality}\label{sec:subselfdual}
The important concept of this section will be that of \emph{strict convexity} of a cone, since this is related to a characterisation theorem for homogeneous spaces.
\begin{definition}
    Let $C$ be a positive cone of an order unit space $V$. We call $F\subseteq C$ a \emph{face} of $C$ if $F$ is a convex set such that whenever $\lambda a+(1-\lambda) b \in F$ with $0<\lambda<1$ then $a,b \in F$. The face $\{\lambda p~;~ \lambda \in \R_{\geq 0}\}$ of $C$ defined by an extreme point $p\in C$ is called an \emph{extreme ray}. A face is called \emph{proper} when it is non-empty and not equal to $C$. If the only proper faces of a cone are extreme rays the cone is \emph{strictly convex}. 
\end{definition}
\begin{proposition}[{cf.~\cite{ito2017p}}]\label{prop:itochar}
    Let $V$ be a finite-dimensional ordered vector space with a strictly convex homogeneous positive cone, then $V$ is order isomorphic to a spin-factor, i.e.\ $V\cong H\oplus \R$ where $H$ is a real finite-dimensional Hilbert space with the order on $H\oplus \R$ given by $(v,t)\geq 0 \iff t\geq \norm{v}_2$.
\end{proposition}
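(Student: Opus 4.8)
The statement is quoted from \cite{ito2017p}, so what follows is the route I would take to reprove it rather than the argument given there. The plan is to reduce everything to two ingredients: \emph{self-duality} of the cone, which upgrades a homogeneous cone to a \emph{symmetric} one, and the classification of symmetric cones via Euclidean Jordan algebras, among which strict convexity singles out the spin factors.

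First I would set up the duality picture. Let $C$ be the open homogeneous cone and $G=\mathrm{Aut}(C)$ its group of linear automorphisms, acting transitively on $C$. The adjoint--inverse map $g\mapsto (g^{-1})^{*}$ carries $G$ isomorphically onto $\mathrm{Aut}(C^{*})$ and preserves transitivity, so the dual cone $C^{*}\subseteq V^{*}$ is again homogeneous. Moreover, by polar duality, strict convexity of $C$ is exactly smoothness (differentiability of the boundary) of $C^{*}$, and vice versa, so the two cones play symmetric roles. The problem then becomes: show that $C$ is \emph{self-dual}, i.e.\ produce a linear isomorphism $V\to V^{*}$ taking $C$ onto $C^{*}$. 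Once this is in hand the cone is symmetric and the Koecher--Vinberg theorem applies.

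The heart of the argument, and the step I expect to be the main obstacle, is self-duality. Here strict convexity enters as a rank constraint: since the only faces of $C$ are $\{0\}$, the extreme rays, and $C$ itself, every maximal chain of faces has length at most two, so $C$ is a homogeneous cone of rank at most two. I would exploit this through Vinberg's theory, in which a homogeneous cone is coordinatised by a \emph{$T$-algebra} (clan) whose idempotent rank equals the rank of the cone. A rank-two clan is small enough to analyse directly: its off-diagonal part carries a positive-definite bilinear form invariant under the diagonal torus, and one checks that the clan axioms force the structure constants to be symmetric, so the multiplication is commutative, which is precisely self-duality of $C$. Equivalently one can run the analytic version: the logarithmic gradient $x\mapsto -\nabla\log\varphi(x)$ of the characteristic function $\varphi(x)=\int_{C^{*}}e^{-\inn{x,\xi}}\,d\xi$ is a $G$-equivariant diffeomorphism $C\to C^{*}$, and the rank-two structure forces this equivariant map to be linear.

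With self-duality established, $C$ is a symmetric cone, hence by the classification it is the positive cone of a Euclidean Jordan algebra, which splits into simple ideals. Strict convexity passes to each summand and forbids more than one nontrivial factor, so the algebra is simple. Running down the simple Euclidean Jordan algebras --- $\R$, the spin factors, the self-adjoint matrices over $\R$, $\C$, $\mathbb H$, and the exceptional $3\times 3$ octonionic algebra --- all but $\R$ and the spin factors have rank at least three and hence carry faces of dimension larger than one, contradicting strict convexity. Thus $V$ is a (possibly degenerate) spin factor, $V\cong H\oplus\R$. Finally I would identify the cone of squares of the spin factor with the stated order, checking that $(v,t)\geq 0$ holds iff $t\geq\norm{v}_2$, which exhibits $C$ as the Lorentz cone and completes the identification.
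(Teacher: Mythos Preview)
The paper does not prove this proposition; it is cited from \cite{ito2017p} and invoked as a black box in Section~\ref{sec:subselfdual}. There is therefore no argument in the paper to compare against, and your sketch stands on its own as a reconstruction of the cited result.

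Your outline is correct. The identification of strict convexity with rank $\leq 2$ in Vinberg's sense is right (a T-algebra of rank $\geq 3$ produces boundary points that are not extreme, via partial sums of the diagonal idempotents), and once self-duality is in hand the conclusion follows from Koecher--Vinberg and the classification of simple EJAs. The only substantive step you have not actually carried out is the one you yourself flag as the main obstacle: that a rank-two clan is necessarily symmetric, equivalently that the cone is self-dual. The phrase ``one checks that the clan axioms force the structure constants to be symmetric'' is the entire content of the theorem; with only a single off-diagonal block there is indeed no room for the triangular asymmetry that first appears in Vinberg's rank-three examples, but this verification needs to be written down rather than asserted. The alternative you offer via linearity of the log-gradient of the characteristic function is likewise plausible but equally unproved in your sketch.

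One minor remark: your route passes through Koecher--Vinberg and the full classification of simple Euclidean Jordan algebras, which is heavier than the statement warrants. The rank-two T-algebra computation, once done, already exhibits the cone explicitly as a Lorentz cone, so you can conclude $V\cong H\oplus\R$ with the stated order directly, without ever invoking Jordan algebras.
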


\begin{definition}\label{def:orderideal}
    Let $p$ and $q$ be two unequal atomic effects. We define the \emph{order ideal} generated by $p$ and $q$ as $V_{p,q}:= \{v\in V~;~\exists n: -n~p\vee q \leq v \leq n ~p\vee q\}$.
\end{definition}
$V_{p,q}$ is an order unit space with unit $p\vee q$. If we have $a,b\in [0,1]_{V_{p,q}}$ then $a\mult b \leq a \leq p\vee q$ so that the sequential product restricts to this space. We therefore conclude by Proposition~\ref{prop:homogen} that this space has a homogeneous positive cone. 

\begin{lemma}
    Let $p$ and $q$ be two unequal atomic effects. The positive cone of $V_{p,q}$ is strictly convex.
\end{lemma}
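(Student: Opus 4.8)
The plan is to exploit the rank-two structure of $V_{p,q}$ furnished by Corollary \ref{cor:rank} to describe the boundary of the cone completely, and then to show directly that no proper face can contain two non-proportional atoms.

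First I would record the geometry. Since $p\neq q$ are atomic, Corollary \ref{cor:rank} gives that $p\vee q$ has rank two and that every positive element of $V_{p,q}$ can be written as $\lambda_1 r_1 + \lambda_2 r_2$ with $r_1,r_2$ orthogonal atomic effects and $r_1+r_2 = p\vee q$. Such an element lies in the interior of the cone exactly when both $\lambda_i$ are strictly positive; consequently the nonzero boundary elements are precisely the positive multiples of atomic effects. I would also note that each atomic effect $r$ spans an extreme ray: if $r = b+c$ with $b,c\geq 0$ then $b\leq r$, and atomicity forces $b=\mu r$, so $\{\lambda r~;~\lambda\in\R_{\geq 0}\}$ is a face.

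The second ingredient is the standard observation that any face $F$ meeting the interior of the cone must be all of $C$: given a strictly positive $a\in F$ and an arbitrary $c\in C$, one can choose $\epsilon>0$ with $a-\epsilon c\geq 0$, write $a = \tfrac12(2\epsilon c)+\tfrac12\big(2(a-\epsilon c)\big)$ as a convex combination of elements of $C$, and apply the face property to conclude $c\in F$. Hence a proper face $F$ contains no interior point and therefore consists only of positive multiples of atoms.

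The crux, which I expect to be the main obstacle, is to rule out that a proper face $F$ contains multiples of two distinct atoms $r_1\neq r_2$. If it did, convexity would place $a:=\tfrac12(\lambda_1 r_1+\lambda_2 r_2)$ in $F$, and I claim $a$ is strictly positive. Taking a spectral decomposition $a=\mu_1 s_1+\mu_2 s_2$ with orthogonal atoms $s_i$ and $s_1+s_2 = p\vee q$, suppose $a$ had rank one, say $a=\mu s$. Then $\tfrac12\lambda_1 r_1\leq a=\mu s$ gives $\ceil{r_1}=r_1\leq \ceil{s}=s$, and likewise $r_2\leq s$; since $r_1,r_2,s$ are all atomic this forces $r_1=s=r_2$, contradicting $r_1\neq r_2$. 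Thus $a$ is interior, so $F=C$ by the previous step, contradicting properness. It follows that every proper face lies inside a single extreme ray, and a short check using the face property (decomposing $\lambda r$ against $0$ and against complementary multiples of $r$) shows the face is in fact the whole ray. This is exactly strict convexity. The delicate point throughout is that two distinct atoms need not be orthogonal, so it is the rank-two spectral theorem of Corollary \ref{cor:rank}, rather than any direct orthogonality argument, that converts ``two different atoms'' into ``strictly positive sum.''
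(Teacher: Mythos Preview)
Your proof is correct and follows essentially the same route as the paper: both arguments use Corollary~\ref{cor:rank} to see that any effect in $V_{p,q}$ has a two-term spectral decomposition, show that a proper face cannot contain an element with both spectral coefficients positive (you phrase this as ``no interior point,'' the paper uses the face property directly on the decomposition to extract $r$ and $r^\perp$ and hence the unit), and then rule out two distinct atoms in one face by observing that their sum would violate this. Your ceiling argument for why $\tfrac12(\lambda_1 r_1+\lambda_2 r_2)$ cannot be a multiple of a single atom is a touch more explicit than the paper's one-line ``we know that $a$ can't be atomic,'' but the logic is the same.
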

\begin{proof}
    Let $F$ be a proper face of the positive cone of $V_{p,q}$. Let $a\in F$ and write $a=\lambda (\lambda^{-1} a) + \lambda^\perp 0$, so that $\lambda^{-1} a \in F$. We see that $F$ is closed under positive scalar multiplication and thus that we can restrict ourselves to effects. Let $a\in F$ be an effect. By Corollary~\ref{cor:rank} we can write $a=\lambda r + \mu r^\perp$ for some $\lambda,\mu\geq 0$ and $r$ atomic. Suppose both $\lambda,\mu >0$, then because $F$ is a face $r,r^\perp \in F$ so that $\frac{1}{2}1 = \frac{1}{2}(r+r^\perp)\in F$. But then since $1=s+s^\perp$ for any atomic $s$ we see that $F$ has to be the entire positive cone. We conclude that we must have had $a=\lambda r$ for some atomic $r$. If there were some other atomic $s \in F$, then we can consider $a = \frac{1}{2}(r + s)$. We know that $a$ can't be atomic so we can write it as $a=\lambda r + \mu r^\perp$ with $\lambda, \mu > 0$ which is a contradiction. We conclude that $F$ is an extreme ray and thus that the positive cone of $V_{p,q}$ is strictly convex.
\end{proof}

\begin{corollary}
    Let $p$ and $q$ be two different atomic effects, then $V_{p,q}$ is isomorphic to a spin-factor.
\end{corollary}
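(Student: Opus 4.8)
The plan is to verify that $V_{p,q}$ satisfies the two hypotheses of the Ito--Louren\c{c}o characterisation (Proposition \ref{prop:itochar}) and then simply invoke it, since almost all of the substantive work has already been carried out in the preceding lemmas.

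First I would observe that $V_{p,q}$ is a finite-dimensional ordered vector space: it is an order ideal sitting inside the finite-dimensional space $V$, and by Definition \ref{def:orderideal} it carries the restricted order with unit $p\vee q$. Next I would recall the two structural facts about its positive cone that have just been assembled. Homogeneity was established in the discussion immediately following Definition \ref{def:orderideal}, where it was noted that for effects $a,b\in[0,1]_{V_{p,q}}$ one has $a\mult b\leq a\leq p\vee q$, so that the sequential product restricts to $V_{p,q}$; consequently Proposition \ref{prop:homogen} applies to $V_{p,q}$ itself (with unit $p\vee q$) and its positive cone is homogeneous. Strict convexity of that same cone is precisely the content of the lemma proved just above the corollary.

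Having both ingredients in hand, I would conclude by applying Proposition \ref{prop:itochar}: a finite-dimensional ordered vector space whose positive cone is both homogeneous and strictly convex is order isomorphic to a spin-factor $H\oplus\R$. Therefore $V_{p,q}\cong H\oplus\R$ as ordered vector spaces, which is the claim.

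The proof is essentially immediate, so I do not expect any real obstacle; the difficulty lived entirely in the earlier results (building homogeneity from spectral decompositions and invertibility, and deducing strict convexity from the rank-$2$ structure furnished by the covering property, Corollary \ref{cor:rank}). The only point I would be careful to state explicitly is that homogeneity is asserted for $V_{p,q}$ with its own unit $p\vee q$, rather than merely inherited loosely from the ambient $V$; this is legitimate precisely because $\&$ restricts to $V_{p,q}$, so Proposition \ref{prop:homogen} may be applied verbatim to the smaller space.
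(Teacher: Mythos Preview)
Your proposal is correct and matches the paper's proof exactly: the paper simply says the corollary ``follows directly from the previous lemma and proposition \ref{prop:itochar},'' which is precisely the strict-convexity-plus-homogeneity application you describe. Your extra care in spelling out finite-dimensionality and that homogeneity is meant with respect to the unit $p\vee q$ of $V_{p,q}$ is welcome but not new content.
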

\begin{proof}
    Follows directly from the previous lemma and Proposition~\ref{prop:itochar}.
\end{proof}

Recall that a \emph{state} on an order unit space is a positive linear map $\omega: V\rightarrow \R$ such that $\omega(1) = 1$. For an atomic effect $p$ in a spin-factor there exists a unique state $\omega_p$ such that $\omega_p(p)=1$. A spin-factor has \emph{symmetry of transition probabilities} \cite{alfsen2012geometry}: $\omega_p(q)=\omega_q(p)$ for any two atomic effects $p$ and $q$. We can use the previous results to prove that symmetry of transition probabilities also holds for arbitrary (finite-dimensional) sequential product spaces.

\begin{proposition}\label{prop:uniquestate}
     Let $p,q \in E$ be atomic effects. There exist unique pure states $\omega_p$ and $\omega_q$ such that $\omega_p(p)=1$ and $\omega_q(q)=1$. Furthermore for these states we have $\omega_p(q)=\omega_q(p)$.
\end{proposition}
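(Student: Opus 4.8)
The plan is to split the statement into three parts---existence of $\omega_p$, its uniqueness (which will also yield purity), and the transition-probability symmetry. Throughout, the guiding idea is that the hard analytic content has already been packaged into the fact that each order ideal $V_{p,q}$ is a spin-factor, so I will reduce the global statements to statements inside these rank-$2$ subspaces, where everything is known.

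For existence I would build $\omega_p$ explicitly from the sequential product. Since $p$ is atomic, Lemma~\ref{lem:atomicnorm} gives $p\mult a = \norm{p\mult a}\,p$ for every effect $a$, so I can define $\omega_p(a):=\norm{p\mult a}$, i.e.\ the unique scalar with $p\mult a = \omega_p(a)\,p$. Additivity in the second argument (\ref{ax:add}) and Proposition~\ref{prop:linearity} make $\omega_p$ additive and positively homogeneous on effects, so it extends to a linear functional on all of $V$; it is positive because $p\mult a\geq 0$ and $p\neq 0$, and $\omega_p(1)=\norm{p}=1$ since a non-zero sharp effect has norm $1$ (Lemma~\ref{lem:atomicnorm}). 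Finally $p\mult p = p$ gives $\omega_p(p)=1$, as required.

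For uniqueness I would show that any state $\omega$ with $\omega(p)=1$ is forced to agree with $\omega_p$. By Corollary~\ref{cor:spectralatomic} it suffices to pin down $\omega(r)$ for every atomic $r$; the case $r=p$ is immediate, and for $r\neq p$ I would pass to the order ideal $V_{p,r}$. The key observation is that $\omega(p\vee r)=1$, since $p\leq p\vee r\leq 1$ forces this value between $\omega(p)=1$ and $\omega(1)=1$, so $\omega$ restricts to a genuine (normalised) state on $V_{p,r}$. As $V_{p,r}$ is a spin-factor and the state taking value $1$ on the atomic effect $p$ is unique there, the restriction---and in particular $\omega(r)$---is completely determined. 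Because $\omega_p$ is one such state, this gives $\omega=\omega_p$. Purity of $\omega_p$ then follows formally: if $\omega_p=\tfrac12(\sigma+\tau)$ for states $\sigma,\tau$, then $\sigma(p)=\tau(p)=1$, so by uniqueness $\sigma=\tau=\omega_p$, whence $\omega_p$ is extreme.

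The symmetry $\omega_p(q)=\omega_q(p)$ is handled in the same spirit. For $p=q$ both sides equal $1$, so assume $p\neq q$ and work inside the spin-factor $V_{p,q}$. Exactly as above, $\omega_p(p\vee q)=\omega_q(p\vee q)=1$, so both $\omega_p$ and $\omega_q$ restrict to states on $V_{p,q}$, and by the uniqueness just proved these restrictions coincide with the corresponding canonical states of the spin-factor. The transition-probability symmetry of spin-factors then yields $\omega_p(q)=\omega_q(p)$, and since $p,q\in V_{p,q}$ these restricted values agree with the global ones. I expect the main obstacle to be conceptual rather than computational: one should not try to prove symmetry directly from the non-commutative, non-associative product, but instead route everything through the spin-factor subspaces. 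The one technical point needing care is the bookkeeping of restriction---verifying that a global state restricts to a \emph{normalised} state on $V_{p,q}$, which hinges entirely on the estimate $\omega(p\vee q)=1$, and that uniqueness inside the spin-factor, combined with spectral decompositions into atomic effects, propagates back to global uniqueness.
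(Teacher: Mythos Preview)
Your proposal is correct and follows essentially the same route as the paper: reduce uniqueness and the symmetry $\omega_p(q)=\omega_q(p)$ to the rank-$2$ order ideals $V_{p,q}$, which are already known to be spin-factors, and use that restriction is well-defined because $\omega(p\vee q)=1$. The only differences are cosmetic. For existence the paper picks any state $\omega$ with $\omega(p)\neq 0$ (using order-separation) and sets $\omega_p(a)=\omega(p\mult a)/\omega(p)$, whereas you read off the scalar directly from $p\mult a=\norm{p\mult a}\,p$ via Lemma~\ref{lem:atomicnorm}; these yield the same functional since $p\mult a$ is already a multiple of $p$. You also supply the short extremality argument for purity, which the paper's proof leaves implicit despite the word ``pure'' appearing in the statement.
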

\begin{proof}
    The states separate the points of an order unit space \cite[Corollary 1.27]{alfsen2012state} so that for $p$ we can find a state $\omega$ such that $\omega(p)\neq 0$. Define $\omega_p(a) := \omega(p\mult a)/(\omega(p))$, then $\omega_p$ is a state and $\omega_p(p)=1$. Suppose there is another state $\omega^\prime$ such that $\omega^\prime(p)=1$. Let $q\neq p$ be any other atomic effect (if there is no atomic $q\neq p$ then $V\cong \R$ and we are already done) and look at the restrictions of the states $\omega_p$ and $\omega^\prime$ to the space $V_{p,q}$. These restriction maps are still states as $\omega_p(p\vee q)\geq \omega_p(p)=1$ (and similarly for $\omega^\prime$). Because states with this property are unique on spin-factors we can conclude that these restricted states are equal on this subspace and in particular $\omega_p(q)=\omega^\prime(q)$. Since $q$ was arbitrary and the atomic effects span $V$ we conclude that $\omega_p=\omega^\prime$ so that $\omega_p$ is indeed unique.

    For any two atomic $p$ and $q$ we can look at their unique pure states $\omega_p$ and $\omega_q$ as restricted to $V_{p,q}$ for which we know that $\omega_p(q)=\omega_q(p)$ which finishes the proof.
\end{proof}

\begin{proposition}\label{prop:atomicprod}
    Let $p$ and $q$ be atomic sharp effects. $p$ and $q$ are orthogonal, i.e.\ $p\mult q = q\mult p = 0$ if and only if $\omega_p(q) = \omega_q(p) = 0$. Furthermore, $p\mult q = \omega_p(q) p$.
\end{proposition}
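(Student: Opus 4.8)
The plan is to establish the ``furthermore'' identity $p\mult q = \omega_p(q)\, p$ first, since the claimed equivalence will then drop out almost immediately. The starting point is lemma \ref{lem:atomicnorm}: because $p$ is atomic we have $p\mult q = \norm{p\mult q}\, p$, so the entire content of the statement is the identification of the scalar $\norm{p\mult q}$ with the transition probability $\omega_p(q)$.

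To pin down this scalar I would unfold the definition of $\omega_p$ given in the proof of proposition \ref{prop:uniquestate}, namely $\omega_p(a) = \omega(p\mult a)/\omega(p)$ for some state $\omega$ with $\omega(p)\neq 0$. Substituting the atomicity identity $p\mult a = \norm{p\mult a}\, p$ and using linearity of $\omega$ together with $\omega(p)\neq 0$ collapses this to $\omega_p(a) = \norm{p\mult a}$ for every effect $a$. Taking $a=q$ gives $\norm{p\mult q} = \omega_p(q)$, and combined with $p\mult q = \norm{p\mult q}\, p$ this yields $p\mult q = \omega_p(q)\, p$, as desired.

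With the identity in hand the equivalence is pure bookkeeping. Running the same argument with the roles of $p$ and $q$ exchanged gives $q\mult p = \omega_q(p)\, q$. Since $p$ and $q$ are atomic they are in particular non-zero, so $p\mult q = 0 \iff \omega_p(q) = 0$ and $q\mult p = 0 \iff \omega_q(p) = 0$; conjoining these two biconditionals is exactly the claim that $p\mult q = q\mult p = 0$ iff $\omega_p(q) = \omega_q(p) = 0$. The symmetry $\omega_p(q) = \omega_q(p)$ from proposition \ref{prop:uniquestate}, together with axiom \ref{ax:orth}, moreover makes the two vanishing conditions equivalent to one another, consistent with orthogonality being a symmetric relation.

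I do not expect any genuine obstacle here. The one step that requires a moment's thought is recognising that the defining formula for $\omega_p$, once fed the atomicity relation $p\mult a = \norm{p\mult a}\,p$, degenerates into the norm functional $a\mapsto \norm{p\mult a}$ on effects; everything else is an immediate consequence of lemma \ref{lem:atomicnorm} and linearity.
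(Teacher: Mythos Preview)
Your argument is correct and follows essentially the same strategy as the paper: both establish the identity $p\mult q = \omega_p(q)\,p$ first and read off the equivalence afterwards. The one tactical difference is how the scalar is identified. You plug the explicit construction $\omega_p(a)=\omega(p\mult a)/\omega(p)$ from the proof of proposition~\ref{prop:uniquestate} together with the atomicity relation $p\mult a=\norm{p\mult a}\,p$ from lemma~\ref{lem:atomicnorm} to get $\omega_p(a)=\norm{p\mult a}$ directly. The paper instead works at the level of the statement of proposition~\ref{prop:uniquestate}: it observes that $\omega'(a):=\omega_p(p\mult a)$ is a state with $\omega'(p)=1$, invokes \emph{uniqueness} to conclude $\omega'=\omega_p$, and then evaluates at $q$. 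Your route is marginally shorter; the paper's route has the mild advantage of not reaching back into the internals of a previous proof. For the forward direction of the equivalence the paper gives a separate argument via proposition~\ref{prop:meetsharp} ($q\mult p=0\Rightarrow p+q\leq 1\Rightarrow \omega_p(q)=0$), whereas you correctly note that once the identity is in hand this direction is automatic.
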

\begin{proof}
    Note that if $q\mult p =0$ then by proposition \ref{prop:meetsharp} $q+p\leq 1$ so that $1=\omega(p)\leq\omega_p(q+p)\leq \omega(1)=1$ from which we conclude that $\omega_p(q)=0$. So if $p$ and $q$ are orthogonal then $\omega_p(q)=\omega_q(p)=0$.

    For the converse we will show that $p\mult q = \omega_p(q) p$, from which it directly follows that $\omega_p(q) = 0 \implies p\mult q = 0$. Since $p$ is atomic we of course have $p\mult q = \lambda p$ for some $\lambda \geq 0$. Let $\omega^\prime(a) = \omega_p(p\mult a)$, then $\omega^\prime(p) = \omega_p(p\mult p) = \omega_p(p) = 1$, so that by the uniqueness of $\omega_p$ we have $\omega^\prime = \omega_p$. We then see that $\omega_p(q) = \omega^\prime(q) = \omega_p(p\mult q) = \omega_p(\lambda p) = \lambda \omega_p(p) = \lambda$.
\end{proof}

\begin{proposition}
    There exists an inner product on  $V$ such that the positive cone is self-dual with respect to this inner product.
\end{proposition}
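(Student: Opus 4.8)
The plan is to build the inner product out of the transition probabilities $\omega_p(q)$ supplied by Proposition \ref{prop:uniquestate} together with the spectral decomposition into orthogonal atoms of Corollary \ref{cor:spectralatomic}, rather than out of the product itself. For $a\in V$ with a chosen decomposition $a=\sum_i \lambda_i p_i$ into orthogonal atomic effects $p_i$ and real scalars $\lambda_i$, and for arbitrary $b\in V$, I would define
\[
\inn{a,b} := \sum_i \lambda_i\,\omega_{p_i}(b).
\]
This is manifestly linear in $b$, since each $\omega_{p_i}$ is linear and the decomposition of $a$ is held fixed. The tempting alternative $\inn{a,b}=\tau(a\mult b)$ for a trace $\tau$ is awkward here: because $\mult$ is not additive in its first argument (already $\sqrt{a}b\sqrt{a}$ fails to be additive in $a$), neither the symmetry nor the linearity of such a form is transparent, whereas the transition-probability definition removes the product from the outer slot entirely.

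The main obstacle is that the spectral decomposition of $a$ is not unique when eigenvalues are degenerate, so a priori $\inn{a,b}$ could depend on the chosen $p_i$; I must show it does not, and simultaneously that the form is symmetric. Both are settled at once. Fixing in addition a decomposition $b=\sum_j \mu_j q_j$ and invoking symmetry of transition probabilities, $\omega_{p_i}(q_j)=\omega_{q_j}(p_i)$ from Proposition \ref{prop:uniquestate}, I compute
\[
\inn{a,b}=\sum_i \lambda_i\,\omega_{p_i}\!\Big(\sum_j \mu_j q_j\Big)=\sum_{i,j}\lambda_i\mu_j\,\omega_{p_i}(q_j)=\sum_{i,j}\lambda_i\mu_j\,\omega_{q_j}(p_i)=\sum_j \mu_j\,\omega_{q_j}(a).
\]
The right-hand expression depends only on $a$ itself and on the decomposition of $b$, so $\inn{a,b}$ is independent of the decomposition chosen for $a$; and the central double sum is symmetric under interchanging the two decompositions, giving $\inn{a,b}=\inn{b,a}$. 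Thus well-definedness and symmetry come out together, and the one nontrivial ingredient making this work is precisely the earlier symmetry of transition probabilities.

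With symmetry established, bilinearity is immediate: $\inn{a,\cdot}$ is linear by construction and $\inn{\cdot,b}$ is then linear by symmetry. For positive-definiteness I would use that orthogonal atoms have vanishing transition probability while $\omega_p(p)=1$, i.e.\ $\omega_{p_i}(p_j)=\delta_{ij}$, which follows from Proposition \ref{prop:atomicprod}; this gives $\inn{a,a}=\sum_{i,j}\lambda_i\lambda_j\,\omega_{p_i}(p_j)=\sum_i \lambda_i^2$, vanishing only when $a=0$, so $\inn{\cdot,\cdot}$ is a genuine inner product. Finally I would verify self-duality. If $a\geq 0$ its eigenvalues $\lambda_i$ are nonnegative, so for any $b\geq 0$ each $\omega_{p_i}(b)\geq 0$ and hence $\inn{a,b}=\sum_i\lambda_i\omega_{p_i}(b)\geq 0$. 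Conversely, if $\inn{a,b}\geq 0$ for all $b\geq 0$, testing against the atoms $b=p_k$ appearing in the decomposition of $a$ yields $\inn{a,p_k}=\sum_i \lambda_i\,\omega_{p_i}(p_k)=\lambda_k\geq 0$, so every eigenvalue of $a$ is nonnegative and therefore $a\geq 0$. This exhibits $\inn{\cdot,\cdot}$ as an inner product with respect to which the positive cone is self-dual.
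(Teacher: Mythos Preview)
Your proof is correct and follows essentially the same approach as the paper: defining $\inn{a,b}=\sum_i\lambda_i\,\omega_{p_i}(b)$ via a spectral decomposition of $a$ into atoms, establishing well-definedness and symmetry simultaneously through the double sum $\sum_{i,j}\lambda_i\mu_j\,\omega_{p_i}(q_j)$ and symmetry of transition probabilities, computing $\inn{a,a}=\sum_i\lambda_i^2$ from $\omega_{p_i}(p_j)=\delta_{ij}$, and checking self-duality by testing against the atoms $p_k$ in the decomposition of $a$. The paper's proof is organized identically, with only cosmetic differences in presentation.
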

\begin{proof}
    For atomic $p$ and $q$ we let $\inn{p,q}:= \omega_p(q)=\omega_q(p)=\inn{q,p}$. We can then extend it by linearity to arbitrary $a=\sum_i \lambda_i p_i$ and $b=\sum_j \mu_j q_j$ by $\inn{a,b}:= \sum_{i,j} \lambda_i\mu_j \inn{p_i,q_j}$. This is well-defined since $\inn{a,b} = \sum_i \lambda_i \omega_{p_i}( \sum_j \mu_j q_j) = \sum_i \lambda_i \omega_{p_i}(b) = \sum_j \mu_j \omega_{q_j}(a)$ so that this is independent of the representation of $a$ and $b$ in as linear combinations of atomic effects. Now $\inn{a,a} = \sum_{i,j} \lambda_i \lambda_j \omega_{p_i}(p_j) = \sum_i \lambda_i^2$ since $p_i$ and $p_j$ are orthogonal when $i\neq j$ and $\omega_{p_i}(p_i)=1$. We conclude that $\inn{a,a}\geq 0$ and that it is only equal to zero when $a=0$ so that $\inn{\cdot,\cdot}$ indeed is an inner product.

    If $a$ and $b$ are positive elements then we can write them as $a=\sum_i \lambda_i p_i$ and $b=\sum_j \mu_j q_j$ where all the $\lambda_i$ and $\mu_j$ are greater than zero. It then easily follows that $\inn{a,b}\geq 0$ because $\omega_{p_i}(q_j)\geq 0$. Conversely if we have $a=\sum_i \lambda_i p_i$ with $\lambda_i$ not necessarily positive with $\inn{a,b}\geq 0$ for all $b\geq 0$, then we can in particular take $b=p_j$ to see that $0\leq \inn{a,p_j} = \lambda_j$ from which we conclude that indeed $a\geq 0$.
\end{proof}

\begin{theorem}\label{theor:seqprodkoechervinberg}
    A finite-dimensional sequential product space is isomorphic to a Euclidean Jordan algebra.
\end{theorem}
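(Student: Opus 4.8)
The plan is to apply the Koecher--Vinberg theorem directly, since by this point all the substantive work has already been done. The theorem of Koecher and Vinberg \cite{koecher1957positivitatsbereiche} states that any finite-dimensional ordered real vector space whose open positive cone is both homogeneous and self-dual is order-isomorphic to a Euclidean Jordan algebra. My strategy is therefore simply to verify that $V$ satisfies both hypotheses and then invoke the theorem.

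First I would observe that homogeneity has already been established in Proposition \ref{prop:homogen}: for any two strictly positive elements $v,w$ in the interior $C$ of the positive cone there is an order isomorphism $\Phi\colon V\to V$ with $\Phi(v)=w$, which is precisely the transitive action of the order-automorphism group on $C$ demanded by Koecher--Vinberg. Second, self-duality has just been established in the immediately preceding proposition: there is an inner product $\langle\cdot,\cdot\rangle$ on $V$ for which $a\geq 0$ if and only if $\langle a,b\rangle\geq 0$ for all $b\geq 0$. With both hypotheses in hand, the Koecher--Vinberg theorem immediately yields that $V$ is order-isomorphic to a Euclidean Jordan algebra, completing the proof.

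The point worth emphasising is that there is essentially no remaining obstacle: all the genuine difficulty was absorbed into the two ingredients being cited. Homogeneity rested on the spectral decomposition of effects and the construction of inverses via compatible sharp effects, while self-duality required the full apparatus of the lattice of sharp effects, the covering property imported from Alfsen and Schultz, the rank-$2$ classification through the spin-factor result of Ito and Louren\c{c}o, and symmetry of transition probabilities. Once those are available, this theorem is a one-line corollary.

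I would also flag a matter of emphasis rather than a gap: the isomorphism produced here is an \emph{order} isomorphism onto some Euclidean Jordan algebra, so at this stage the Jordan product is obtained abstractly from Koecher--Vinberg and is not yet visibly tied to the sequential product $\&$. Establishing that the Jordan product can be reconstructed directly and intrinsically from $\&$ is the task of Section \ref{sec:jordanproduct}, and is logically separate from the present statement.
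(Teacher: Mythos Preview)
Your proposal is correct and matches the paper's own proof essentially verbatim: cite Proposition~\ref{prop:homogen} for homogeneity, cite the immediately preceding proposition for self-duality, and invoke Koecher--Vinberg. Your additional remarks about the order-isomorphism being abstract and the direct construction of the Jordan product being deferred to Section~\ref{sec:jordanproduct} are apt and in line with how the paper is organised.
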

\begin{proof}
    By proposition \ref{prop:homogen} the space is homogeneous, and by the previous proposition it is self-dual. The Koecher-Vinberg theorem \cite{koecher1957positivitatsbereiche} states that any finite-dimensional homogeneous self-dual ordered vector space is order-isomorphic to a Euclidean Jordan algebra.
\end{proof}

\section{The Jordan product from a sequential product}\label{sec:jordanproduct}
The Koecher-Vinberg theorem is a rather indirect way of establishing the Jordan algebra structure of the space. Since we don't have just a homogeneous self-dual space, but we also have access to the sequential product we can in fact construct the Jordan product directly. That is what we will strive for in this section. We will use the construction of the Jordan product from the work of Alfsen and Shultz \cite{alfsen2012geometry}, but then adapted to our setting. In this section we will again let $V$ be a finite-dimensional sequential product space, and hence by the previous section it is self-dual and homogeneous. We write $L_a: V\rightarrow V$ for the linear map that sends $v\in V$ to $a\mult v$.

\begin{definition}
    Let $p$ be an atomic sharp effect and let $b\in V$ be arbitrary. We define their \emph{Jordan product} as $p*b = \frac{1}{2}(\id + L_p - L_{p^\perp}) b$.
\end{definition}

\begin{lemma}\label{lem:restrict}
    Let $p$ and $q$ be atomic sharp effects, then $p^\perp \mult q = p^\prime \mult q$ where $p^\prime = p\vee q - p$.
\end{lemma}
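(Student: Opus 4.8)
The plan is to prove the identity by a containment-plus-associativity argument rather than by trying to split the first slot. The tempting approach is to observe that $p^\perp = (p\vee q)^\perp + p'$ and then distribute $\mult\, q$ over this sum, killing the $(p\vee q)^\perp$ term because $q\leq p\vee q$. This fails, however: the sequential product is only additive in its \emph{second} argument (axiom \ref{ax:add}), and there is no reason for $(a+b)\mult q = a\mult q + b\mult q$ to hold. I expect this lack of left-additivity to be the main obstacle, and the whole point is to route around it.

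First I would check that $p'$ is a sensible sharp effect. Since $p\vee q$ is by definition an upper bound of $p$ we have $p\leq p\vee q$, so $p' = p\vee q - p$ is a genuine effect, and by Lemma \ref{lem:wedgeminus} (applied to $p\leq p\vee q$) we have $p' = (p\vee q)\wedge p^\perp$. In particular $p'$ is sharp and $p'\leq p^\perp$; the latter gives both $p'\mult p^\perp = p'$ and $p'\commu p^\perp$ by proposition \ref{prop:belowsharp}. Note that only sharpness of $p$ and $q$ is actually used, not atomicity.

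Next I would establish the key containment $p^\perp\mult q\leq p'$. Applying order-preservation (proposition \ref{prop:orderpreserve}) to $q\leq p\vee q$ gives $p^\perp\mult q\leq p^\perp\mult(p\vee q)$. Because $p\leq p\vee q$ we get $p\commu p\vee q$ and hence $p^\perp\commu p\vee q$ by \ref{ax:compadd}; as $p^\perp$ and $p\vee q$ are then compatible sharp effects, proposition \ref{prop:joinsharp} evaluates their product as the meet, $p^\perp\mult(p\vee q) = p^\perp\wedge(p\vee q) = p'$. This yields $p^\perp\mult q\leq p'$.

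Finally I would combine this with associativity. Since $p^\perp\mult q\leq p'$ and $p'$ is sharp, proposition \ref{prop:belowsharp} gives $p'\mult(p^\perp\mult q) = p^\perp\mult q$. On the other hand, $p'\commu p^\perp$ lets me invoke \ref{ax:assoc} to rewrite $p'\mult(p^\perp\mult q) = (p'\mult p^\perp)\mult q = p'\mult q$, using $p'\mult p^\perp = p'$ from the previous paragraph. Equating the two expressions for $p'\mult(p^\perp\mult q)$ gives exactly $p^\perp\mult q = p'\mult q$, as desired.
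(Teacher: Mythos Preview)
Your proof is correct and follows essentially the same route as the paper: both establish $p'\leq p^\perp$ (hence $p'\commu p^\perp$) and the key identity $p^\perp\mult(p\vee q)=p'$, then finish with one application of \ref{ax:assoc}. The only cosmetic difference is that the paper inserts $p\vee q$ on the right via $q=(p\vee q)\mult q$ and associates using the compatible pair $(p^\perp,\,p\vee q)$, whereas you first bound $p^\perp\mult q\leq p'$ and insert $p'$ on the left, associating with the pair $(p',\,p^\perp)$.
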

\begin{proof}
    First note that $p^\perp = 1-p = 1-p\vee q + p\vee q -p = (p\vee q)^\perp + p^\prime$ and hence that $p^\prime \leq p^\perp$ so that $p^\prime \commu p^\perp$ by proposition \ref{prop:belowsharp}. We then also have $p^\perp\mult (p\vee q) = (p\vee q)\mult p^\perp = (p\vee q) \mult ((p\vee q)^\perp + p^\prime) = p^\prime$. Now using the fact that we are working with compatible effects and that $q\leq p\vee q$ we calculate $p^\perp \mult q = p^\perp \mult ((p\vee q)\mult q) = (p^\perp \mult (p\vee q))\mult q = p^\prime \mult q$.
\end{proof}

\begin{lemma}[{cf.~\cite[Lemma 9.29]{alfsen2012geometry}}]\label{lem:atomiccommute}
Let $p$ and $q$ be sharp atomic effects.
    \begin{enumerate}
        \item $p*q = q*p$.
        \item When $p\mult q = 0$ we have $p*q=0$ and in that case for any $b\in V$: $p*(q*b)=q*(p*b)$.
        \item $p*p = p$.
    \end{enumerate}
\end{lemma}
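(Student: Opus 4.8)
The plan is to carry out everything through the linear operator $T_p := \frac{1}{2}(\id + L_p - L_{p^\perp})$ on $V$, so that $p*b = T_p b$, and to reduce each claim to identities about the maps $L_a(b) = a\mult b$. The one tool I will use repeatedly is that for compatible effects $a\commu b$ the associativity axiom \ref{ax:assoc} says precisely that $L_a L_b = L_{a\mult b}$ as operators on $V$. Claim 3 is then immediate: since $p$ is sharp, Proposition \ref{prop:sharpness} gives $p\mult p = p$ and $p\mult p^\perp = 0$, whence $p^\perp\mult p = 0$ by \ref{ax:orth}, and substituting into $p*p = \frac{1}{2}(p + p\mult p - p^\perp\mult p)$ gives $p*p = p$. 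For the first part of claim 2, assume $p\mult q = 0$; then $q\mult p = 0$ by \ref{ax:orth}, so $p\commu q$ and $p^\perp\commu q$ by \ref{ax:compadd}, and expanding $q = q\mult(p+p^\perp) = q\mult p + q\mult p^\perp$ with \ref{ax:add} yields $p^\perp\mult q = q$, so $p*q = \frac{1}{2}(q + 0 - q) = 0$.

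The operator identity in the second part of claim 2 is where the computation is cleanest. Under $p\mult q = 0$ the four effects $p, p^\perp, q, q^\perp$ are pairwise compatible, so every composite $L_x L_y$ equals $L_{x\mult y}$. Using $p\mult q = 0$, $p\mult q^\perp = p$ (from \ref{ax:add}) and $p^\perp\mult q = q$ one expands
$$4T_pT_q = \id + L_q - L_{q^\perp} + L_p + L_pL_q - L_pL_{q^\perp} - L_{p^\perp} - L_{p^\perp}L_q + L_{p^\perp}L_{q^\perp},$$
in which the substitutions $L_pL_q = 0$, $L_pL_{q^\perp} = L_p$ and $L_{p^\perp}L_q = L_q$ collapse the whole expression to $\id - L_{p^\perp} - L_{q^\perp} + L_{p^\perp}L_{q^\perp}$. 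Swapping $p$ and $q$ produces the analogous expression for $4T_qT_p$, and the two agree because $L_{p^\perp}L_{q^\perp} = L_{p^\perp\mult q^\perp} = L_{q^\perp\mult p^\perp} = L_{q^\perp}L_{p^\perp}$, the middle equality holding since $p^\perp\commu q^\perp$. Hence $T_pT_q = T_qT_p$, i.e.\ $p*(q*b) = q*(p*b)$.

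Claim 1 is the one genuinely requiring the transition-probability machinery, and I expect it to be the main obstacle. For distinct atoms I will set $\lambda = \omega_p(q) = \omega_q(p)$ and $p' = p\vee q - p$, $q' = p\vee q - q$, both atomic by the covering property (Proposition \ref{prop:coverprop}). Proposition \ref{prop:atomicprod} gives $p\mult q = \lambda p$ and $q\mult p = \lambda q$, while Lemma \ref{lem:restrict} together with Proposition \ref{prop:atomicprod} gives $p^\perp\mult q = p'\mult q = \omega_{p'}(q)\,p'$. The crucial simplification is that by symmetry of transition probabilities (Proposition \ref{prop:uniquestate}) $\omega_{p'}(q) = \omega_q(p') = \omega_q(p\vee q) - \omega_q(p) = 1-\lambda$, and symmetrically $q^\perp\mult p = (1-\lambda)q'$. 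Substituting these into $2(p*q - q*p) = (q-p) + (p\mult q - q\mult p) - (p^\perp\mult q - q^\perp\mult p)$ and using $p' - q' = q - p$, the three contributions become $(q-p)$, $-\lambda(q-p)$ and $-(1-\lambda)(q-p)$, which sum to zero; the degenerate case $p=q$ is covered by claim 3.

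I anticipate the only delicate points to be bookkeeping rather than conceptual. In claim 2 one must be careful that $L_x L_y = L_{x\mult y}$ is applied only to genuinely compatible pairs, which is guaranteed here because orthogonality of $p$ and $q$ propagates to all four complements via \ref{ax:compadd}. In claim 1 the real content is the identity $\omega_{p'}(q) = 1-\lambda$; once it is in hand the cancellation is forced and no explicit spin-factor coordinates are needed, even though the spin-factor structure of $V_{p,q}$ is what ultimately underlies Propositions \ref{prop:atomicprod} and \ref{prop:uniquestate}.
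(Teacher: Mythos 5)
Your proposal is correct and follows essentially the same route as the paper: claim 1 rests on the same ingredients (Lemma \ref{lem:restrict}, the covering property, Proposition \ref{prop:atomicprod}, and symmetry of transition probabilities, with the key identity $\omega_{p^\prime}(q)=1-\omega_p(q)$ appearing in the paper as $\inn{p^\prime,q}=\inn{p\vee q,q}-\inn{p,q}$), and claims 2 and 3 match the paper's argument, except that for the operator identity the paper simply observes that $L_p, L_{p^\perp}, L_q, L_{q^\perp}$ pairwise commute rather than expanding $4T_pT_q$ explicitly as you do.
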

\begin{proof}~
    \begin{enumerate}
        \item If $p=q$ this is trivial, so assume that $p\neq q$. Let us denote $p^\prime = p\vee q - p$. By proposition~\ref{prop:coverprop} $p^\prime$ is atomic. By proposition~\ref{prop:atomicprod} we have $p\mult q = \omega_p(q) p = \inn{p,q} p$ and similarly $p^\prime \mult q = \inn{p^\prime ,q} p^\prime$.
        Expanding the definition of $p*q$ and using lemma \ref{lem:restrict} to write $p^\perp \mult q = p^\prime \mult q$ where $p^\prime = p\vee q - p$ we calculate
        \begin{align*}
        2 (p*q) &= q + \inn{p,q}p - \inn{p^\prime, q} p^\prime \\
        &= q + \inn{p,q}p - \inn{p^\prime, q} (p\vee q - p) \\
        &= q + (\inn{p,q}p + \inn{p^\prime, q})p - \inn{p^\prime, q} (p\vee q) \\
        &= q + \inn{p\vee q,q}p + \inn{p\vee q - p, q} (p\vee q) \\
        &= q+p + (1-\inn{p,q})(p\vee q)
        \end{align*}
        which is indeed symmetric in $p$ and $q$.

        \item When $p\mult q = 0$ we have $q\leq p^\perp$ so that $p^\perp \mult q = q$ which indeed gives $p*q = {\frac12(q + p\mult q - p^\perp \mult q)} = \frac12(q-q) = 0$. For the second point we note that because $p\mult q = 0$ we have $p\commu q, q^\perp$ and $q\commu p^\perp$, and hence that the maps $L_p, L_{p^\perp}, L_q$ and $L_{q^\perp}$ commute so that the maps $b\mapsto p*b$ and $b\mapsto q*b$ will commute as well.

        \item Follows immediately from $p\mult p = p$ and $p^\perp \mult p =0$. \qedhere
    \end{enumerate}
\end{proof}
As a result of this lemma we can extend the Jordan product by linearity to the entirety of the space.
\begin{definition}
    Let $a,b\in V$ be arbitrary. Let $a=\sum_i \lambda_i p_i$ and $b=\sum_j \mu_j q_j$ be spectral decompositions with the $p_i$ and $q_j$ atomic. Define their Jordan product as $a*b = \sum_{i,j} \lambda_i \mu_j (p_i*q_j)$. We write $T_a:V\rightarrow V$ for the operator that sends $b$ to $a*b$.
\end{definition}
\begin{proposition}\label{prop:jordanproduct}
    The Jordan product is well-defined, bilinear, commutative and furthermore
    \begin{enumerate}[label=\arabic*., ref=\arabic{counter}.\arabic*]
        \item \label{prop:jordancomm1} If $a\commu b$ then $T_aT_b = T_bT_a$.
        \item \label{prop:jordancomm} If $a\commu b$ then $T_a b = a^+\mult b - a^- \mult b$ where $a^+$ and $a^-$ are the unique orthogonal positive elements such that $a=a^+ - a^-$.
    \end{enumerate}
\end{proposition}
\begin{proof}
    Write $a$ as a spectral decomposition into atomic effects: $a=\sum_i \lambda_i p_i$. We first note that we of course have $a*b = \sum_i \lambda_i p_i*b$ so that the definition is independent of how $b$ is represented as a sum of atomic sharp effects. By the previous lemma the product is commutative and therefore we see it is bilinear and well-defined.
    \begin{enumerate}
        \item Suppose $a\commu b$. By considering the classical algebra spanned by both $a$ and $b$ we can find an orthogonal set of atomic sharp effects $p_i$ such that $a=\sum_i \lambda_i p_i$ and $b=\sum_i \mu_i p_i$. Since $p_i*p_j=0$ for $i\neq j$ we have $p_i*(c*p_j) = p_j*(c*p_i)$ by Lemma~\ref{lem:atomiccommute} so that we can then write
        $$T_bT_a c = b*(a*c) = \sum_{i,j} \mu_i\lambda_j p_i*(c*p_j) = \sum_{i,j} \mu_i \lambda_j p_j*(c*p_i) = a*(c*b) = T_aT_b c$$
        which holds for all $c$. We conclude that $T_bT_a=T_aT_b$.

        \item If atomic $p_i$ commutes with $b$, then $p_i*b = \frac{1}{2}(\id + U_{p_i} - U_{p_i^\perp})b = \frac{1}{2}(b + p_i\mult b - p_i^\perp b) = \frac{1}{2}((p_i+p_i^\perp)\mult b + p_i\mult b - p_i^\perp b) = p_i\mult b$. So by writing $a=\sum_i \lambda_i p_i = \sum_{i, \lambda_i>0} \lambda p_i - \sum_{i, \lambda_i<0}\lvert \lambda_i\rvert p_i = a^+ - a^-$ with $p_i\commu b$ the desired result follows by linearity. \qedhere
    \end{enumerate}
\end{proof}

\begin{theorem}\label{theor:seqprodisjordan}
    Let $V$ be a finite-dimensional sequential product space, then it is a Euclidean Jordan algebra with the Jordan product as defined above.
\end{theorem}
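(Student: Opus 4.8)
The plan is to verify the two Euclidean-Jordan-algebra axioms that do not already follow from Proposition~\ref{prop:jordanproduct}, namely the Jordan identity and the compatibility of the product with an inner product; for the latter I will reuse the transition-probability inner product $\inn{\cdot,\cdot}$ constructed in Section~\ref{sec:subselfdual}. Since $*$ is already known to be bilinear and commutative, writing $T_a(b)=a*b$ the Jordan identity takes the operator form $T_aT_{a*a}=T_{a*a}T_a$, and the Euclidean condition becomes the self-adjointness of every $T_a$ with respect to $\inn{\cdot,\cdot}$.

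For the Jordan identity I would first compute $a*a$ from a spectral decomposition $a=\sum_i\lambda_i p_i$ into orthogonal atomic effects. Since Lemma~\ref{lem:atomiccommute} gives $p_i*p_j=0$ for $i\neq j$ and $p_i*p_i=p_i$, bilinearity yields $a*a=\sum_i\lambda_i^2 p_i$. Thus $a$ and $a*a$ are diagonalised by the same orthogonal atomic family, so they are compatible and Proposition~\ref{prop:jordanproduct} applies to give $T_aT_{a*a}=T_{a*a}T_a$, which is exactly the Jordan identity.

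For the Euclidean condition it suffices, by bilinearity and $T_a=\sum_i\lambda_i T_{p_i}$, to show that $T_p$ is self-adjoint for every atomic $p$, i.e.\ that $2T_p=\id+L_p-L_{p^\perp}$ is self-adjoint. The multiplication $L_p$ is easy: for atomic $q,r$ Proposition~\ref{prop:atomicprod} gives $p\mult q=\inn{p,q}p$, whence $\inn{L_p q,r}=\inn{p,q}\inn{p,r}$, which is symmetric in $q$ and $r$, so $L_p$ is self-adjoint. The genuine difficulty is the complement term $L_{p^\perp}$: because the assignment $a\mapsto L_a$ is not additive (already $L_{p}+L_{p^\perp}\neq\id$, as one sees on $B(H)^{\sa}$, where it is the diagonal part), I cannot split $L_{p^\perp}$ into its atomic constituents, and a direct computation via Lemma~\ref{lem:restrict} reduces $\inn{p^\perp\mult q,r}$ to $\inn{p\vee q-p,q}\inn{p\vee q-p,r}$, an expression whose symmetry in $q$ and $r$ is a genuinely rank-three phenomenon.

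To overcome this I would pass to the order ideal $V'$ generated by $p,q,r$, which has rank at most three. The sequential product restricts to $V'$, so $V'$ is itself a finite-dimensional sequential product space and, by the main theorem, a Euclidean Jordan algebra whose canonical trace form agrees with the restriction of $\inn{\cdot,\cdot}$. On this algebra the sharp compression $b\mapsto p^\perp\mult b$ is forced to coincide with the quadratic map $U_{p^\perp}$, which is self-adjoint for the trace form; this identification of the sequential product by a sharp effect with the canonical compression is the step I expect to be the main obstacle. Granting it, $L_{p^\perp}$ is self-adjoint, hence so is $T_p$, and the Euclidean condition holds. Equivalently, one can bypass the hand computation of $T_p$: the Peirce calculus shows $\tfrac12(\id+U_p-U_{p^\perp})=L_p^{\circ}$ for the canonical Jordan product $\circ$ of the underlying algebra, so the same identification yields $T_p=L_p^{\circ}$ and hence $*=\circ$ outright, which proves both axioms simultaneously.
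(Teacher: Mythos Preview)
Your Jordan-identity argument is the paper's: from a spectral decomposition one gets $a*a=\sum_i\lambda_i^2 p_i$ (the paper writes this as $a*a=a^2$ via Proposition~\ref{prop:jordancomm}), hence $a\commu a*a$, and Proposition~\ref{prop:jordanproduct} gives $T_aT_{a*a}=T_{a*a}T_a$.

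For the Euclidean condition you take a genuinely different and harder route, and the gap you flag is real. The paper never touches self-adjointness of $T_p$ directly; instead it observes that $a*a=a^2\geq 0$, so $\sum_i a_i*a_i=0$ forces every $a_i=0$, i.e.\ the algebra is \emph{formally real}, and then cites the standard fact (e.g.\ Faraut--Kor\'anyi, Proposition~VIII.4.2) that a finite-dimensional formally real Jordan algebra is Euclidean. This sidesteps the entire $L_{p^\perp}$ difficulty in one line. Your approach---reducing to a rank-$\leq 3$ ideal and identifying $L_{p^\perp}$ with the Jordan quadratic map $U_{p^\perp}$---can be made to work, since the compression onto the face below a sharp effect is unique in the Alfsen--Shultz sense, but establishing that uniqueness (or, equivalently, that $L_p$ is self-adjoint for sharp $p$) is itself nontrivial; the paper's Note after the proof sketches exactly this alternative via the observation that $\Theta=L_{a^{-1}}L_a^*$ is a unital order-isomorphism with $\Theta^{-1}=\Theta^*$. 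So the step you call ``the main obstacle'' is not smaller than what you are trying to prove, and the formal-reality shortcut is the intended argument.
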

\begin{proof}
    We have already established that the product is bilinear and commutative. Note that since $a\commu a$ we get $a*a = a^+\mult a - a^-\mult a = (a^+)^2 + (a^-)^2 = a^2$ using Proposition~\ref{prop:jordancomm}. Because of course $a\commu a^2$ we get $T_aT_{a*a} = T_{a*a}T_a$ as a consequence of Proposition~\ref{prop:jordancomm1} so that the Jordan identity holds. Since $a*a = a^2\geq 0$ we also see that the algebra is \emph{formally real}: if $\sum_i a_i*a_i = 0$ then for all $i$: $a_i=0$. It is a well-known result (see for instance~\cite[Proposition VIII.4.2]{faraut1994analysis}) that if a Jordan product is formally real, that the algebra is Euclidean, with the product being symmetric with regards to the (essentially unique) self-dual inner product.
\end{proof}

\begin{remark}
    It is also possible to show in a more direct manner that the Jordan product is symmetric with respect to the inner product. We sketch here how to do so. First it must be established that $L_a$ for invertible $a$ commute with their adjoints $L_a^*$ by exploiting the fact that $\Theta = L_{a^{-1}}L_a^*$ must be a unital order-isomorphism necessarily satisfying $\Theta^{-1} = \Theta^*$. Because the mapping $a\mapsto L_a$ is continuous, the result extends to all $a$ and hence it holds in particular for $L_p$ with $p$ sharp. It is a standard result that an idempotent map that commutes with its adjoint is in fact self-adjoint. Since the Jordan product is defined as a linear combination of product maps of sharp effects this indeed establishes the desired result.
\end{remark}
    
\section{Local Tomography and C*-algebras}\label{sec:loctom}
In this section we will let $V$ and $W$ be finite-dimensional sequential product spaces, and hence by the previous sections, they will be Euclidean Jordan algebras. For the duration of this section we will assume that they have a locally tomographic composite (see Definition~\ref{def:localcomposite}), i.e.\ that their linear algebraic tensor product $V\otimes W$ is also a sequential product space and that the sequential product satisfies
$$ (a_1\otimes b_1)\mult (a_2\otimes b_2) = (a_1\mult a_2) \otimes (b_1\mult b_2).$$

\noindent Note that by definition of the tensor product, any element of $V\otimes W$ can be written as $\sum_i \lambda_i a_i\otimes b_i$ where $a_i\in V$, $b_i\in W$ and $\lambda_i \in \R$.

\begin{proposition}\label{prop:atomictensor}
    Let $p\in V$ and $q\in W$ be atomic, then $p\otimes q\in V\otimes W$ is also atomic.
\end{proposition}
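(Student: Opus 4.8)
My plan is to verify atomicity of $p\otimes q$ through the multiplicative characterisation of atoms given in Lemma \ref{lem:atomicnorm}, rather than by inspecting the order structure of $V\otimes W$ directly. Concretely, $p$ and $q$ are effects, so $p\otimes q$ is a nonzero effect of $V\otimes W$, and it suffices to show that $(p\otimes q)\mult x$ is a scalar multiple of $p\otimes q$ for every effect $x$. The key input is that atomicity of $p$ in $V$ means, by Lemma \ref{lem:atomicnorm} together with the linearity of $L_p$, that there is a (state) functional $\omega_p$ with $p\mult a=\omega_p(a)\,p$ for all $a\in V$, and symmetrically $q\mult b=\omega_q(b)\,q$ on $W$; this is precisely the content used in Proposition \ref{prop:atomicprod}.

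For the computation I would take an arbitrary effect $x\in V\otimes W$ and first write it as a finite real combination of simple tensors of effects, $x=\sum_i \mu_i\,a_i\otimes b_i$ with $a_i\in[0,1]_V$ and $b_i\in[0,1]_W$; this is possible because $V$ and $W$ are each spanned by their effects. Using additivity \ref{ax:add} and the bilinear extension of the sequential product, together with the defining identity $(a_1\otimes b_1)\mult(a_2\otimes b_2)=(a_1\mult a_2)\otimes(b_1\mult b_2)$ on the composite, I would expand
\[
(p\otimes q)\mult x=\sum_i \mu_i\,(p\mult a_i)\otimes(q\mult b_i)=\sum_i \mu_i\,\omega_p(a_i)\,\omega_q(b_i)\,(p\otimes q),
\]
so that $(p\otimes q)\mult x=\lambda_x\,(p\otimes q)$ for the scalar $\lambda_x=\sum_i \mu_i\,\omega_p(a_i)\,\omega_q(b_i)$. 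Thus $(p\otimes q)\mult x$ is always proportional to $p\otimes q$.

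To conclude, I would first observe that taking $x=p\otimes q$ gives $(p\otimes q)\mult(p\otimes q)=(p\mult p)\otimes(q\mult q)=p\otimes q$ since $p,q$ are sharp, so by Proposition \ref{prop:sharpness} the effect $p\otimes q$ is sharp. Then for any effect $a\leq p\otimes q$, sharpness and Proposition \ref{prop:belowsharp} give $(p\otimes q)\mult a=a$, while the computation above gives $(p\otimes q)\mult a=\lambda_a\,(p\otimes q)$; comparing the two yields $a=\lambda_a\,(p\otimes q)$, which is exactly the defining property of an atom. The only point requiring a little care, and the place I expect the main (though still routine) obstacle, is the reduction of a general element of $V\otimes W$ to a combination of simple tensors of effects and the appeal to the bilinearly extended product formula: since the decomposition of $x$ is not canonical, I would note that $(p\otimes q)\mult x$ is an intrinsically defined element, so the resulting scalar $\lambda_x$ is independent of the chosen representation and the argument goes through.
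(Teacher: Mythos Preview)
Your proposal is correct and follows essentially the same route as the paper: both write an arbitrary element of $V\otimes W$ as a finite combination of simple tensors, use the compatibility $(p\otimes q)\mult(a\otimes b)=(p\mult a)\otimes(q\mult b)$ together with the fact that $L_p$ and $L_q$ collapse everything onto scalar multiples of $p$ and $q$ (Lemma~\ref{lem:atomicnorm}), and conclude that $L_{p\otimes q}$ has one-dimensional range. The only cosmetic difference is that the paper closes by invoking the ``if'' direction of Lemma~\ref{lem:atomicnorm} directly, whereas you first record sharpness and then verify the definition of an atom via Proposition~\ref{prop:belowsharp}; both finishes are immediate.
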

\begin{proof}
    Because $(p\otimes q)\mult (p\otimes q) = (p\mult p)\otimes (q\mult q) = p\otimes q$ it is sharp. Let $c = \sum_i \lambda_i a_i\otimes b_i$ be an arbitrary element of $V\otimes W$, then using Lemma~\ref{lem:atomicnorm} $(p\otimes q)\mult c = \sum_i \lambda_i (p\mult a_i)\otimes (q\mult b_i) = \sum_i \lambda_i \norm{p\mult a_i}\norm{q\mult b_i} (p\otimes q) = \mu (p\otimes q)$ for some $\mu\in \R$. Since $c$ was arbitrary we conclude that $p\otimes q$ is atomic as a result of Lemma~\ref{lem:atomicnorm}.
\end{proof}

\begin{definition}
    Let $V$ be a sequential product space with effects $E$. We call $c\in V$ \emph{classical} when it is sharp and compatible with all other effects: $a\commu c$ for all $a\in E$. We will call a classical effect \emph{minimal} when there is no non-zero classical effect strictly below it.
\end{definition}

\begin{proposition}\label{prop:classicaltensor}
    Let $c\in V$ and $d\in W$ be classical, then $c\otimes d$ is classical in $V\otimes W$.
\end{proposition}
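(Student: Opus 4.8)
The plan is to show directly that $c\otimes d$ is compatible with every effect of $V\otimes W$. First I would record that $c\otimes d$ is indeed an effect, since $0\le c\le 1_V$ and $0\le d\le 1_W$ give $0\le c\otimes d\le 1_V\otimes 1_W=1$, so it makes sense to ask whether it is classical.

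The first step treats simple tensors. For effects $a\in V$ and $b\in W$, the assumed product rule gives $(a\otimes b)\mult(c\otimes d)=(a\mult c)\otimes(b\mult d)$, and since $c$ and $d$ are classical we have $a\mult c=c\mult a$ and $b\mult d=d\mult b$, so this equals $(c\mult a)\otimes(d\mult b)=(c\otimes d)\mult(a\otimes b)$. Hence every simple tensor effect $a\otimes b$ is compatible with $c\otimes d$. The second step extends this to all effects. Consider the set $S=\{x\in[0,1]_{V\otimes W}~;~x\commu (c\otimes d)\}$. By the first step $S$ contains all simple tensor effects, and by definition of the tensor product these span $V\otimes W$. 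I would then show $S$ is closed under the operations that let us reconstruct an arbitrary effect from this spanning set: closure under complementation and under addition (of effects whose sum is again an effect) both follow from \ref{ax:compadd} applied with $c\otimes d$ in the role of the fixed compatible element, while closure under multiplication by a scalar in $[0,1]$ follows from proposition \ref{prop:commumult}. Given an arbitrary effect $x$, written as a real-linear combination $\sum_i \lambda_i (a_i\otimes b_i)$ of simple tensor effects, a routine rescaling-and-partial-sum argument (choose $n\ge\sum_i|\lambda_i|$, assemble the positive and negative parts of $\tfrac1n x$ as sums that stay below $1$, take their difference using a complement, then add $\tfrac1n x$ to itself $n$ times, each partial sum remaining an effect) places $x$ in $S$. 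Thus $S=[0,1]_{V\otimes W}$ and $c\otimes d$ is classical.

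The key point to get right, and the main obstacle, is precisely this extension step. One cannot simply expand $x\mult(c\otimes d)$ in its first argument, since the sequential product is not linear there; the expansion $(c\otimes d)\mult x=(L_c\otimes L_d)(x)$ is available for all $x$ (by linearity of $L_{c\otimes d}$ together with the computation of the first step), but the reverse equality $x\mult(c\otimes d)=(L_c\otimes L_d)(x)$ is exactly the compatibility we are trying to establish, so this route is circular on its own. What rescues the argument is that \ref{ax:compadd} closes the set of effects compatible with the fixed element $c\otimes d$ under sums \emph{without} requiring the summands to be mutually compatible; consequently the simple tensors $a_i\otimes b_i$, which need not commute with one another, cause no difficulty, and the closure properties alone suffice to propagate compatibility from the spanning set to every effect.
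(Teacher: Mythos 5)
Your proof is correct and takes essentially the same route as the paper: establish compatibility of $c\otimes d$ with simple tensor effects via the product rule and the classicality of $c$ and $d$, then extend to all effects using closure of the set of effects compatible with a fixed element under complements, sums and scalars (\ref{ax:compadd} and proposition \ref{prop:commumult}). The paper compresses the second step into the single clause ``the same holds for linear combinations''; you correctly identify the non-linearity of $\mult$ in its first argument as the point needing care and supply the missing detail, in particular the difference closure $u-v=(u^\perp+v)^\perp$.
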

\begin{proof}
    $c\commu a$ for all $a \in V$ and $d\commu b$ for all $b\in W$, therefore $c\otimes d\commu a\otimes b$ and the same holds for linear combinations of these elements which span the entirety of $V\otimes W$.
\end{proof}

We call an EJA \emph{simple} if it contains no non-trivial classical effects. We can write any EJA uniquely as $E_1\oplus\ldots \oplus E_k$ where the $E_i$ are simple EJAs, which we will refer to as the \emph{summands} of the EJA. An EJA with $k$ summands has exactly $k$ minimal classical effects, corresponding to the units of each of the summands. Each other classical effect is a sum of these minimal ones.

\begin{lemma}\label{lem:atomicoverlap}
    Let $p$ and $q$ be atomic effects in an EJA. If $q\mult p \neq 0$, then $p$ and $q$ belong to the same simple summand.
\end{lemma}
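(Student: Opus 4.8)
The plan is to localise atomic effects inside single simple summands and then transport this localisation across the product. Recall that we may write the EJA as $E_1\oplus\ldots\oplus E_k$ with minimal classical effects $c_1,\ldots,c_k$, where each $c_i$ is sharp and classical, the $c_i$ are pairwise orthogonal, and $\sum_i c_i = 1$. The key structural fact I would establish first is that \emph{every atomic effect lies below exactly one $c_i$}, i.e.\ sits inside a single summand.

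To prove this localisation, fix an atomic effect $p$. Since each $c_i$ is classical it is compatible with $p$, so by proposition \ref{prop:joinsharp} the product $c_i\mult p$ is sharp, and by proposition \ref{prop:decreasing} it satisfies $c_i\mult p\leq p$. Atomicity then forces each $c_i\mult p$ to be either $0$ or $p$. Expanding $p = 1\mult p = \sum_i c_i\mult p$ and letting $m$ be the number of indices for which $c_i\mult p = p$, we obtain $m\,p = p$, so that $(m-1)p = 0$; since $p\neq 0$ this forces $m=1$. Hence there is a unique index $i$ with $c_i\mult p = p$, which by proposition \ref{prop:belowsharp} means $p\leq c_i$. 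The same argument applied to $q$ gives a unique $j$ with $q\leq c_j$.

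With $p\leq c_i$ in hand, the final step is to show that $q\mult p\neq 0$ forces $q\leq c_i$, i.e.\ $j=i$. Because $c_i$ is classical we have $q\commu c_i$, so again by proposition \ref{prop:joinsharp} the effect $q\mult c_i$ is sharp and below $q$, hence equal to $0$ or $q$ by atomicity of $q$. Suppose towards a contradiction that $q\mult c_i = 0$. From $p\leq c_i$ we have $c_i\mult p = p$ by proposition \ref{prop:belowsharp}, and since $q\commu c_i$ we may invoke associativity \ref{ax:assoc} to compute
\[
q\mult p = q\mult (c_i\mult p) = (q\mult c_i)\mult p = 0\mult p = 0,
\]
contradicting $q\mult p\neq 0$. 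Therefore $q\mult c_i = q$, so $q\leq c_i$ by proposition \ref{prop:belowsharp}, and $p$ and $q$ share the summand $E_i$.

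The only genuinely delicate point is the localisation step: one must be sure that the decomposition $\sum_i c_i\mult p$ pins $p$ to a single $c_i$. This is handled cleanly by the counting argument $m\,p = p$ above, which turns atomicity and the unital identity into a purely numerical constraint. Everything else is a direct application of compatibility (from classicality of the $c_i$), associativity, and the characterisation of effects lying below a sharp effect, so no further obstacles are expected.
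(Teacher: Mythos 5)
Your proof is correct and takes essentially the same route as the paper's: both establish that an atomic effect is either below or orthogonal to each sharp classical effect, and then use $q\mult p\neq 0$ to force $q$ under the minimal classical unit containing $p$ (the paper closes with monotonicity, $q\mult p\leq q\mult c$, where you use associativity — a cosmetic difference, as is your extra counting argument pinning $p$ to a unique summand). The only step to tidy is $1\mult p=\sum_i c_i\mult p$, which tacitly uses additivity of $\&$ in its \emph{first} argument — not an axiom — but it is justified here because each $c_i$ is compatible with $p$, so $\sum_i c_i\mult p=\sum_i p\mult c_i=p\mult 1=p$ by \ref{ax:add}.
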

\begin{proof}
    Suppose $c$ is a sharp classical effect. Then $c\mult p = p\mult c = \lambda p$, for some $\lambda\in[0,1]$. But also $\lambda p = c\mult p=(c^2)\mult p = c\mult (c\mult p) = \lambda (c\mult p) = \lambda^2 p$, so that $\lambda = 1$ or $\lambda = 0$. So either $p\leq c$ or $p$ and $c$ are orthogonal. Suppose $p\leq c$, and that $q\mult p \neq 0$, then $0\neq q\mult p \leq q\mult c =\lambda q$, so that $\lambda\neq 0$ and thus $q\leq c$. If we let $c$ be the identity of the summand that $p$ belongs to we see that the desired property follows.
\end{proof}

\begin{lemma}
    Let $p_1,\ldots,p_r$ be a maximal collection of orthogonal non-zero atomic effects in a simple EJA, then there exists an atomic effect $q$ such that $q\mult p_i\neq 0$ for all $i$.
\end{lemma}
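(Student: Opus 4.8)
The plan is to translate the statement into a condition on transition probabilities and then build the desired atom by a maximality-plus-perturbation argument. By Proposition~\ref{prop:atomicprod} we have $q\mult p_i=\omega_q(p_i)\,q$, so $q\mult p_i\neq 0$ is equivalent to $\omega_q(p_i)>0$, i.e.\ to $q$ not being orthogonal to $p_i$. First I would record that maximality of the orthogonal family forces $\sum_i p_i=1$: otherwise $1-\sum_i p_i$ is a nonzero sharp effect, which dominates some atomic effect orthogonal to every $p_i$, contradicting maximality. Hence $r=\rnk V$ and, for any atom $q$, the numbers $\omega_q(p_i)$ form a probability distribution summing to $1$; the goal becomes to find an atom whose distribution $\big(\omega_q(p_1),\dots,\omega_q(p_r)\big)$ has full support.

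Next I would run a maximality argument. Among all atoms choose one, $q$, whose support $S(q):=\{i:\omega_q(p_i)>0\}$ is maximal for inclusion, and suppose for contradiction that $S(q)\neq\{1,\dots,r\}$. The engine for enlarging the support is the fact, already established in the paper, that for two distinct atoms the order ideal $V_{q,p_k}$ is a spin factor. Suppose we can find an index $k\notin S(q)$ for which this spin factor is \emph{non-classical}, i.e.\ contains atoms other than the two poles $q$ and $p_k$. Since $k\notin S(q)$ we have $q\perp p_k$, so $p_k=q^\perp$ inside the corner $q\vee p_k=q+p_k$, and any atom $r\le q\vee p_k$ satisfies $\omega_r(q)+\omega_r(p_k)=1$; by Cauchy--Schwarz for the self-dual inner product, $\omega_r(q)=\inn{r,q}<1$ whenever $r\neq q$, so every such $r\neq q$ already has $\omega_r(p_k)>0$. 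Taking a sequence of such atoms $r\to q$ inside the spin factor and using continuity of $r\mapsto\omega_r(p_i)=\inn{r,p_i}$, we get $\omega_r(p_i)>0$ for every $i\in S(q)$ once $r$ is close enough to $q$, while $\omega_r(p_k)>0$. Then $S(r)\supseteq S(q)\cup\{k\}$ strictly, contradicting maximality and finishing the proof.

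The main obstacle is precisely the existence of such a non-classical corner linking $S(q)$ to its complement; this is the only place where simplicity of the EJA is used. Equivalently, one must rule out that $q$ is orthogonal to, and disconnected from, every $p_k$ with $k\notin S(q)$, where \emph{disconnected} means $V_{q,p_k}\cong\R^2$, so that no atom straddles $q$ and $p_k$. I would attack this by showing that otherwise the sharp effect $g:=\sum_{i\in S(q)}p_i$ (which satisfies $q\le g$ and $0\neq g\neq 1$) is a non-trivial classical effect: as in the proof of Lemma~\ref{lem:atomicoverlap}, a sharp classical effect $c$ obeys $c\mult s=\lambda s$ with $\lambda\in\{0,1\}$ for every atom $s$, i.e.\ every atom lies below $c$ or is orthogonal to it, so it suffices to verify this dichotomy for $g$, which contradicts simplicity.

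Carrying this last step out cleanly is the delicate point. A \emph{straddling} atom $s$, with $0<\omega_s(g)<1$, witnesses that $g$ is not classical, but $s$ need not overlap $q$ itself, so turning ``$s$ connects the region $S(q)$ to its complement'' into ``$q$ connects to some $p_k$'' requires a merge step. I expect to handle this either by invoking the standard fact that the Jordan frame of a simple EJA is connected and propagating the perturbation along a chain through a shared frame atom $p_j\in S(q)$ with $p_j$ connected to $p_k$, or by first choosing $q$ to maximise $|S(q)|$ while simultaneously overlapping a straddling atom $s$, so that the non-classical spin-factor corner $V_{q,s}$ can be used in place of $V_{q,p_k}$ in the perturbation above. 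In either form, the enabling structural input is the paper's classification of the rank-$2$ corners $V_{p,q}$ as spin factors together with continuity of the transition probabilities.
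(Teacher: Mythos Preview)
Your approach is quite different from the paper's. The paper simply invokes the classification of simple EJAs (legitimate here, since the lemma comes after the main theorem) and verifies the claim case by case: for a spin factor any atom distinct from $p,p^\perp$ works, and for $B(H)^{\sa}$ over $\R,\C,\mathbb{H}$ or the octonions one writes $p_i=\lvert v_i\rangle\langle v_i\rvert$ and takes $q=\lvert w\rangle\langle w\rvert$ with $w=r^{-1/2}\sum_i v_i$. That is a three-line proof with no structural machinery beyond the classification itself.

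Your reformulation via $q\mult p_i=\omega_q(p_i)\,q$ and the perturbation argument inside a rank-two corner are both correct and pleasant. But the gap you yourself flag is real, and neither of your proposed fixes closes it as stated. Frame connectedness only controls the corners $V_{p_j,p_k}$, not $V_{q,p_k}$; perturbing near $p_j$ instead of near $q$ loses the indices in $S(q)\setminus\{j\}$, so ``propagating along a chain'' does not obviously preserve the previously acquired support. Replacing $p_k$ by a straddling atom $s$ not orthogonal to $q$ breaks the key identity $\omega_r(q)+\omega_r(p_k)=1$, so positivity of $\omega_r(p_k)$ for $r$ near $q$ is no longer automatic; it may vanish to first order and one then needs a separate analyticity argument on the sphere of atoms. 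What does close the gap cleanly is the standard structural fact that in a simple EJA of rank $\geq 2$ the Peirce off-diagonal space between \emph{any} two orthogonal primitive idempotents is nonzero; this makes \emph{every} corner $V_{q,p_k}$ with $k\notin S(q)$ a genuine spin factor, and your perturbation goes through immediately. That fact, however, lives at the same depth of Jordan-algebra structure theory as the classification the paper uses, so while your route is more conceptual, it is not more elementary.
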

\begin{proof}
    We do this by case distinction using the classification of simple EJAs \cite{jordan1993algebraic}. Either the space is a spin-factor, in which case a maximal collection is always given by a sharp atomic $p$ and its complement $p^\perp$. Any $q\neq p,p^\perp$ cannot be orthogonal to them, because the space is of rank 2, so that indeed $q\mult p\neq 0$ and $q\mult p^\perp \neq 0$.

    If the space is not a spin-factor, then it must be of the form $B(H)^\sa$ for a real, complex, quaternionic or octonion finite-dimensional Hilbert space $H$ (in the case of the octonions we must have $\dim H=3$). For such a space the atomic idempotents correspond to unit vectors of the underlying Hilbert space: $p_i = \lvert v_i\rangle\langle v_i \rvert$ where $v_i\in H$ is some unit vector. We can then take $q=\lvert w\rangle\langle w \rvert$ with $w=\frac{1}{\sqrt{r}}\sum_{i=1}^r v_i$. It should then be clear that $\inn{q,p_i}\neq 0$ and hence $q\mult p_i \neq 0$.
\end{proof}

\begin{proposition}\label{prop:tensordirectsum}
    Let $V = E_1\oplus\ldots \oplus E_m$ and $W = F_1\oplus \ldots \oplus F_n$ with the $E_i$ and $F_j$ being simple EJAs. Let $1\leq k \leq n$ and $1\leq l\leq m$. Let $p_1,\ldots, p_r$ be a maximal collection of orthogonal non-zero atomic effects in $E_k$, and let $q_1,\ldots, q_s$ be such a maximal collection in $F_l$. Then $(p_i\otimes q_j)_{i=1,j=1}^{r,s}$ belong to the same simple summand in $V\otimes W$ and they form a maximal collection of orthogonal non-zero atomic effects in this summand.
\end{proposition}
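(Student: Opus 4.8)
The plan is to treat the three assertions—atomicity, orthogonality, and then the combined ``same summand plus maximality'' claim—in turn, leaning on Propositions \ref{prop:atomictensor} and \ref{prop:classicaltensor}, on Lemma \ref{lem:atomicoverlap}, and on the lemma immediately preceding this proposition.

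First I would dispose of the routine parts. Each $p_i\otimes q_j$ is atomic by Proposition \ref{prop:atomictensor}, since $p_i$ and $q_j$ are atomic. For orthogonality, take a pair $(i,j)\neq (i',j')$ and compute $(p_i\otimes q_j)\mult(p_{i'}\otimes q_{j'}) = (p_i\mult p_{i'})\otimes(q_j\mult q_{j'})$ using the defining tensor rule for the product. If $i\neq i'$ then $p_i\mult p_{i'}=0$, and if $j\neq j'$ then $q_j\mult q_{j'}=0$; in either case the whole expression vanishes, so the $p_i\otimes q_j$ are pairwise orthogonal non-zero atomic effects.

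Next, the ``same summand'' claim. Here I would invoke the preceding lemma: since $p_1,\dots,p_r$ is a maximal orthogonal collection in the simple EJA $E_k$, there is an atomic $u\in E_k$ with $u\mult p_i\neq 0$ for every $i$, and likewise an atomic $v\in F_l$ with $v\mult q_j\neq 0$ for every $j$. Then $u\otimes v$ is atomic, and $(u\otimes v)\mult(p_i\otimes q_j)=(u\mult p_i)\otimes(v\mult q_j)$. By Proposition \ref{prop:atomicprod} each factor equals $\omega_u(p_i)\,u$ (resp.\ $\omega_v(q_j)\,v$) with non-zero scalar, so the right-hand side is a non-zero multiple of $u\otimes v$ (a tensor of non-zero vectors is non-zero). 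Lemma \ref{lem:atomicoverlap} then forces every $p_i\otimes q_j$ to lie in the same simple summand of $V\otimes W$ as $u\otimes v$; call this summand $G$.

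Finally, maximality within $G$. I would first use that a maximal orthogonal collection of atomic effects in a simple summand sums to its unit (otherwise the sharp complement of the sum would contain a further atomic effect orthogonal to all of them), so $\sum_i p_i = 1_{E_k}$ and $\sum_j q_j = 1_{F_l}$. Multiplicativity of the tensor product gives $\sum_{i,j} p_i\otimes q_j = 1_{E_k}\otimes 1_{F_l}$, which is classical by Proposition \ref{prop:classicaltensor}. Since the orthogonal atomic effects $p_i\otimes q_j$ all lie in $G$ their sum is $\leq 1_G$, so $1_{E_k}\otimes 1_{F_l}$ is a non-zero classical effect below the minimal classical effect $1_G$, whence $1_{E_k}\otimes 1_{F_l}=1_G$. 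Maximality is then immediate: if $w\leq 1_G$ is atomic and orthogonal to every $p_i\otimes q_j$, then $w\mult 1_G=\sum_{i,j} w\mult(p_i\otimes q_j)=0$ by \ref{ax:add}, so $1_G\mult w=0$ by \ref{ax:orth}, while $w\leq 1_G$ sharp gives $1_G\mult w=w$ by Proposition \ref{prop:belowsharp}; hence $w=0$.

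The hard part will be the maximality step, specifically pinning down that $1_{E_k}\otimes 1_{F_l}$ is exactly the unit $1_G$ of the summand rather than merely a classical effect sitting inside it. This hinges on identifying $1_G$ as a minimal classical effect and on the fact that maximal orthogonal atomic collections exhaust the unit of a simple summand; both follow from the rank and covering machinery already developed, but they must be assembled carefully. By contrast, atomicity and orthogonality are immediate from the multiplicativity of the tensor product.
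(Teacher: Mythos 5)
Your proposal is correct and follows essentially the same route as the paper: find atomic $u\otimes v$ overlapping all $p_i\otimes q_j$ via the preceding lemma, place everything in one summand with Lemma \ref{lem:atomicoverlap}, and identify $1_{E_k}\otimes 1_{F_l}$ with the unit of that summand using Proposition \ref{prop:classicaltensor} and minimality of classical effects. Your final paragraph merely spells out the maximality deduction that the paper leaves implicit.
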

\begin{proof}
    We let $p$ be atomic such that $p\mult p_i\neq 0$ for all $i$ which exists by the previous lemma and similarly we let $q$ be atomic such that $q\mult q_j \neq 0$ for all $j$. By proposition \ref{prop:atomictensor} $p\otimes q$ and $p_i\otimes q_j$ will be atomic for all $i$ and $j$. By construction we of course have $0\neq (p\mult p_i)\otimes (q\mult q_j)=(p\otimes q)\mult (p_i\otimes q_j)$ and by lemma \ref{lem:atomicoverlap} the $p_i\otimes q_j$ must then belong to the summand of $p\otimes q$ for all $i$ and $j$. 

    Since $\sum_i p_i=1_{E_k}$, this sum is a classical effect. The same holds for $\sum_j q_j=1_{F_l}$. Their tensor product $1_{E_k}\otimes 1_{F_l} = \sum_{i,j} p_i\otimes q_j$ is then also classical by proposition \ref{prop:classicaltensor}. Since the only nonzero classical effect in a simple summand is the identity this expresion must be equal to the identity of this summand. As a result the set $(p_i\otimes q_j)_{i,j}$ is indeed maximal.
\end{proof}
Using this proposition we conclude that for each of the summands $E$ of $V$ and $F$ of $W$ there must exist a summand in $V\otimes W$ which has rank $\rnk~ E~ \rnk~ F$. Because the tensor product map is obviously injective this factor must have dimension at least $\dim E~ \dim F$, and then because of local tomography the dimension must be strictly equal. Now let $V$ be a sequential product space for which the tensor product $V\otimes V$ exists so that the above must in particular be true when $E=F$, i.e. if $E$ is a simple factor of $V$ then there must exist a simple factor with rank $(\rnk E)^2$ and dimension $(\dim E)^2$.

\begin{proposition}\label{prop:simpleEJAsquare}
    Let $E$ be a simple Euclidean Jordan algebra of rank $r> 1$ and dimension $N$. There exists a simple Euclidean Jordan algebra of rank $r^2$ and dimension $N^2$ if and only if $E=B(H)^\sa$ where $H$ is a complex finite-dimensional Hilbert space.
\end{proposition}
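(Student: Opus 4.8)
The plan is to reduce everything to the classification of simple Euclidean Jordan algebras \cite{jordan1993algebraic}, which states that every simple EJA is a spin-factor (rank $2$), the exceptional algebra $H_3(\mathbb{O})$ (rank $3$, dimension $27$), or a Hermitian matrix algebra $H_m(\mathbb{K})$ over $\mathbb{K}\in\{\R,\C,\mathbb{H}\}$ of real dimension $d=\dim_\R\mathbb{K}\in\{1,2,4\}$, the latter having rank $m$ and dimension $m+\binom{m}{2}d$. The key initial observation I would make is that a hypothetical simple EJA $G$ of rank $r^2$ has rank equal to a perfect square $>1$, hence not $2$ or $3$; since spin-factors have rank $2$ and $H_3(\mathbb{O})$ has rank $3$, the algebra $G$ must itself be a matrix algebra $H_{r^2}(\mathbb{K})$ for some $\mathbb{K}$. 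The problem then becomes purely numerical: decide for which $E$ the equation $N^2=\dim H_{r^2}(\mathbb{K})=r^2+\binom{r^2}{2}d_G$ is solvable with $d_G\in\{1,2,4\}$.

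For the forward (``if'') implication I would simply display the witness. If $E=B(H)^\sa=H_n(\C)$ for complex $H$ of dimension $n$, then $r=n$ and $N=n^2$, and the complex matrix algebra $H_{n^2}(\C)=B(H')^\sa$ (with $\dim_\C H'=n^2$) is simple of rank $n^2=r^2$ and dimension $(n^2)^2=N^2$, exactly as required.

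For the converse I would rewrite the dimension equation as
\[
 d_G \;=\; \frac{2\,(N^2 - r^2)}{r^2\,(r^2-1)},
\]
so that an admissible $G$ exists precisely when this number is $1$, $2$, or $4$, and then evaluate it as $E$ ranges over the classification. A quick sanity check comes from the leading-order growth: writing $E=H_n(\mathbb{K}')$ with $d=\dim_\R\mathbb{K}'$ gives $N\sim\tfrac{d}{2}n^2$ and $\dim H_{r^2}(\mathbb{K})\sim\tfrac{d_G}{2}n^4$, so matching the $n^4$ coefficients forces $d_G=\tfrac{d^2}{2}$, which lands in $\{1,2,4\}$ only for $d=2$. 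To make this exact for all $n>1$ I would compute the closed form of $d_G$ in each family: $E=H_n(\R)$ yields $d_G=\tfrac{n+3}{2(n+1)}\in(\tfrac12,\tfrac56]$, $E=H_n(\mathbb{H})$ yields $d_G=\tfrac{8n}{n+1}\in[\tfrac{16}{3},8)$, and $E=H_3(\mathbb{O})$ yields $d_G=20$, none of which lies in $\{1,2,4\}$, whereas $E=H_n(\C)$ yields $d_G=2$ identically.

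The one family needing genuine care --- and the step I expect to be the main obstacle --- is the spin-factors, precisely because of the low-dimensional coincidences (a spin-factor can secretly be a small matrix algebra). Here $r=2$, so $G$ must be $H_4(\mathbb{K})$ of dimension $10$, $16$ or $28$, and demanding $N^2$ to equal one of these forces $N=4$ (from $N^2=16$), the values $10$ and $28$ being ruled out since they are not perfect squares. But the four-dimensional spin-factor is exactly $H_2(\C)=B(\C^2)^\sa$, a complex matrix algebra, so the conclusion remains true. I would finish by observing that this is the only overlap --- for $n\geq 3$ the algebra $H_n(\C)$ has rank $n\geq 3$ and is never a spin-factor --- so the converse isolates exactly the complex $B(H)^\sa$, completing the equivalence.
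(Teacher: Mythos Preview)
Your proof is correct and follows essentially the same approach as the paper: invoke the classification of simple EJAs and check case by case whether a simple EJA of rank $r^2$ and dimension $N^2$ exists. Your presentation is somewhat tidier than the paper's, which argues each non-complex case by ad hoc inequalities (e.g.\ showing $N^2$ exceeds the maximal possible dimension at rank $r^2$) and handles the real case with only the phrase ``by dimension counting''; your device of solving for $d_G = 2(N^2-r^2)/\bigl(r^2(r^2-1)\bigr)$ and checking membership in $\{1,2,4\}$ unifies these computations, and your explicit observation that $r^2\geq 4$ forces $G$ to be a matrix algebra over $\R$, $\C$, or $\mathbb{H}$ (ruling out spin-factors and the exceptional algebra for $G$ at the outset) is a point the paper leaves implicit.
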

\begin{proof}
    If $E=B(H)^\sa$ is a complex matrix algebra the property is obviously true by considering $B(H\otimes H)^\sa$ as the simple EJA of rank $r^2$ and dimension $N^2$. We simply check that every other simple EJA is not a possibility. 
    If $E= B(H)^\sa$ where $H$ is the 3-dimensional octonion Hilbert space, then $r=3$ and $N=27$. The highest dimensional simple EJA of rank $9$ is the quaternionic system which has dimension $9*(2*9-1)=153<27^2 = 729$ so that this is not possible. 
    If $E=B(H)^\sa$ with $H$ quaternionic, then $N = r(2r-1)$. The highest dimensional simple EJA of rank $r^2$ is also quaternionic so that its dimension is $r^2(2r^2-1)$. It is easy to check that $N^2 = r^2(2r-1)^2 > r^2(2r^2-1)$ when $r>1$ so that again, it cannot be this space. 
    If $E=B(H)^\sa$ where $H$ is real, then by dimension counting we can again see that there does not exist an EJA with rank $r^2$ and dimension $N^2$.
    A spin factor always has rank 2. The rank 4 EJAs have dimension 10, 16 and 28. The only one of these which is a square is 16. The 4 dimensional spin-factor corresponds to the qubit which is indeed $B(H)^\sa$ with $H$ a 2-dimensional complex Hilbert space.
\end{proof}

\begin{theorem}\label{theor:seqprodlocalcomp}
    Suppose $V$ is a finite-dimensional sequential product space which as a locally tomographic composite with itself, i.e. for which the linear algebraic tensor product $V\otimes V$ is also a sequential product space satisfying $(a\otimes b)\mult (c\otimes d) = (a\mult c)\otimes (b\mult d)$ for all $a,b,c,d \in V$. Then there exists a C$^*$-algebra $A$ such that $V\cong A^{\text{sa}}$.
\end{theorem}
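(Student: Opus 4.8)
The plan is to combine the main structure theorem with the two counting propositions just proved. By the theorem of Section~\ref{sec:jordanproduct} the space $V$ is a Euclidean Jordan algebra, and so is $V\otimes V$, since by hypothesis it too is a finite-dimensional sequential product space. Write $V = E_1\oplus\ldots\oplus E_m$ as a direct sum of simple EJAs. The goal is to show that each summand $E_i$ is a complex matrix algebra $B(H_i)^\sa$; once this is done, $A := B(H_1)\oplus\ldots\oplus B(H_m)$ is a finite-dimensional C$^*$-algebra with $A^\sa\cong V$ as Jordan algebras, and hence as order unit spaces.

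The central step is to feed each diagonal summand into Proposition~\ref{prop:simpleEJAsquare}. First I would fix a summand $E_i$ of rank $r=\rnk E_i>1$ and dimension $N=\dim E_i$. Taking $W=V$ in Proposition~\ref{prop:tensordirectsum} and choosing both distinguished summands to be $E_i$, the products $p_a\otimes q_b$ of maximal orthogonal atomic families of the two copies of $E_i$ all lie in a single simple summand of $V\otimes V$, which therefore has rank $r^2$. The dimension-counting argument recorded immediately after Proposition~\ref{prop:tensordirectsum}---local tomography forces $\dim(V\otimes V)=(\dim V)^2$, while each summand has dimension at least the product of the corresponding factor dimensions---pins this summand's dimension to exactly $N^2$. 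Thus $V\otimes V$ contains a simple EJA of rank $r^2$ and dimension $N^2$, and Proposition~\ref{prop:simpleEJAsquare} then forces $E_i=B(H_i)^\sa$ for some complex finite-dimensional Hilbert space $H_i$.

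It remains to dispose of the rank-one summands, which fall outside the hypothesis $r>1$ of Proposition~\ref{prop:simpleEJAsquare}. A simple EJA of rank $1$ is just $\R$, and $\R = B(\C)^\sa$ is already a complex matrix algebra on a one-dimensional Hilbert space, so these summands need no separate treatment. Assembling the conclusions, every summand is of the form $B(H_i)^\sa$, so $V\cong\bigoplus_i B(H_i)^\sa = A^\sa$ with $A=\bigoplus_i B(H_i)$.

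I expect the main obstacle to be organisational rather than conceptual: the genuinely hard analytic and classification work is already contained in Proposition~\ref{prop:simpleEJAsquare} and in the dimension count following Proposition~\ref{prop:tensordirectsum}. The one point requiring care is verifying that the summand of $V\otimes V$ located by Proposition~\ref{prop:tensordirectsum} is a bona fide simple summand of the whole EJA $V\otimes V$, so that Proposition~\ref{prop:simpleEJAsquare} is genuinely applicable to it, and confirming that injectivity of the tensor map together with local tomography really does squeeze its dimension down to exactly $N^2$, leaving no room for additional factors that could spoil the identification.
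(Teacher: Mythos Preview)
Your proposal is correct and follows essentially the same route as the paper's own proof: invoke the main structure theorem to get an EJA, apply Proposition~\ref{prop:tensordirectsum} to each diagonal pair $(E_i,E_i)$ to locate a simple summand of $V\otimes V$ of rank $(\rnk E_i)^2$ and dimension $(\dim E_i)^2$, and then use Proposition~\ref{prop:simpleEJAsquare} to force each $E_i$ to be a complex matrix algebra. Your explicit handling of the rank-one summands (which fall outside the hypothesis of Proposition~\ref{prop:simpleEJAsquare}) is a small improvement over the paper, which tacitly ignores this case.
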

\begin{proof}
    As established, $V$ is a Euclidean Jordan algebra. As a result of Proposition~\ref{prop:tensordirectsum} for each summand of $V = E_1\oplus\ldots\oplus E_n$ there must exist a simple EJA of rank $(\rnk E_i)^2$ and dimension $(\dim E_i)^2$. By Proposition~\ref{prop:simpleEJAsquare} this is only possible if $E_i=B(H)^\sa$ where $H$ is a complex Hilbert space. Therefore $V$ is a direct sum of complex matrix algebras which means it is the set of self-adjoint elements of a C$^*$-algebra.
\end{proof}

This result should be compared to a result by Hanche-Olsen that states that when $V\otimes B(\mathbb{C}^2)^{\text{sa}}$ is a JB-algebra (see below for the definition) for $V$ a JB-algebra, satisfying $(a\otimes b)*(c\otimes d) = (a*c)\otimes (b*d)$, then $V$ is the self-adjoint part of a C$^*$-algebra~\cite{hanche1985jb}.

\section{Infinite-dimensional sequential product spaces} \label{sec:infiniterank}
\noindent In order to state an infinite-dimensional generalisation of Theorem~\ref{theor:seqprodisjordan} we must first give an appropriate definition of infinite-dimensional Jordan algebras.

\begin{definition}
    An order unit space $V$ is a \emph{JB-algebra} when it is complete in its norm topology and it has a Jordan product $*$ such that $\norm{a*a} = \norm{a}^2$ and $\norm{a*b} \leq \norm{a}\norm{b}$ for all $a,b \in V$.
\end{definition}

As the proof in the finite-dimensional case relied heavily on using atomic effects, which do not have to exist in infinite dimension, we will need to require a few other properties. In particular, we will assume that the order structure on the space is quite rich:

\begin{definition}
    Let $V$ be an order unit space. A subset $S\subseteq V$ is called \emph{bounded} when there exists $r\in \R$ such that $\norm{a}\leq r$ for all $a\in S$. We say $V$ is \emph{$\sigma$-complete} when every bounded increasing sequence $a_1\leq a_2\leq a_3\leq \ldots$ has a supremum.
\end{definition}

\begin{definition}
    A state $\omega: V\rightarrow \R$ on an order unit space $V$ is called \emph{$\sigma$-normal} when it preserves suprema of bounded increasing sequences: $\omega(\bigvee_i a_i) = \bigvee_i \omega(a_i)$. We say that $V$ has \emph{enough normal states} when the $\sigma$-normal states separate the elements, i.e.\ when $\omega(v) = \omega(w)$ for all $\sigma$-normal states $\omega$ implies that $v = w$.
\end{definition}

Finally, we slightly modify the definition of sequential product in order to be more aligned to this order structure:

\begin{definition}
    A $\sigma$-sequential product space is a (possibly infinite-dimensional) $\sigma$-complete order unit space $V$ with enough normal states that comes equipped with a map \\ ${\&:[0,1]_V\times [0,1]_V \rightarrow [0,1]_V}$ satisfying for all $a,b,c \in [0,1]_V$:
    \begin{enumerate}[label=({T}\theenumi), ref=T\theenumi]
        \item Additivity: $a\mult (b+c) = a\mult b+ a\mult c$.
        \item Normality: The map $b\mapsto a\mult b$ is $\sigma$-normal: $a\mult(\bigvee_i b_i) = \bigvee_i (a\mult b_i)$ for an increasing sequence $\{b_i\}$.
        \item Unitality: $1\mult a = a$.
        \item Compatibility of orthogonal effects: If $a\mult b = 0$ then also $b\mult a =0$.
        \item Associativity of compatible effects: If $a\commu b$ then $a\mult (b\mult c) = (a\mult b)\mult c$.
        \item Additivity of compatible effects: If $a\commu b$ then $a \commu 1-b$, and if also $a\commu c$ then $a\commu (b+c)$.
        \item Multiplicativity of compatible effects: If $a\commu b$ and $a\commu c$ then $a\commu (b\mult c)$.
    \end{enumerate}
    We call this map the \emph{sequential product} of the space.
\end{definition}

\begin{remark}
    The definition of the sequential product operation is exactly the one of Definition~\ref{def:seqprod}, except that we require the operation to be normal in the second argument, instead of being norm-continuous in the first argument. The definition here corresponds exactly to the $\sigma$-sequential product of Ref.~\cite{gudder2002sequential}.
\end{remark}

\noindent For the remainder of the section we will let $V$ denote a $\sigma$-sequential product space and $\&$ its sequential product. We conjecture that each $\sigma$-sequential product space is a JB-algebra, but we do not seem to have the right tools to prove this. In order to show a correspondence of $\sigma$-sequential product spaces to $\sigma$-complete JB-algebras, we will need to assume two additional properties on the sequential product.

The first is inspired by the notion of \emph{comprehension} from effectus theory~\cite{cho2015introduction,cho2015quotient}:
\begin{definition}
    We say the sequential product is \emph{comprehensive} when for all sharp effects $q$ the following implication holds for all $\sigma$-normal states $\omega$: if $\omega(q) = 1$, then $\omega(q\mult p) = \omega(p)$ for any $p$.
\end{definition}

What this property says is that if an effect $q$ already holds with certainty on a state $\omega$, then measuring $q$ does not effect the probabilities of other effects holding in the state $\omega$. If the sequential product of $V$ is comprehensive, then every effect of $V$ has a comprehension, as defined in Ref.~\cite{cho2015introduction}, hence the name.

The second property is a weaker version of the \emph{fundamental equality of quadratic Jordan algebras}~\cite{mccrimmon1966general}.
\begin{definition}
    We say the sequential product is \emph{quadratic} when for any two sharp effects $p$ and $q$ we have $q\mult(p\mult q) = (q\mult p)^2$.
\end{definition}
We unfortunately do not know of a reasonable operational interpretation of this property, but we do note that if we are considering C$^*$-algebras, then this property is evidently true, because $q\mult (p\mult q) = q(pqp)q$ while $(q\mult p)^2 = (qpq)^2$. This property also naturally arises in a \emph{$\dagger$-effectus}~\cite{basthesis}.

We will now work towards showing that $\sigma$-sequential product spaces with a comprehensive quadratic sequential product are JB-algebras. Our proof is based on Theorem 9.84 of Ref.~\cite{alfsen2012geometry}.

\begin{definition}
    We call $a\in V$ \emph{simple} when we can write it as $a=\sum_{i=1}^r \lambda_i p_i$ for some $r\in \N$ where $\lambda_i \in \R$ and the $p_i$ are sharp and orthogonal.
\end{definition}

\begin{proposition}[{cf.~\cite{wetering2018characterisation}}]
    Every element $a\in V$ of a $\sigma$-sequential product space $V$ can be written as the supremum of an increasing sequence of simple elements: $a=\bigvee_i a_i$ where the $a_i$ are simple.
\end{proposition}

\begin{lemma}
    If $a\commu b$, then $a^2 - b^2 = (a+b)\mult (a-b)$.
\end{lemma}
\begin{proof}
    We simply calculate $(a+b)\mult (a-b) = (a+b)\mult a - (a+b) \mult b = a\mult (a+b) - b\mult (a+b) = a\mult a + b\mult a - a\mult b - b\mult b = a\mult a - b\mult b$.
\end{proof}
\begin{lemma}
    Let $q$ and $p$ be sharp effects, then $(q\mult p)^2 - (q\mult p^\perp)^2 = q\mult p - q\mult p^\perp$.
\end{lemma}
\begin{proof}
    Since $q\mult p \commu q$ and $q\mult p^\perp = q - q\mult p$ we see that $q\mult p \commu q\mult p^\perp$ and hence by the previous lemma where $a=q\mult p$ and $b=q\mult p^\perp$ we get $(q\mult p)^2 - (q\mult p^\perp)^2 = (q\mult p + q\mult p^\perp)\mult (q\mult p - q\mult p^\perp) = q\mult (q\mult (p-p^\perp)) = q\mult p - q\mult p^\perp$
\end{proof}

\begin{proposition}
    Suppose the sequential product is comprehensive and quadratic, then for any sharp effect $q$ the following implication holds: if $\omega(q)=0$ then $\omega(p\mult q) = \omega(p^\perp \mult q)$ for any sharp effect $p$.
\end{proposition}
\begin{proof}
    Suppose $\omega(q)=0$, then of course $\omega(q^\perp)=1$. Notice that it suffices to prove the equality $q^\perp\mult(p\mult q) = q^\perp\mult(p^\perp\mult q)$ since then, by using the comprehension property, $\omega(p\mult q) = \omega(q^\perp\mult(p\mult q)) = \omega(q^\perp\mult(p^\perp \mult q)) = \omega(p^\perp \mult q)$. By adding the expression $q^\perp\mult(p\mult q^\perp) + q^\perp\mult(p^\perp\mult q^\perp)$ to both sides of the equality we see that it is equivalent to proving the equality
    $$ q^\perp\mult(p\mult (q+q^\perp)) + q^\perp\mult(p^\perp\mult q^\perp) = q^\perp\mult(p^\perp\mult (q+q^\perp)) + q^\perp\mult(p\mult q^\perp).$$
    We then use the quadratic property to write $q^\perp\mult(p^\perp \mult q^\perp) = (q^\perp \mult p^\perp)^2$ and similarly with $p^\perp$ replaced with $p$ to simplify this expression to
    $$ q^\perp \mult p + (q^\perp\mult p^\perp)^2 = q^\perp\mult p^\perp + (q^\perp \mult p)^2.$$
    The desired equality is therefore proven if we can show that
    $$(q^\perp \mult p)^2 - (q^\perp\mult p^\perp)^2 = q^\perp \mult p - q^\perp \mult p^\perp,$$
    but this follows immediately from the previous lemma.
\end{proof}
\begin{proposition}
    Suppose the sequential product is comprehensive and quadratic, then for any $a\geq 0$ the following implication holds for any $\sigma$-normal $\omega$: if $\omega(a)=0$ then $\omega(p\mult a) = \omega(p^\perp \mult a)$ for any sharp effect $p$.
\end{proposition}
\begin{proof}
    First suppose $a$ is simple: $a=\sum_i \lambda_i q_i$. Since $\omega(a)=0$ we of course also have $\omega(q_i)=0$ and hence by the previous proposition $\omega(p\mult q_i) = \omega(p^\perp \mult q_i)$. Then by linearity $\omega(p\mult a) = \sum_i\lambda_i\omega(p\mult q_i) = \sum_i \lambda_i\omega(p^\perp\mult q_i) = \omega(p^\perp \mult a)$. Any $a\geq 0$ can be written as $a=\vee_i a_i$ where the $a_i$ are simple and $a_i\leq a$. Suppose $\omega(a)=0$, then also $\omega(a_i)=0$, and hence: $\omega(p\mult a) = \omega(p\mult \bigvee_i a_i) = \omega(\bigvee_i p\mult a_i) = \bigvee_i \omega(p\mult a_i) = \bigvee_i \omega(p^\perp \mult a_i) = \omega(p^\perp \mult \bigvee_i a_i) = \omega(p^\perp \mult a)$.
\end{proof}

\begin{theorem}
    A $\sigma$-sequential product space with a comprehensive and quadratic sequential product is a $\sigma$-complete JB-algebra.
\end{theorem}
\begin{proof}
    We follow the proof of~\cite[Theorem 9.48]{alfsen2012geometry}.
    Let $a$ be an effect and let $p$ be sharp. We let $L_p$ denote the multiplication operator of $p$. Let $D_p:= L_p - L_{p^\perp}$. By the previous proposition when $\omega(a)=0$ we have $\omega(D_p a) = \omega(p\mult a) - \omega(p^\perp\mult a) = 0$. Because $V$ is $\sigma$-complete, $V$ is also complete in the norm~\cite{jacobs2017distances} and hence by~\cite[Proposition 1.108]{alfsen2012state} the linear map $D_p$ is then an \emph{order derivation}, i.e.\ the exponential map $\exp(t D_p)$ is an order isomorphism for any $t\in \R$. 
    Using exactly the same proof as in~\cite[Theorem 9.48]{alfsen2012geometry} we can then show that for any sharp $p$ and $q$ we will have $[D_p,D_q]1 = 0$ and hence also $[T_p, T_q]1 = 0$ where $T_p := \frac12(\id + D_p)$. As $T_p1 = p$, this commutator being zero tells us that $T_p q = T_q p$. Hence we can define the Jordan product of sharp effects as $p*q := T_p q = T_q p$. 
    As this is symmetric, and linear in one of the arguments, we can easily extend this by bilinearity to all simple elements. It is then straightforward to show that for simple elements $a*a=a^2$ and as a consequence $\norm{a*b} \leq \norm{a}\norm{b}$ and hence this product is continuous in both arguments. We can then extend this product by continuity to the entirety of the space (as the simple elements lie dense in the space). 
    That it satisfies the Jordan equality is shown in the same way as in Theorem~\ref{theor:seqprodisjordan}. We then have shown that the space is a JB-algebra and it is by assumption $\sigma$-complete.
\end{proof}

\section{Minimality of axioms}\label{sec:axioms}

In this section we will discuss the minimality of the conditions and the axioms needed to show that finite-dimensional sequential product spaces are Euclidean Jordan algebras. For easy reference we copy Definition~\ref{def:seqprod} here:

\begin{definition*}
    A map ${\&:[0,1]_V\times [0,1]_V \rightarrow [0,1]_V}$ is called a sequential product when it satisfies the following properties for all $a,b,c \in [0,1]_V$:
    \begin{enumerate}[label=({S}\theenumi), ref=S\theenumi]
        \item Additivity: $a\mult (b+c) = a\mult b+ a\mult c$.
        \item Continuity: The map $a\mapsto a\mult b$ is continuous in the norm.
        \item Unitality: $1\mult a = a$.
        \item Compatibility of orthogonal effects: If $a\mult b = 0$ then also $b\mult a =0$.
        \item Associativity of compatible effects: If $a\commu b$ then $a\mult (b\mult c) = (a\mult b)\mult c$.
        \item Additivity of compatible effects: If $a\commu b$ then $a \commu 1-b$, and if also $a\commu c$ then $a\commu (b+c)$.
        \item Multiplicativity of compatible effects: If $a\commu b$ and $a\commu c$ then $a\commu (b\mult c)$.
    \end{enumerate}
\end{definition*}

\noindent First of all, for the proofs of the main theorems, axiom \ref{ax:compmult} is actually not needed since the following weaker version is sufficient:
\begin{proposition}
    Suppose $a\commu b,c$ and that $b\commu c$, then $a\commu (b\mult c)$.
\end{proposition}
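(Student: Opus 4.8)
The plan is to prove the compatibility $a\commu(b\mult c)$ directly from its definition, by establishing the equality $a\mult(b\mult c) = (b\mult c)\mult a$ through a short chain of rewrites. The only structural ingredient I expect to need is the associativity of compatible effects \ref{ax:assoc}, applied three times, together with the symmetry of the relation $\commu$ and the defining equalities $a\mult b = b\mult a$, $a\mult c = c\mult a$ and $b\mult c = c\mult b$ that follow from the three hypotheses $a\commu b$, $a\commu c$ and $b\commu c$. Crucially, I would avoid invoking \ref{ax:compmult} itself, since the entire point is to exhibit this restricted multiplicativity as a consequence of the remaining axioms.

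Concretely, the rewriting I would carry out proceeds as follows. Starting from $a\mult(b\mult c)$, I would first use $a\commu b$ to apply \ref{ax:assoc} and obtain $(a\mult b)\mult c$. Rewriting $a\mult b = b\mult a$ gives $(b\mult a)\mult c$, to which I can apply \ref{ax:assoc} again---now legitimate because $b\commu a$---reaching $b\mult(a\mult c)$. Rewriting $a\mult c = c\mult a$ turns this into $b\mult(c\mult a)$, and a final application of \ref{ax:assoc}, justified by $b\commu c$, yields $(b\mult c)\mult a$. Comparing the two ends of the chain gives $a\mult(b\mult c) = (b\mult c)\mult a$, which is exactly the compatibility $a\commu(b\mult c)$.

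The step requiring the most care---and which I regard as the only genuine obstacle---is verifying that each of the three applications of \ref{ax:assoc} is licensed: that axiom demands that the outermost factor be compatible with the middle factor, so the expression must at each stage be arranged so that the element being regrouped commutes with its neighbour. Reading $\commu$ symmetrically, the hypothesis $a\commu b$ licenses the first two regroupings and $b\commu c$ the third, while $a\commu c$ enters only through the intermediate rewrite $a\mult c = c\mult a$. Notably, none of \ref{ax:cont}, \ref{ax:add} or \ref{ax:orth} is invoked, which makes transparent why \ref{ax:compmult} in its full strength is superfluous for the main theorems: every appeal to multiplicativity there occurs in a context where the two factors being multiplied are themselves already compatible.
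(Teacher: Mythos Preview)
Your proof is correct and follows exactly the same route as the paper: the identical chain $a\mult(b\mult c) = (a\mult b)\mult c = (b\mult a)\mult c = b\mult(a\mult c) = b\mult(c\mult a) = (b\mult c)\mult a$, with each regrouping justified by \ref{ax:assoc} and the hypotheses $a\commu b$, $a\commu c$, $b\commu c$. Your additional commentary on which axioms are \emph{not} used is accurate and matches the paper's surrounding discussion.
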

\begin{proof}
    Using axiom \ref{ax:assoc} repeatedly:
    $a\mult (b\mult c) = (a\mult b)\mult c = (b\mult a)\mult c = b\mult (a\mult c) = b\mult (c\mult a) = (b\mult c)\mult a$.
\end{proof}
The reason we included axiom \ref{ax:compmult} is because it is part of the definition of a sequential effect algebra, and because when defining the classical algebra of an element when working in infinite dimension, it is needed to show that the algebra is closed under multiplication.

Of the other axioms, the ones that seem less essential are \ref{ax:orth} and \ref{ax:cont}, so it would be interesting to see what can be done without them.

To define and study the classical algebra of an effect, \ref{ax:orth} is not needed and \ref{ax:cont} is only needed to show that $(\lambda 1)\mult a = \lambda a$. The spectral theorem and the homogeneity of the space can thus be proven without using these axioms if \ref{ax:unit} is changed to $(\lambda 1)\mult a = \lambda a$. When restricting to rank 2 spaces, axioms \ref{ax:add}, \ref{ax:unit}, \ref{ax:assoc} and \ref{ax:compadd} are then sufficient to prove that the space is a Euclidean Jordan algebra (or specifically, a spin-factor). Since the spectral theorem is also what is needed to show that $L_a$ for $a$ invertible is an order-isomorphism, it should be clear that on an EJA this restricted set of axioms already greatly reduces the possible sequential-product-like maps. 

Note also that using the T-algebra formalism of Vinberg \cite{vinberg1967theory} it is possible to find an associative binary operation (see the beginning of section 4 of \cite{chua2003relating} for this operation) for the positive elements in any finite-dimensional homogeneous space that satisfies axioms \ref{ax:add}, \ref{ax:cont}, \ref{ax:unit}, \ref{ax:assoc} (and by associativity also \ref{ax:compmult}) but this product does not satisfy axioms \ref{ax:orth} and \ref{ax:compadd}. This actually leads to an interesting observation: if either the proof of homogeneity in section \ref{sec:prelim} can be shown to hold without use of axiom \ref{ax:compadd} or if a binary product on homogeneous spaces can be found that also satisfies \ref{ax:compadd}, then this would give us a new characterisation of homogeneous spaces. In particular, in the second case, it would show that \ref{ax:orth} is the key to establishing self-duality.

When one considers more general ordered vector spaces than order unit spaces, one can find non-trivial totally ordered vector spaces that allow a commutative bilinear product, and hence a sequential product \cite{basmaster}. These spaces are pathological in the sense that they have `infinitesimal' effects, i.e.\ effects than cannot be distinguished using states.

\section*{Acknowledgements}
\noindent The author would like to thank Bas and Bram Westerbaan for all the useful and insightful conversations regarding effect algebras and order unit spaces and Alex Kolmus and Ema Alsina for suggestions to improve the manuscript. This work is supported by the ERC under the European Union’s Seventh Framework Programme (FP7/2007-2013) / ERC grant n$^\text{o}$ 320571.

\bibliography{bibliography}

\end{document}